\DeclareMathAlphabet{\mathcal}{OMS}{cmsy}{m}{n}
\newcommand\proba[1]{\mathtt{\{#1\}}}
\newcommand\ntype{\!\not\,:}
\newcommand\may[1][\alpha]{[{#1}.]}
\newcommand\LambdaS{Lambda-\ensuremath{\mathcal S}\xspace}
\newcommand\xrecap[4]{\noindent {\bf #1 \ref{#3} (#2).} \emph{#4}}
\newcommand\recap[3]{\noindent {\bf #1 \ref{#2}.} \emph{#3}}
\newcommand\citaCR{\cite[Thm.~7.25]{ArrighiDowekLMCS17}}
\newcommand\citaSR{\cite[Thm.~5.12]{DiazcaroDowekRinaldiBIO19}}
\newcommand\citaSN{\cite[Thm.~6.10]{DiazcaroDowekRinaldiBIO19}}
\newcommand\etaD{\hat\eta}
\newcommand\muD{\hat\mu}
\newcommand\paren[1]{(#1)}
\newcommand\home[2]{[#1,#2]}
\newcommand\lra{\longrightarrow}
\newcommand\sem[1]{\left\llbracket {#1}\right\rrbracket}
\newcommand\com[1]{\mathfrak{C}_{#1}}
\newcommand\B{\ensuremath{{\mathbb B}}}
\newcommand\types{\ensuremath{\mathcal T}}
\newcommand\qtypes{\ensuremath{\mathcal Q}}
\newcommand\btypes{\ensuremath{\mathcal B}}
\newcommand\vars{\ensuremath{\mathsf{Vars}}}
\newcommand\values{\ensuremath{\mathsf V}}
\newcommand\tbasis{\ensuremath{\mathsf B}}
\newcommand\D{\ensuremath{\mathsf D}}
\newcommand\s[1]{\ensuremath{\mathsf{#1}}}
\newcommand\head{\text{\sl head}}
\newcommand\tail{\text{\sl tail}}
\newcommand\pair[2]{({#1}+{#2})}
\newcommand\z[1][A]{\vec 0_{S{#1}}}
\newcommand\ite[3]{{#1}?{#2}\mathord{\cdot}{#3}}
\newcommand\C{\mathbb C}
\newcommand\One{1}
\newcommand\I{I}
\newcommand\Id{\mathsf{Id}}
\newcommand\xlra[1]{\xrightarrow{#1}}
\newcommand\tax{\ensuremath{\mathsf{Ax}}}
\newcommand\tif{\ensuremath{\mathsf{If}}}
\newcommand\rbetab{(\s{\beta_b})}
\newcommand\rbetan{(\s{\beta_n})}
\newcommand\riftrue{(\s{if_{1}})}
\newcommand\riffalse{(\s{if_{0}})}
\newcommand\rlinr{(\s{lin^+_r})}
\newcommand\rlinscalr{(\s{lin^\alpha_r})}
\newcommand\rlinzr{(\s{lin^0_r})}
\newcommand\rlinl{(\s{lin^+_\ell})}
\newcommand\rlinscall{(\s{lin^\alpha_\ell})}
\newcommand\rlinzl{(\s{lin^0_\ell})}
\newcommand\rneut{(\s{neutral})}
\newcommand\runit{(\s{unit})}
\newcommand\rzeros{(\s{zero_\alpha})}
\newcommand\rzero{(\s{zero})}
\newcommand\rprod{(\s{prod})}
\newcommand\rdists{(\s{\alpha dist})}
\newcommand\rdistcasum{(\s{dist^+_\Uparrow})}
\newcommand\rdistcascal{(\s{dist^\alpha_\Uparrow})}
\newcommand\rdistcazeror{(\s{dist^0_{\Uparrow_r}})}
\newcommand\rdistcazerol{(\s{dist^0_{\Uparrow_\ell}})}
\newcommand\rcaneutl{(\s{neut^\Uparrow_\ell})}
\newcommand\rcaneutr{(\s{neut^\Uparrow_r})}
\newcommand\rcaneutzl{(\s{neut^\Uparrow_{0\ell}})}
\newcommand\rcaneutzr{(\s{neut^\Uparrow_{0r}})}
\newcommand\rfact{(\s{fact})}
\newcommand\rfacto{(\s{fact^1})}
\newcommand\rfactt{(\s{fact^2})}
\newcommand\rproj{(\s{proj})}
\newcommand\rprojz{(\s{proj}\ensuremath{_{\vec 0}})}
\newcommand\rhead{(\s{head})}
\newcommand\rtail{(\s{tail})}
\newcommand\rdistzr{(\s{dist^0_r})}
\newcommand\rdistzl{(\s{dist^0_\ell})}
\newcommand\rdistscalr{(\s{dist^\alpha_r})}
\newcommand\rdistscall{(\s{dist^\alpha_\ell})}
\newcommand\rdistsumr{(\s{dist^+_r})}
\newcommand\rdistsuml{(\s{dist^+_\ell})}
\newcommand\rcomm{(\s{comm})}
\newcommand\rassoc{(\s{assoc})}
\newcommand\elimeq{\approx_e}
\newcommand\opeq{\approx_{op}}
\newtheorem{thm}{Theorem}[section]
\newtheorem{lem}[thm]{Lemma}
\newtheorem{prop}[thm]{Proposition}
\theoremstyle{definition}
\newtheorem{defin}[thm]{Definition}
\newtheorem{example}[thm]{Example}
\newtheorem{rmk}[thm]{Remark}
\newtheorem{rmks}[thm]{Remarks}
\begin{document}

\title{A concrete model for a typed linear algebraic lambda calculus\footnote{This paper is the long journal version of
  \citep{DiazcaroMalherbeLSFA18}. In the present paper, the main new result is
to revisit some rewrite rules in order to prove a theorem of adequacy.}}
\author{
  Alejandro D\'iaz-Caro\\[1ex]
  {\small CONICET-Universidad de Buenos Aires}\\
  {\small Instituto de Ciencias de la Computaci\'on (ICC)}\\
  {\small Buenos Aires, Argentina}\\[1ex]
  {\small Departamento de Ciencia y Tecnolog\'{\i}a}\\
  {\small Universidad Nacional de Quilmes}\\
  {\small Bernal, Buenos Aires, Argentina}
  \and
  Octavio Malherbe\\[1ex]
  {\small IMERL, Facultad de Ingenier\'{\i}a}\\
  {\small Universidad de la Rep\'ublica}\\
  {\small Montevideo, Uruguay}
}
\date{}
\maketitle

\begin{abstract}
  We give an adequate, concrete, categorical-based model for \LambdaS, which is
  a typed version of a linear-algebraic lambda calculus, extended with
  measurements. \LambdaS is an extension to first-order lambda calculus unifying
  two approaches of non-cloning in quantum lambda-calculi: to forbid duplication
  of variables, and to consider all lambda-terms as algebraic linear functions.
  The type system of \LambdaS have a superposition constructor $S$ such that a
  type $A$ is considered as the base of a vector space while $SA$ is its span. Our
  model considers $S$ as the composition of two functors in an adjunction relation
  between the category of sets and the category of vector spaces over $\C$. The
  right adjoint is a forgetful functor $U$, which is hidden in the language, and
  plays a central role in the computational reasoning.
\end{abstract}

%\paragraph*{keywords}
%    Quantum computing, algebraic lambda-calculus, categorical semantics

\section{Introduction}
The non-cloning property of quantum computing has been treated in different ways
in quantum programming languages. One way is to forbid duplication of variables
with linear types~\citep{GirardTCS87,AbramskyTCS93}, and hence, a program taking
a quantum argument will not duplicate it,
e.g.~\citep{AltenkirchGrattageLICS05,GreenLeFanulumsdaineRossSelingerValironPLDI13,PaganiSelingerValironPOPL14,ZorziMSCS16,SelingerValironMSCS06}.
Another way is to consider all lambda-terms as expressing linear functions, {in
what is known as linear-algebraic lambda-calculi},
e.g.~\citep{ArrighiDowekLMCS17,ArrighiDiazcaroValironIC17,DiazcaroPetitWoLLIC12,ArrighiDiazcaroLMCS12}.
The first approach forbids a term $\lambda x.(x\otimes x)$ (for some convenient
definition of $\otimes$), while the second approach distributes $(\lambda
x.(x\otimes x))(\ket 0+\ket 1)$ to $\lambda x.(x\otimes x)\ket 0+\lambda
x.(x\otimes x)\ket 1$, mimicking the way linear operations act on vectors
in a vector space. However, adding a measurement operator to a calculus
following the linear-algebraic approach needs to also add linear types: indeed,
if $\pi$ represents a measurement operator, $(\lambda x.\pi x)\pair{\ket 0}{\ket
  1}$ should not reduce to ${(\lambda x.\pi x)\ket 0}+{(\lambda x.\pi x)\ket 1}$
but to $\pi({\ket 0}+{\ket 1})$. Therefore, there must be functions taking
superpositions and functions distributing over them. However, the functions
taking a superposition have to be marked in some way, so to ensure that they
will not use their arguments more than once (i.e.~to ensure linearity in the
linear-logic sense).

{Lineal, the first linear-algebraic lambda-calculus, is an untyped calculus
introduced by \citet{ArrighiDowekLMCS17} to study the superposition of programs,
with quantum computing as a goal. However, Lineal is not a quantum calculus in
the sense that there is no construction allowing one to characterize which terms
can be directly compiled into a quantum machine.
Vectorial~\citep*{ArrighiDiazcaroValironIC17} has been the conclusion of a long
path to obtain a typed
Lineal~\citep{ArrighiDiazcaroLMCS12,DiazcaroPetitWoLLIC12,ArrighiDiazcaroValironIC17}.
In Vectorial, the type system gives information on whether the final term can be
considered or not as a quantum state (of norm $1$). Nevertheless, it fails to
establish whether typed programs can be considered quantum, in the sense of
implementing unitary transformations and measurements---in any case,
measurements are left out of the equation in these versions of typed Lineal.

The calculus \LambdaS\ is a start over, with a new type system not related to
Vectorial. {It is a} first-order fragment of Lineal, extended with
measurements. It has been introduced by~\citet*{DiazcaroDowekTPNC17} and
slightly modified later by~\citet*{DiazcaroDowekRinaldiBIO19}. Following this
line, }\citet{DiazcaroGuillermoMiquelValironLICS19} presented a calculus defined
through realizability techniques, {which validates this long line of
research on Lineal as a quantum calculus, by proving the terms which are
typable with certain types coincide with implementations of unitary operators.
In \citep{DiazcaroMalherbeACS20} we gave a categorical model of \LambdaS without measurements.
The object of the current paper is to set up the bases for a categorical model of
\LambdaS\ in full (with measurements), by defining a concrete model with a
categorical presentation, paving the way to an abstract construction in future
research.}

In linear logic, a type $A$ without decoration represents a type of a term
that cannot be duplicated, while ${!}A$ types duplicable terms. In \LambdaS
instead, $A$ are the types of the terms that cannot be superposed, while $SA$ are
the terms that can be superposed, and since superposition forbids duplication,
$A$ means that we can duplicate, while $SA$ means that we cannot duplicate. So
the $S$ is not the same as the bang ``$!$'', but somehow the opposite{, in
the sense that we mark the fragile terms (those that cannot be duplicated)}.
This can be explained by the fact that linear logic is focused on the
possibility of duplication, while here we focus on the possibility of
superposition, which implies the impossibility of duplication.

\citet*{DiazcaroDowekTPNC17} gave a first denotational semantics for \LambdaS\
(in environment style) where the atomic type $\B$ is interpreted as $\{\ket
0,\ket 1\}$ while $S\B $ is interpreted as $\mathsf{Span}(\{\ket 0,\ket
1\})=\C^2$, and, in general, a type $A$ is interpreted as a basis while $SA$ is
the vector space generated by such a basis. In this paper we go beyond and give
a categorical interpretation of \LambdaS where $S$ is a functor of an adjunction
between the category $\mathbf{Set}$ and the category $\mathbf{Vec}$. Explicitly,
when we evaluate $S$, we obtain formal finite linear combinations of elements of
a set with complex numbers as coefficients. The other functor of the adjunction,
$U$, allows us to forget the vectorial structure.

The main structural feature of our model is that it is expressive enough to
describe the bridge between the quantum and the classical universes explicitly
by controlling its interaction. This is achieved by providing a monoidal
adjunction. In the literature, intuitionistic linear (as in linear-logic) models
are obtained by a comonad determined by a monoidal adjunction $(S,m)\dashv
(U,n)$\footnote{{Where $m$ and $n$ are the mediating arrows given by the
monoidality of the adjunction.}}, i.e.~the bang $!$ is interpreted by the
comonad $SU$ (see~\citep{BentonCSL94}). In a different way, a crucial ingredient
of our model is to consider the monad $US$ for the interpretation of $S$
determined by a similar monoidal adjunction. This implies that, on the one hand,
we have a tight control of the Cartesian structure of the model
(i.e.~duplication, etc) and, on the other hand, the world of superpositions
lives inside the classical world, i.e.~determined externally by classical rules
until we decide to explore it. This is given by the following composition of
maps:
\[
  US\B \times US\B \xlra n U(S\B \otimes S\B )\xlra{Um} US(\B\times \B)
\]
that allows us to operate in a monoidal structure representing the quantum world
and then to return to the Cartesian product.

This is different from linear logic, where the classical world lives inside the
quantum world i.e.~$({!}\B)\otimes ({!}\B)$ is a product inside a monoidal
category.

Another source of inspiration for our model has been the work of
\citet*{SelingerQPL05} and \citet*{AbramskyCoeckeLICS04} where 
they formalized the concept of scalars
and inner product in a more abstract categorical setting,
i.e.~a category in which there is an abstract notion of a dagger functor. It is
envisaged that this approach will provide the basis for an abstract model in
future work.

The paper is structured as follows. In Section~\ref{sec:calculus} we recall the
definition of \LambdaS and give some examples, stating its main properties.
Section~\ref{sec:DenSem} is divided into three subsections: first we define the
categorical constructions needed to interpret the calculus, then we give the
interpretation, and finally we prove such an interpretation to be {adequate}. We
conclude in Section~\ref{sec:conclusion}. An appendix with the full proofs
follows the article.

\section{The calculus \texorpdfstring{\LambdaS}{Lambda-S}}\label{sec:calculus}

We give a slightly modified presentation of
\LambdaS~\citep{DiazcaroDowekRinaldiBIO19}. In particular, instead of giving a
probabilistic rewrite system, where $t\to_{{\mathtt{p_k}}}r_k$ means that $t$
reduces with probability ${\mathtt{p_k}}$ to $r_k$, we introduce the notation
$t\lra \proba{p_1}r_1\parallel\dots\parallel \proba{p_n}r_n$, {where
$\proba{p_1}r_1\parallel\dots\parallel \proba{p_n}r_n$} denotes a finite distribution. This way, the rewrite system is
deterministic and the probabilities are internalized.

The syntax of terms and types is given in Figure~\ref{fig:syntax}. We write
$\B^n$ for $\B\times\cdots\times\B$ $n$-times, with the convention that
$\B^1=\B$, and may write $\bigparallel_{i=1}^n \proba{p_i} t_i$, for $\proba{p_1}
t_1\parallel\cdots\parallel\proba{p_n}t_n$.
We use capital Latin letters
($A,B,C,\dots$) for general types and the capital Greek letters $\Psi$, $\Phi$, $\Xi$, and $\Upsilon$
for qubit types.  
$\qtypes$ is the set of qubit types, and $\types$ is the set of all the types
($\qtypes\subsetneq\types$). We write
$\btypes=\{\B^n\mid n\in\mathbb N\}\cup\{\Psi\Rightarrow A\mid \Psi\in\qtypes,
A\in\types\}$, that is, the set of non-superposed types.  In addition, $\vars$ is the
set of variables, $\tbasis$ is the set of basis terms, $\values$ the set of
values, $\Lambda$ the set of terms, and $\D$ the set of distributions
on terms. We have
$\vars\subsetneq\tbasis\subsetneq\values\subsetneq\Lambda\subsetneq\D$, where
the last inclusion is considering the constant function that associates
probability $1$ to any term.
As customary, we may write $x$ instead of $x^\Psi$ when the type is clear from the context. Notice that this language is in Church-style.

\begin{figure}[!h]
  \centering
  \[
    \begin{array}{rlr}
      \Psi & := \B^n\mid S\Psi\mid \Psi\times\Psi & \textrm{Qubit types (\qtypes)} \\
      A & := \Psi\mid \Psi\Rightarrow A\mid SA & \textrm{Types (\types)} \\                                                    
      \\
      b  & := x^\Psi\mid \lambda x{:}\Psi.t\mid \ket 0\mid \ket 1\mid \ite{}tt\mid b\times b & \textrm{Basis terms (\tbasis)} \\
      v  & := b\mid \pair vv\mid \z\mid \alpha.v\mid v\times v & \textrm{Values (\values)} \\
      t  & := v\mid tt\mid \pair tt\mid \pi_j t\mid \alpha.t\mid
           t\times t\mid \head~t\mid \tail~t\mid \Uparrow_r t\mid \Uparrow_\ell
           t & \textrm{Terms ($\Lambda$)}\\
      p & := \proba{p_1}t_1\parallel\cdots\parallel\proba{p_n}t_n & \textrm{Distributions ($\D$)}
    \end{array}
  \]
  where $\alpha\in\mathbb C$ and $\mathtt{p_i}\in[0,1]\subseteq\mathbb R$.
  \caption{Syntax of types and terms of \LambdaS.}
  \label{fig:syntax}
\end{figure}

The terms are considered modulo associativity and commutativity of the
syntactic symbol $+$.  On the other hand, the symbol $\parallel$ is used to
represent a true distribution over terms, not as a syntactic
symbol, and so it is not only associative and commutative, we also have that
$\proba pt\parallel\proba qt$ is the same as $\proba{p+q}t$, $\proba
pt\parallel\proba 0r =\proba pt$, and $\bigparallel_{i=1}^1 \proba 1t=\proba 1t$\footnote{As
  a remark, notice that $\parallel$ can be seen as the $+$ symbol of the
  algebraic lambda calculus~\citep{VauxMSCS09}, where the equality is confluent
  since scalars are positive, while the $+$ symbol in \LambdaS coincides with the $+$ from
Lineal~\citep{ArrighiDowekLMCS17}
(see~\citep{AssafDiazcaroPerdrixTassonValironLMCS14} for a more detailed
discussion on different presentations of algebraic lambda calculi).}. 

There is one atomic type $\B$, for basis qubits $\ket 0$ and $\ket 1$, and
three constructors: $\times$, for pairs, $\Rightarrow$, for first-order
functions, and $S$ for superpositions.

The syntax of terms contains:
\begin{itemize}
  \item The three terms for first-order lambda-calculus, namely,
    variables, abstractions and applications.
  \item Two basic terms $\ket 0$ and $\ket 1$ to represent qubits, and one test
    $\ite{}rs$ on them. We usually write $\ite trs$ for $(\ite{}rs)t$, see
    Example~\ref{ex:ite} for a clarification of why to choose this
    presentation.
  \item A product $\times$ to represent associative pairs (i.e.~lists), with
    its destructors $\head$ and $\tail$. We usually use the notations
    $\ket{b_1b_2\dots b_n}$ for
    $\ket{b_1}\times\ket{b_2}\times\dots\times\ket{b_n}$, $\ket b^n$ for
    $\ket{bb\cdots b}$ and $\prod_{i=1}^n t_i$ for $t_1\times\cdots\times t_n$.
  \item Constructors to write linear combinations of terms, namely $+$ (sum)
    and $.$ (scalar multiplication), and its destructor $\pi_j$ measuring the
    first $j$ qubits written as linear combinations of lists of qubits. Also, one
    null vector $\z$ for each type $SA$. We may write $-t$ for $-1{.}t$.
    The symbol $+$ is taken to be associative and commutative (that is, our
    terms are expressed modulo AC~\citep{ArrighiDowekLMCS17}), therefore, we may
    use the summation symbol $\sum$, with the convention that $\sum_{i=1}^1t=t$.
  \item Two casting functions $\Uparrow_r$ and $\Uparrow_\ell$ allowing
    to transform lists of superpositions into superpositions of lists (see
    Example~\ref{ex:cast}).
\end{itemize}

The rewrite system depends on types. Indeed, $\lambda x{:}S\Psi.t$ follows a
call-by-name strategy, while $\lambda x{:}\B.t$, which can duplicate its
argument, follows a call-by-base
strategy~\citep{AssafDiazcaroPerdrixTassonValironLMCS14}, that is, not only the
argument must be reduced first, but also it will distribute over linear
combinations. Therefore, we give first the type system, and then the rewrite
system.

The typing relation is given in Figure~\ref{fig:types}. {Recall that \LambdaS is
a first-order calculus, so only qubit types are allowed to the left of arrows,
and in the contexts.} We write $S^mA$ for $SS\cdots SA$, with $m\geq 1$.
Contexts, identified by the capital Greek letters $\Gamma$, $\Delta$, and
$\Theta$, are partial functions from $\vars$ to $\qtypes$. The contexts
assigning only types of the form $\B^n$ are identified with the super-index
$\B$, e.g.~$\Theta^\B$. Whenever more than one context appear in a typing rule,
their domains are considered pair-wise disjoint. Observe that all types are
linear (as in linear-logic) except on basis types $\B^n$, which can be weakened
and contracted (expressed by the common contexts $\Theta^\B$). {The particular
form of rule $S_E$, allows us to type a measurement even if its argument has
been typed with an arbitrary $k$ number of $S$. We choose this presentation to
avoid subtyping, which was present in the original presentation of
\LambdaS~\citep{DiazcaroDowekTPNC17}.}

\begin{figure}[h!]
  \centering
    \[
      \infer[^\tax] {\Theta^\B,x:\Psi\vdash x:\Psi} {}
      \qquad
      \infer[^{\tax_{\vec 0}}] {\Theta^\B\vdash \z:SA} {}
      \qquad
      \infer[^{\tax_{\ket 0}}] {\Theta^\B\vdash\ket 0:\B} {}
      \quad
      \infer[^{\tax_{\ket 1}}] {\Theta^\B\vdash\ket 1:\B} {}
    \]
    \[
      \infer[^{\alpha_I}] {\Gamma\vdash \alpha.t:S^mA} {\Gamma\vdash t:S^mA}
      \quad
      \infer[^{+_I}] {\Gamma,\Delta,\Theta^\B\vdash\pair tu:S^mA} {\Gamma,\Theta^\B\vdash t:S^mA & \Delta,\Theta^\B\vdash u:S^mA}
      \quad
      \infer[^{S_I}] {\Gamma\vdash t:SA} {\Gamma\vdash t:A}
      \quad
      \infer[^{S_E}] {\Gamma\vdash\pi_j t:\B^j\times S\B^{n-j}} {\Gamma\vdash
        t:S^k\B^n & {\scriptstyle k>0}}
    \]
    \[
      \infer[^\tif]{\Gamma\vdash\ite{}tr:\B\Rightarrow A}{\Gamma\vdash t:A &
      \Gamma\vdash r:A}
      \qquad
      \infer[^{\Rightarrow_I}] {\Gamma\vdash\lambda x{:}\Psi.t:\Psi\Rightarrow A} {\Gamma,x:\Psi\vdash t:A}
    \]
    \[
      \infer[^{\Rightarrow_E}] {\Delta,\Gamma,\Theta^\B\vdash tu:A}
      {
	\Delta,\Theta^\B\vdash u:\Psi
	&
	\Gamma,\Theta^\B\vdash t:\Psi\Rightarrow A
      }
      \qquad
      \infer[^{\Rightarrow_{ES}}] {\Delta,\Gamma,\Theta^\B\vdash tu:SA}
      {
	\Delta,\Theta^\B\vdash u:S\Psi
	&
	\Gamma,\Theta^\B\vdash t:S(\Psi\Rightarrow A)
      }
    \]
    \[
      \infer[^{\times_I}] {\Gamma,\Delta,\Theta^\B\vdash t\times u:\Psi\times\Phi} {\Gamma,\Theta^\B\vdash t:\Psi & \Delta,\Theta^\B\vdash u:\Phi}
      \qquad
      \infer[^{\times_{Er}}] {\Gamma\vdash \head~t:\B} {\Gamma\vdash t:\B^n &
      {\scriptstyle n>1}}
      \qquad
      \infer[^{\times_{E\ell}}] {\Gamma\vdash \tail~t:\B^{n-1}} {\Gamma\vdash t:\B^n &
      {\scriptstyle n>1}}
    \]
    \[
      \infer[^{\Uparrow_r}]{\Gamma\vdash \Uparrow_rt:S(\Psi\times \Phi)}{\Gamma\vdash t:S(S\Psi\times \Phi)}
      \qquad
      \infer[^{\Uparrow_\ell}]{\Gamma\vdash \Uparrow_\ell t:S(\Psi\times \Phi)}{\Gamma\vdash t:S(\Psi\times S\Phi)}
      \qquad
      \infer[^\parallel]{\Gamma\vdash\proba{p_1}t_1 \parallel\dots\parallel\proba{p_n}t_n :A}{\Gamma\vdash t_i:A & \sum_i\mathtt{p_i}=1}
    \]
    \caption{Typing relation}
    \label{fig:types}
\end{figure}

The rewrite relation is given in Figures~\ref{fig:TRSbeta} to
\ref{fig:TRScontext}. We write $t:A$ when there exists $\Gamma$ such that
$\Gamma\vdash t:A$, and $t\ntype A$ if not.

The two beta rules (Figure~\ref{fig:TRSbeta}) are applied according to the type
of the argument. If the abstraction expects an argument with a superposed type,
then the reduction follows a call-by-name strategy (rule $\rbetan$), while if
the abstraction expects a basis type, the reduction is call-by-base (rule
$\rbetab$): it $\beta$-reduces only when its argument is a basis term. However,
typing rules also allow typing an abstraction expecting an argument with basis
type, applied to a term with superposed type (see Example~\ref{ex:CM}). In this
case, the beta reduction cannot occur and, instead, the application must
distribute first, using the rules from Figure~\ref{fig:TRSld}: the linear distribution
rules.
\begin{figure}
  \begin{equation*}
    \begin{aligned}
      \textrm{If $b:\B^n$ and $b\in\tbasis$, } (\lambda x{:}{\B^n}.t)b &\lra (b/x)t & \rbetab\\
      \textrm{If $u:\Psi$, with $\Psi\neq\B^n$, } (\lambda x{:}{\Psi}.t)u &\lra (u/x)t & \rbetan
    \end{aligned}
  \end{equation*} 
  \caption{Beta rules}
  \label{fig:TRSbeta}
\end{figure}
\begin{example}\label{ex:CM}
  The term $\lambda x{:}\B.x\times x$ does not represent a cloning machine, but
  a CNOT with an ancillary qubit~$\ket 0$. Indeed,
  
  $\begin{aligned}[t]
    (\lambda x{:}\B.x\times x)\tfrac 1{\sqrt 2}{.}({\ket 0}+{\ket 1})
    &\xlra{\rlinscalr}
    \tfrac 1{\sqrt 2}.(\lambda x{:}\B.x\times x)({\ket 0}+{\ket 1})\\
    &\xlra{\rlinr}
    \tfrac 1{\sqrt 2}.({(\lambda x{:}\B.x\times x)\ket 0}+{(\lambda x{:}\B.x\times x)\ket 1})\\
    &\xlra{\rbetab^2} 
    \tfrac 1{\sqrt 2}.({\ket{00}}+{\ket{11}})
  \end{aligned}$

  The type derivation is the following:
  \[
    \infer[^{\Rightarrow_{ES}}]{\vdash(\lambda x{:}\B.x\times x)\frac 1{\sqrt 2}.(\ket 0+\ket 1):S\B^2}
    {
      \infer[^{S_I}]{\vdash\lambda x{:}\B.x\times x:S(\B\Rightarrow\B^2)}
      {
	\infer[^{\Rightarrow_I}]{\vdash\lambda x{:}\B.x\times x:\B\Rightarrow\B^2}
	{
	    \infer[^{\times_I}]{x:\B\vdash x\times x:\B^2}
	    {
	      \infer[^\tax]{x:\B\vdash x:\B}{}
	      &
	      \infer[^\tax]{x:\B\vdash x:\B}{}
	    }
	}
      }
      &
	\infer[^{\alpha_I}]{\vdash\frac 1{\sqrt 2}.({\ket 0}+{\ket 1}):S\B }
	{
	  \infer[^{+_I}]{\vdash{\ket 0}+{\ket 1}:S\B }
	  {
      \infer[^{S_I}]{\vdash\ket 0:S\B }{\infer[^{\tax_{\ket 0}}]{\vdash\ket 0:\B}{}}
	    &
	    \infer[^{S_I}]{\vdash\ket 1:S\B }{\infer[^{\tax_{\ket 1}}]{\vdash\ket 1:\B}{}}
	  }
	}
    }
  \]
\end{example}

{The rules from Figure~\ref{fig:TRSld} also say how superposed first-order functions reduce,
  which can be useful for example to describe an operator as the superposition
  of simpler operators, cf.~\citep{ArrighiDowekLMCS17} for more interesting examples.}

\begin{figure}
  \begin{equation*}
    \begin{aligned}
      \textrm{If }t:\B^n\Rightarrow A,\ t\pair uv &\lra \pair{tu}{tv} & \rlinr\\
      \textrm{If }t:\B^n\Rightarrow A,\ t(\alpha.u) &\lra\alpha.tu & \rlinscalr\\
      \textrm{If }
        t:\B^n\Rightarrow A,\ 
        t\z[\B^n] &\lra\z &\rlinzr\\
      \pair tuv &\lra\pair{tv}{uv} & \rlinl\\
      (\alpha.t)u &\lra\alpha.tu &\rlinscall\\
      \z[(\B^n\Rightarrow A)]t &\lra\z &\rlinzl
    \end{aligned}
  \end{equation*}
  \caption{Linear distribution rules}
  \label{fig:TRSld}
\end{figure}

Figure~\ref{fig:TRSif} gives the two rules for the conditional construction.
Together with the linear distribution rules (Figure~\ref{fig:TRSld}), these
rules implement the quantum-if~\citep*{AltenkirchGrattageLICS05}, as shown in the following example.
\begin{example}
  \label{ex:ite}
  The term $\ite{}rs$ is meant to test whether the condition is $\ket 1$ or
  $\ket 0$.  However, defining it as a function allows us to use the linear
  distribution rules from Figure~\ref{fig:TRSld}, implementing the quantum-if:
  \begin{align*}
    (\ite{}rs)(\alpha.\ket 1+\beta.\ket 0)
    &\xlra{\rlinr}
      (\ite{}rs)(\alpha.\ket 1)+(\ite{}rs)(\beta.\ket 0)
    \xlra{\rlinscalr^2}
      \alpha{.}(\ite{}rs)\ket 1+\beta{.}(\ite{}rs)\ket 0
      \\
    &=
      \alpha{.}(\ite{\ket 1}rs)+\beta{.}(\ite{\ket 0}rs)
    \xlra{\riftrue}
      \alpha{.}r+\beta{.}(\ite{\ket 0}rs)
    \xlra{\riffalse}
      \alpha{.}r+\beta{.}s
  \end{align*}

  This construction allow us to encode any quantum gate.
\end{example}

\begin{figure}
  \begin{equation*}
  \begin{aligned}
      \ite{\ket 1}tr &\lra t &\riftrue
  \end{aligned}\hspace{3cm}
  \begin{aligned}
      \ite{\ket 0}tr &\lra r &\riffalse
  \end{aligned}
  \end{equation*}
  \caption{Rules of the conditional construction}
  \label{fig:TRSif}
\end{figure}

Figure~\ref{fig:TRSlists} gives the rules for lists, \rhead\ and \rtail, which
can only act in basis qubits, otherwise, we would be able, for example, to
extract a qubit from an entangled pair of qubits.
\begin{figure}
  \begin{equation*}
  \begin{aligned}
      \textrm{If $h\neq u\times v$ and $h\in\tbasis$, }\head\ h\times t &\lra h & \rhead\\
      \textrm{If $h\neq u\times v$ and $h\in\tbasis$, }\tail\ h\times t &\lra t & \rtail
  \end{aligned}
  \end{equation*}
  \caption{Rules for lists}
  \label{fig:TRSlists}
\end{figure}

Figure~\ref{fig:TRSvs} deals with the vector space structure implementing a
directed version of the vector space axioms. The direction is chosen in order to
yield a canonical form~\citep{ArrighiDowekLMCS17}. The rules are
self-explanatory.
  There is a subtlety, however, on the rule \rzeros. A simpler rule, for example
  ``If $t:A$ then $0.t\lra \z$'', would lead to break confluence with the following critical pair:
$0.\z\xlra{\rzeros}\z[SA]$ and $0.\z\xlra{\rzero}\z$.
To solve the critical pair, \citet*{DiazcaroDowekRinaldiBIO19} added a new
definition ``$\min A$'', which leaves the type $A$ with a minimum amount of
$S$ in head position (one, if there is at least one, or zero, in other case).
This solution makes sense in such a presentation of \LambdaS, since the
interpretation of the type $SSA$ coincides with the interpretation of $SA$ (both
are the vector space generated by the span over $A$). However, in our
categorical interpretation these two types are not interpreted in the same way,
and so, our rule $\rzeros$ sends $0.\z$ to $\z$ directly. Similarly, all the
rules ending in $\z$ have been modified from its original presentation in the
same way, namely: $\rzeros$, $\rzero$, $\rlinzr$, and $\rlinzl$.

\begin{figure}
  \begin{equation*}
  \begin{aligned}
      \pair\z t &\lra t &\rneut\\
      1.t &\lra t &\runit\\
      \textrm{If }\left\{ \begin{matrix}
        t:A, \textrm{ with }A\in\btypes, \textrm{ or }\\
        t:SA, \textrm { and }t\ntype A
        \end{matrix}\right\},\  0.t &\lra\z &\rzeros\\
      \alpha.\z &\lra\z &\rzero\\
      \alpha.(\beta.t) &\lra (\alpha\beta).t &\rprod\\
      \alpha.\pair tu &\lra\pair{\alpha.t}{\alpha.u} &\rdists\\
      \pair{\alpha.t}{\beta.t} &\lra(\alpha+\beta).t &\rfact\\
      \pair{\alpha.t}t &\lra (\alpha+1).t &\rfacto\\
      \pair tt &\lra 2.t &\rfactt
  \end{aligned}
  \end{equation*}
  \caption{Rules implementing the vector space axioms}
  \label{fig:TRSvs}
\end{figure}
\begin{example}
  \begin{align*}
    2{.}\left(\frac 12{.}\ket 0+\ket 1\right)-2{.}\ket 1
    &\xlra{\rdists}
      2.\left(\frac 12{.}\ket 0\right)+2{.}\ket 1-2{.}\ket 1
    \\
    &\xlra{\rprod}
      1.\ket 0+2{.}\ket 1-2{.}\ket 1
    \\
    &\xlra{\runit}
      \ket 0+2.\ket 1-2.\ket 1
      \\
    &\xlra{\rfact}
      \ket 0+0{.}\ket 1
    \\
    &\xlra{\rzeros}
      \ket 0+\z[\B]
    \\
    &\xlra{\rneut}
      \ket 0
  \end{align*}
  Remind that the symbol $+$ is associative and commutative.
\end{example}

Figure~\ref{fig:TRScasts} are the rules to implement the castings\footnote{The
subtlety about $\z$ explained for Figure~\ref{fig:TRSvs} has led us to add some
extra rules to \LambdaS, with respect to its original presentation, in
Figure~\ref{fig:TRScasts}. Those are: $\rdistcazeror$, $\rdistcazerol$,
$\rcaneutzr$, and $\rcaneutzl$.}. The idea is that $\times$ does not distribute
with respect to $+$, unless a casting allows such a distribution. This way, the
types $S\B\times \B $ and $S(\B\times\B)$ are different. Indeed, $(\ket 0+\ket
1)\times\ket 0$ have the first type but not the second, while
$\ket{00}+\ket{10}$ have the second type but not the first. The first type gives
us the information that the state is separable, while the second type does not.
We can choose to take the first state as a pair of qubits forgetting the
separability information, by casting its type, in the same way as in certain
programming languages an integer can be cast to a float (and so, forgetting
the information that it was indeed an integer and not just any float).
\begin{example}
  \label{ex:cast}
  The term $(\frac 1{\sqrt 2}{.}(\ket 0+\ket 1))\times\ket 0$ is the encoding of
  the qubit $\frac 1{\sqrt 2}(\ket 0+\ket 1)\otimes\ket 0$. However, while the
  qubit $\frac 1{\sqrt 2}(\ket 0+\ket 1)\otimes\ket 0$ is equal to $\frac
  1{\sqrt 2}(\ket{00}+\ket{10})$, the term will not
  rewrite to the encoding of it, unless it is preceded by a casting $\Uparrow_r$:
  \begin{align*}
    \Uparrow_r\left(\frac 1{\sqrt 2}{.}(\ket 0+\ket 1)\right)\times\ket 0
    &\xlra{\rdistscalr}
      \frac 1{\sqrt 2}{.}\left(\Uparrow_r(\ket 0+\ket 1)\times\ket 0\right)
    \\
    &\xlra{\rdistsumr}
      \frac 1{\sqrt 2}{.}(\Uparrow_r\ket{00}+\Uparrow_r\ket{10})
      \\
    &\xlra{\rcaneutr^2}
      \frac 1{\sqrt 2}{.}(\ket{00}+\ket{10})
  \end{align*}
  Notice that $\left(\frac 1{\sqrt 2}{.}(\ket 0+\ket 1)\right)\times\ket 0$ has
type $S\B \times\B$, highlighting the fact that the second qubit is a basis
qubit, i.e.~duplicable, while $\frac 1{\sqrt 2}{.}(\ket{00}+\ket{10})$ has type
$S(\B\times\B)$, showing that the full term is a superposition where no
information can be extracted, and hence, non-duplicable.

\end{example}
\begin{figure}
  \begin{equation*}
  \begin{aligned}
      \Uparrow_r \pair rs\times u &\lra\pair{\Uparrow_r r\times u}{\Uparrow_r s\times u} &\rdistsumr\\
      \Uparrow_\ell u\times\pair rs &\lra\pair{\Uparrow_\ell u\times r}{\Uparrow_\ell u\times s} &\rdistsuml\\
      \Uparrow_r (\alpha.r)\times u &\lra \alpha.\Uparrow_r r\times u &\rdistscalr\\
      \Uparrow_\ell u\times(\alpha.r) &\lra \alpha.\Uparrow_r u\times r &\rdistscall\\
      \textrm{If $u$ has type $\Psi$, }\Uparrow_r \z[\Phi]\times u &\lra\z[(\Phi\times\Psi)] &\rdistzr\\
      \textrm{If $u$ has type $\Psi$, }\Uparrow_\ell u\times\z[\Phi] &\lra\z[(\Psi\times\Phi)] &\rdistzl\\
      \Uparrow\pair tu&\lra\pair{\Uparrow t}{\Uparrow u} &\rdistcasum\\
      \Uparrow(\alpha.t)&\lra\alpha.\Uparrow t &\rdistcascal\\
      \Uparrow_r\z[(S(S\Psi)\times\Phi)] &\lra\Uparrow_r\z[(S\Psi\times\Phi)] &\rdistcazeror\\
      \Uparrow_\ell\z[(\Psi\times S(S\Phi))] &\lra\Uparrow_\ell\z[(\Psi\times S\Phi)] &\rdistcazerol\\
      \Uparrow_r\z[(S\B^n \times\Phi)] &\lra\z[(\B^n\times\Phi)] &\rcaneutzr\\
      \Uparrow_\ell\z[(\Psi\times S\B^n)] &\lra\z[(\Psi\times\B^n)] &\rcaneutzl\\
      \textrm{If $u\in\tbasis$, }\Uparrow_r u\times v&\lra u\times v &\rcaneutr\\
      \textrm{If $v\in\tbasis$, }\Uparrow_\ell u\times v&\lra u\times v &\rcaneutl
  \end{aligned}
  \end{equation*}
  \caption{Rules for castings $\Uparrow_r$ and $\Uparrow_\ell$}
  \label{fig:TRScasts}
\end{figure}

Figure~\ref{fig:TRSproj} gives the rule $\rproj$ for the projective measurement
with respect to the basis $\{\ket 0,\ket 1\}$. It acts only on superpositions of
terms in normal-form, however, these terms do not necessarily represent a norm-1
vector, so the measurement must perform a division by the norm of the vector
prior to measure. In case the norm of the term is $0$, then an error is raised.
In the original version of \LambdaS, such an error is left as a term that does
not reduce. In this paper, however, we added a new rule $\rprojz$ for the
projective measurement over the null vector, in order to simplify the model
(otherwise, we would have been forced to add Moggi's exception monad
\citep{MoggiFCS88} to the model). Since the model we present in this paper is
already complex, we prefer to add a rule sending the error to a fixed value and
focus on the novel constructions.

In rule \rproj, $j\leq n$, and we use the following notations: 
\begin{align*}
  & \may t\textrm{ may be either $t$ or $\alpha.t$ (if it is not present, $\alpha=1$)} \\
  &\ket k=\ket{b_1\dots b_j}\textrm{ where }b_1\dots b_j\textrm{ is the binary representation of }k\\
  & \ket{\phi_k} = \sum_{i\in T_k}\left(\frac{\alpha_i}{\sqrt{\sum_{r\in T_k}|\alpha_r|^2}}\right)\prod_{h=j+1}^n \ket{b_{hi}}\\
  &\mathtt{p_k}=\sum_{i\in T_k}\left(\frac{|\alpha_i|^2}{\sum_{r=1}^m|\alpha_r|^2}\right)\\
  & T_k=\{i\leq m\mid\ket{b_{1i}\dots b_{ji}}=\ket k\}
\end{align*}
This way, $\ket k\times\ket{\phi_k}$ is the normalized $k$-th projection of
the term.
\begin{figure}
  \begin{equation*}
  \begin{aligned}
      \pi_j(\sum\limits_{i=1}^m\may[\alpha_i]\prod\limits_{h=1}^n\ket{b_{hi}})
      &\lra \bigparallel\limits_{k=0}^{2^j-1}\proba{p_k} (\ket k\times\ket{\phi_k}) & \rproj \\
      \pi_j\z[\B^n] &\lra \ket 0^{\times n} & \rprojz
  \end{aligned}
  \end{equation*}
  \caption{Rules for the projection}
  \label{fig:TRSproj}
\end{figure}
\begin{example}\label{ex:pi}
  Lets measure the first two qubits of a three qubits superposition. So, in rule
  $\rproj$ take $j=2$ and $n=3$.
  Say, the term to measure is $\ket{000}+2.\ket{110}+3.\ket{001}+\ket{111}$.
  So, we have $m=4$, and
  \[
    \begin{array}{c||c|c} 
      i & \alpha_i & \ket{b_{1i}b_{2i}b_{3i}} \\
      \hline
      1 & 1 & \ket{000}\\
      2 & 2 & \ket{110}\\
      3 & 3 & \ket{001}\\
      4 & 1 & \ket{111}
    \end{array}
  \qquad\qquad\qquad
  \begin{array}{c||c|c|c|c}
    k &\ket k & T_k & p_k & \ket{\phi_k}\\\hline
    0 &\ket{00} & \{1,3\} &\frac{1}{15}+\frac{9}{15}=\frac 23 & \frac 1{\sqrt{10}}.\ket 0+\frac{2.7}{\sqrt{10}}.\ket 1\\
    1 &\ket{01} & \emptyset & 0 &\mbox{--}\\
    2 &\ket{10} &\emptyset & 0 &\mbox{--}\\
    3 &\ket{11} & \{2,4\} & \frac 4{15}+\frac 1{15}=\frac 13 & \frac 2{\sqrt 5}.\ket 0+\frac 1{\sqrt 5}.\ket 1
  \end{array}
  \]
  All in all, the reduction is as follows:
  \begin{align*}
    &\pi_2(\ket{000}+2.\ket{110}+3.\ket{001}+\ket{111})\\
    &\xlra{\rproj}
    \proba{\tfrac 23}\left( \ket{00}\times
    (\tfrac 1{\sqrt{10}}.\ket 0+\tfrac 3{\sqrt{10}}.\ket 1)\right)
    \parallel \proba{\tfrac 13}\left(\ket{11}\times(\tfrac 2{\sqrt 5}.\ket 0+\tfrac 1{\sqrt 5}.\ket 1)\right)
  \end{align*}
  Notice that, since $\vdash\ket{000}+2.\ket{110}+3.\ket{001}+\ket{111}:S\B^3$,
  we have
  \[
  \vdash\pi_2(\ket{000}+2.\ket{110}+3.\ket{001}+\ket{111}):\B^2\times S\B
  \]
\end{example}

Finally, Figure~\ref{fig:TRScontext} gives the contextual rules implementing the
call-by-value and call-by-name weak strategies (weak in the sense that there is
no reduction under lambda).
\begin{figure}[t]
  \centering
  \[
    \begin{array}{r@{\ }lr}
      \multicolumn{3}{l}{\textrm{ If $t\lra u$, then}}\\
      \multicolumn{3}{c}{ 
	\begin{array}{c@{\qquad}c@{\qquad}c}
	  tv \lra uv & (\lambda x^{\B^n}.v)t\lra(\lambda x^{\B^n}.v)u & \pair tv\lra\pair uv \\
	  \alpha.t\lra\alpha.u& \pi_j t\lra\pi_j u& t\times v\lra u\times v \\
	  v\times t\lra v\times u& \Uparrow_r t\lra\Uparrow_r u& \Uparrow_\ell t\lra\Uparrow_\ell u \\
	  \head\ t\lra\head\ u& \tail\ t\lra\tail\ u& \ite trs\lra\ite urs \\
	  \multicolumn{3}{c}{
    \left(\proba{p_1}t_1\parallel\cdots\parallel\proba{p_k}t\parallel\cdots\parallel\proba{p_n}t_n\right)\quad\lra\quad\left(\proba{p_1}t_1\parallel\cdots\parallel\proba{p_k}u\parallel\cdots\parallel\proba{p_n}t_n\right)}
	\end{array}
      }
    \end{array}
  \]
  \caption{Contextual rules (notice that, in particular, there is no reduction
    under lambda).}
  \label{fig:TRScontext}
\end{figure}

\begin{example}
  A Hadamard gate can be implemented by $H=\lambda x:\B.\ite x{\ket -}{\ket +}$,
  where $\ket +=\frac 1{\sqrt 2}.\ket 0+\frac 1{\sqrt 2}.\ket 1$ and
  $\ket -=\frac 1{\sqrt 2}.\ket 0-\frac 1{\sqrt 2}.\ket 1$.

  Therefore, $H:\B\Rightarrow S\B $ and we have $H\ket 0\lra^*\ket +$ and $H\ket 1\lra^*\ket
-$.
\end{example}

Correctness has been established in previous works for slightly different
versions of \LambdaS, except for the case of confluence, which have only been
proved for Lineal \citep{ArrighiDowekLMCS17}. Lineal can be seen as an untyped fragment of \LambdaS without several
constructions (in particular, without $\pi_j$). The proof of confluence for
\LambdaS is delayed to future work, using the development of probabilistic
confluence by~\citet*{DiazcaroMartinezLSFA17}. The proof of Subject Reduction
and Strong Normalization are straightforward modifications from the proofs of
the different presentations of \LambdaS.

\begin{thm}[Confluence of Lineal, \citaCR]\label{thm:confluence}
  Lineal, an untyped fragment of~\LambdaS, is confluent.
  \qed
\end{thm}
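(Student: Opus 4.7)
The plan is to follow the Tait--Martin-L\"of parallel reduction technique, adapted to the algebraic setting. Since Lineal is untyped and omits strong normalization in general, Newman's lemma is unavailable, so we cannot merely check local confluence; instead we need a reduction relation enjoying the diamond property whose reflexive-transitive closure coincides with that of the one-step rewrite relation.

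First I would define a parallel reduction $\Rightarrow$ that contracts any set of redexes simultaneously, taking care with two delicate features of Lineal: the symbol $+$ is taken modulo associativity and commutativity, and scalars are drawn from a ring, so terms admit many syntactically distinct but semantically identical presentations. Concretely, I would introduce a notion of $+$-context and define $\Rightarrow$ so that in a sum $\sum_i \alpha_i.t_i$ each summand may be reduced independently and the linear/factoring rules (e.g.~$\rfact$, $\rdists$, $\rprod$) may be performed in a single big step, rather than as a sequence of rewrites. The beta rule is incorporated by allowing simultaneous substitution after possibly distributing the function over a sum argument, so that both the direct $\beta$-redex and its distributed form collapse to a common term under $\Rightarrow$.

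Next, I would establish the diamond property: if $t\Rightarrow u_1$ and $t\Rightarrow u_2$, there exists $w$ with $u_1\Rightarrow w$ and $u_2\Rightarrow w$. The proof proceeds by induction on the structure of $t$, with the main case analyses being (i) overlap between $\beta$-reduction and the linear distribution rules, which is handled because parallel reduction allows both the redex in the argument and the distributed application to be contracted at once to the same normal form of the redex pattern, and (ii) the algebraic rules acting on common subterms, which closes because the underlying commutative-monoid structure of $+$ and the ring structure of scalars are equationally consistent and the directed rules orient these equations towards a canonical sum of scaled base terms. Finally, confluence of $\to^*$ follows from the standard fact that $\to\,\subseteq\,\Rightarrow\,\subseteq\,\to^*$, so the diamond property of $\Rightarrow$ transfers to the Church--Rosser property of $\to$.

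The main obstacle is the interaction between AC-matching of $+$, the factoring rules, and $\beta$-reduction: a single application $(\lambda x.t)(\alpha.u_1+\beta.u_2)$ can reduce either by $\rbetan$/$\rbetab$ after substitution or by distributing through the sum and scalar rules, and the residual redexes after either path live in genuinely different syntactic shapes once scalars are further combined by $\rfact$ or $\rprod$. Managing these residuals without appealing to normalization -- i.e.~establishing that the canonical sum-of-scaled-basis shape is reachable from both sides by parallel steps -- is the technically demanding part, and is precisely the content of the Arrighi--Dowek argument invoked in \citaCR.
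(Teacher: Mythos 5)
First, note that the paper does not actually prove this statement: Theorem~\ref{thm:confluence} is imported verbatim from Arrighi and Dowek (the proof consists of the citation alone), and the surrounding text explicitly defers confluence of \LambdaS itself to future work. So the only meaningful comparison is with the cited proof, and that proof is organized quite differently from your sketch. Arrighi and Dowek do not build one global parallel reduction enjoying the diamond property. They split the system into the algebraic fragment (the vector-space rules, taken modulo AC), which terminates on its own and is therefore handled by a finite critical-pair analysis plus Newman's lemma, and the $\beta$-fragment, handled by a parallel-reduction argument; confluence of the union is then obtained by a commutation (Hindley--Rosen style) argument. Your plan to absorb the factoring and distribution rules into a single big-step relation is not what is done there, and the overlaps among those rules are exactly the ones that are tractable only because that fragment is terminating in isolation.

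Second, and more seriously, your sketch omits the one ingredient without which the theorem is false. In an untyped algebraic calculus the factorization rules ($\rfact$, $\rfacto$, $\rfactt$) and the zero rules cannot be applied to arbitrary terms: taking a fixed point $Y_b$ with $Y_b \lra b + Y_b$, the term $Y_b - Y_b$ reduces by factorization to $0.Y_b$ and then to $\z$, but also, by reducing inside the sum, to $b + (Y_b - Y_b)$ and hence to $b$; these have no common reduct. Lineal restores confluence precisely by placing side conditions on the algebraic rules (factorization restricted to closed normal arguments, and so on), and the bulk of the cited proof is devoted to showing that the restricted rules still cooperate with $\beta$. Your claim that ``the directed rules orient these equations towards a canonical sum of scaled base terms'' silently assumes that such a canonical form is reachable, i.e., assumes normalization --- which you yourself correctly observe is unavailable in the untyped setting. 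As written, the diamond property you propose to establish does not hold for the unrestricted system, and the restrictions that make it hold are never mentioned in your plan.
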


\begin{thm}[Subject reduction on closed terms, \citaSR]\label{thm:SR}
  For any closed terms $t$ and $u$ and type $A$, if $t\lra \bigparallel_i
  \proba{p_i}u_i$ and $\vdash t:A$, then $\vdash\bigparallel_i\proba{p_i}u_i:A$.
  \qed
\end{thm}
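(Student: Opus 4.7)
The plan is to proceed by induction on the derivation of $t \lra \bigparallel_i \proba{p_i} u_i$, with a case analysis on the rewrite rule applied at the root. The context rules of Figure~\ref{fig:TRScontext} follow from the inductive hypothesis together with the observation that, for each hole position in a context, whatever type the subterm must have is the type its reduct also has; the real work is in the base rules. Before starting the induction I would prove two auxiliary results. First, a \emph{generation lemma}, extracting from a typing derivation of a compound term the typing derivations of its subterms. Because $S_I$, $\alpha_I$ and $+_I$ are not syntax-directed, this lemma must be stated modulo an arbitrary prefix of these ``bureaucratic'' rule applications. Second, a \emph{substitution lemma}: if $\Gamma, x{:}\Psi \vdash t : A$ and $\Delta \vdash u : \Psi$ (with $\Gamma$ and $\Delta$ sharing only a $\Theta^\B$), then $\Gamma, \Delta \vdash (u/x)t : A$. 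This lemma handles both $\rbetab$ (where the replaced argument is a basis term of type $\B^n$, hence contractible and weakenable through the $\Theta^\B$ mechanism) and $\rbetan$ (where the argument, of type $S\Psi$, occurs linearly in $t$).

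Most base rules are then handled by inverting the typing derivation of the left-hand side and reassembling one for the right-hand side with the same conclusion. The linear distribution rules of Figure~\ref{fig:TRSld} combine generation on $\Rightarrow_{ES}$ with inversion through $+_I$ or $\alpha_I$ on the argument, so that e.g.\ $\vdash t\pair{u}{v} : SA$ re-derives through two parallel $\Rightarrow_{ES}$ applications joined by $+_I$. The vector-space rules of Figure~\ref{fig:TRSvs} follow because both sides share the same head type $S^m A$ and the relevant $\alpha_I$, $+_I$, and $\tax_{\vec 0}$ axioms are derivable at any such type. The casting rules of Figure~\ref{fig:TRScasts} require careful tracking of how $\Uparrow_r$ and $\Uparrow_\ell$ shift an $S$ constructor through a product; the newly added rules $\rdistcazeror$, $\rdistcazerol$, $\rcaneutzr$, $\rcaneutzl$ are checked individually by reconstructing a $\tax_{\vec 0}$ axiom at the target type.

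The most delicate case is $\rproj$: inverting $S_E$ gives $\vdash t : S^k\B^n$ and the target type is $\B^j \times S\B^{n-j}$. For each $k \in \{0,\dots,2^j-1\}$ the branch $\ket k \times \ket{\phi_k}$ types at the target since $\ket k : \B^j$ by repeated $\times_I$ on $\tax_{\ket 0}, \tax_{\ket 1}$, while $\ket{\phi_k}$, being a linear combination of basis terms each of type $\B^{n-j}$, types at $S\B^{n-j}$ via $S_I$ followed by $\alpha_I$ and $+_I$; the rule $\parallel$ then collects the branches using $\sum_k \mathtt{p_k} = 1$. For $\rprojz$ the reduct $\ket 0^{\times n}$ types at $\B^n$ and hence at $\B^j \times S\B^{n-j}$ through $S_I$ in the degenerate shape where $j=n$.

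The main obstacle is not any single case but the bookkeeping forced by the non-syntax-directed rules $S_I$, $\alpha_I$, $+_I$: every inversion step may have to strip off an arbitrary chain of these, so the generation lemma must be stated carefully enough that each rewrite case becomes a short local check. Once that scaffolding is set, the novel rules introduced in this revision for the $\z$ issue do not create any genuinely new difficulty beyond the need to verify them explicitly against the $\tax_{\vec 0}$ axiom at the appropriate type.
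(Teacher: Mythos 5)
The paper does not actually prove this theorem: it is stated with an immediate \verb|\qed| and attributed to \citaSR, and the surrounding text only remarks that ``the proof of Subject Reduction \ldots{} [is a] straightforward modification from the proofs of the different presentations of \LambdaS.'' So there is no in-paper argument to compare yours against; what you have written is a reconstruction of the external proof, adapted to the rule set of this revision.

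On its own terms your outline is the standard and correct route (induction on the reduction step, a generation/inversion lemma stated modulo the non-syntax-directed rules $S_I$, $\alpha_I$, $+_I$, and a substitution lemma covering both $\rbetab$ and $\rbetan$), and you rightly identify that the only genuinely new obligations in this version are the modified $\z$-rules and $\rprojz$ --- which is precisely the content of the paper's ``straightforward modification'' claim. One local slip: for $\rprojz$ the reduct $\ket 0^{\times n}$ is retyped at $\B^j\times S\B^{n-j}$ by deriving $\vdash\ket 0^{n-j}:\B^{n-j}$, applying $S_I$ to that tail component, and then closing with $\times_I$ against $\vdash\ket 0^j:\B^j$; your phrase about ``$S_I$ in the degenerate shape where $j=n$'' does not describe this (and $j=n$ would leave no tail to superpose). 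This is exactly the derivation the paper itself exhibits in the appendix when proving soundness of $\rprojz$, so you can lift it from there. With that correction the sketch is sound, though of course it remains a sketch of a proof the paper delegates to prior work rather than something verifiable against a proof given here.
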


\begin{thm}[Strong normalization, \citaSN]\label{thm:SN}
  If $\vdash t:A$ then $t$ is strongly normalizing{, that is, there
    is no infinite rewrite sequence starting from $t$}.
  \qed
\end{thm}

\begin{thm}[Progress]\label{thm:progress}
  If $\vdash t:A$ and $t$ does not reduce, then $t$ is 
  a value. 
\end{thm}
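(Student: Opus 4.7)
The plan is to prove the statement by induction on the derivation of $\vdash t:A$ (equivalently, by structural induction on $t$), showing in each case that whenever $t$ does not match one of the value constructors from the grammar $\values$, some rewrite rule can be fired. Since the context is empty, the axiom $\tax$ is impossible. The remaining axioms $\tax_{\vec 0}$, $\tax_{\ket 0}$, $\tax_{\ket 1}$ and the introduction rules $\tif$, $\Rightarrow_I$ produce terms that are already basis terms, hence values. For the congruence-shaped rules $+_I$, $\alpha_I$, $\times_I$: if every immediate subterm is a value, the whole term is a value; otherwise some subterm reduces by the induction hypothesis, and the contextual rules propagate the reduction. The rule $S_I$ leaves $t$ unchanged, so the induction hypothesis applies directly; and $\parallel$ reduces contextually in any non-value component.

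Before tackling the elimination rules I would record two canonical-form lemmas. \textbf{(CF-Basis)} If $v\in\values$ is closed with $\vdash v:\B^n$, then $v$ is a basis term of the form $\ket{b_1}\times\cdots\times\ket{b_n}$. \textbf{(CF-Arrow)} If $v\in\values$ is closed with $\vdash v:\Psi\Rightarrow A$, then $v$ is either $\lambda x{:}\Psi.s$ or $\ite{}rs$. Both follow by structural induction on $v$ plus inversion on the typing rules: the value constructors $\pair{\cdot}{\cdot}$, $\alpha.{\cdot}$ and $\z$ can only be typed by $+_I$, $\alpha_I$, $\tax_{\vec 0}$, whose conclusions all have the form $S^mA$ with $m\geq 1$, while $v_1\times v_2$ can only be typed with a product qubit type.

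With these lemmas, the elimination rules become a case analysis. For an application $tu$ (rules $\Rightarrow_E$, $\Rightarrow_{ES}$), if $t$ or $u$ reduces the contextual rules apply; otherwise both are values and, analysing $t$ with (the $S$-variant of) CF-Arrow, one of $\rbetab$ (applying CF-Basis to $u$), $\rbetan$, $\riftrue$/$\riffalse$, $\rlinl$, $\rlinscall$ or $\rlinzl$ fires. For $\pi_j t$ (rule $S_E$) with $\vdash t:S^k\B^n$: if $t$ reduces we are done; otherwise a third canonical-forms argument, obtained by the same inversion technique applied to values of type $S^k\B^n$, shows that $t$ either matches the left-hand side of $\rproj$ (a sum of scalar multiples of products of $\ket 0,\ket 1$) or is $\z[\B^n]$, in which case $\rprojz$ fires. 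For $\head\ t$ and $\tail\ t$, CF-Basis forces $t$ to be a product of basis qubits, so $\rhead$/$\rtail$ apply. For the castings $\Uparrow_r t$ and $\Uparrow_\ell t$, inversion on their nested-$S$ product types combined with structural analysis of the value $t$ identifies which of the rules from Figure~\ref{fig:TRScasts} applies.

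The main obstacle I expect is the casting case $\Uparrow_r,\Uparrow_\ell$: one must verify that the expanded rule set (in particular, the new rules $\rdistcazeror,\rdistcazerol,\rcaneutzr,\rcaneutzl$ introduced in this paper) covers \emph{every} value shape that the typing judgement admits, including the various nested-$\z$ possibilities and singleton base arguments $\Uparrow_r b\times v$ with $b\in\tbasis$. A secondary subtlety is the $S_E$ case when the value is a zero: one must check, using the tagging of $\z$ by its type, that $\pi_j\z[A]$ is always either handled directly by $\rprojz$ (when $A=\B^n$) or reducible through contextual reduction of $\z[A]$ itself; this is precisely the reason the authors refactored the zero-related rewrite rules.
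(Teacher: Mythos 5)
Your proposal follows essentially the same route as the paper's proof: structural induction on $t$, using the type of each value subterm to pin down its canonical shape and then exhibiting a rewrite rule that fires, with the bulk of the case analysis landing on the castings $\Uparrow_r,\Uparrow_\ell$ exactly as in the paper. The only cosmetic difference is that you factor the shape analysis into explicit canonical-form lemmas where the paper inlines it (and in the application case you should also list the right-hand linear-distribution rules $\rlinr$, $\rlinscalr$, $\rlinzr$ for a basis-typed abstraction applied to a superposed value), but the argument is the same.
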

\begin{proof}
  By induction on $t$.
  \begin{itemize}
  \item If $t$ is a value, then we are done.
  \item Let $t=rs$, then $\vdash r:S(\Psi\Rightarrow C)$. So, by the induction
    hypothesis, $r$ is a value. Therefore, by its type, $r$ is either a lambda
    term, or a superposition of them,
    and so $rs$ reduces, which is absurd.
  \item Let $t=r+s$, then by the induction hypothesis both $r$ and $s$ are
    values, and so $r+s$ is a value.
  \item Let $t=\pi_jr$, then, by the induction hypothesis, $r$ is a value, and
    since $t$ is typed, $\vdash r:S\B^n$. Therefore, the only possible $r$ are
    superpositions of kets, and so, $t$ reduces, which is absurd.
  \item Let $t=\alpha.r$, then by the induction hypothesis $r$ is a value, and
    so $t$ is a value.
  \item Let $t=r\times s$, then by the induction hypothesis both $r$ and $s$ are
    values, and so $t$ is a value.
  \item Let $t=\head\ r$, then, by the induction hypothesis $r$ is a value, and
    since $t$ is typed, $\vdash r:\B^n$. Therefore, the only possible $r$ are
    products of kets, and so $t$ reduces, which is absurd.
  \item Let $t=\tail\ r$. Analogous to previous case.
  \item Let $t=\Uparrow_r r$, then, by the induction hypothesis, $r$ is a value.
    Since $t$ is typed, $\vdash r:S(S\Psi\times\Phi)$.
    Therefore, the cases for $r$ are:
    \begin{itemize}
    \item $r=x$. Absurd, since $r$ is closed.
    \item $r=\lambda x:\Theta.r'$. Absurd since $\vdash r:S(S\Psi\times\Phi)$.
    \item $r=\ket 0$. Absurd since $\vdash r:S(S\Psi\times\Phi)$.
    \item $r=\ket 1$. Absurd since $\vdash r:S(S\Psi\times\Phi)$.
    \item $r=v_1+v_2$, then $t$ reduces by rule $\rdistcasum$, which is absurd.
    \item $r=\z[(S\Psi\times\Phi)]$, then $t$ reduces by rule $\rdistcazeror$ or
      $\rcaneutzr$, which is absurd.
    \item $r=\alpha.v$, then $t$, reduces by rule $\rdistcascal$, which is absurd.
    \item $r=\ite{}{s_1}{s_2}$. Absurd since $\vdash r:S(S\Psi\times\Phi)$.
    \item $r=v_1\times\cdots\times v_n$, with $v_1$ not a pair, then the possible $v_1$ are:
      \begin{itemize}
      \item $v_1\in\tbasis$, then $t$ reduces by rule $\rcaneutr$, which is absurd.
      \item $v_1=v'_1+v'_2$, then $t$ reduces by rule $\rdistsumr$, which is absurd.
      \item $v_1=\z[(S\Psi\times\Phi)]$, then $t$ reduces by rule $\rdistzr$,
        which is absurd.
      \item $v_1=\alpha.v$, then $t$ reduces by rule $\rdistscalr$, which is absurd.
      \end{itemize}
    \end{itemize}
  \item Let $t=\Uparrow_\ell r$. Analogous to previous case.
    \qedhere
  \end{itemize}
\end{proof}

\section{Denotational semantics}\label{sec:DenSem}
Even though the semantic of this article is about particular categories i.e.~the
category of sets and the category of vector spaces, from the start our approach
uses theory and tools from category theory in an abstract way. The idea is that
the concrete situation
exposed in this article will pave the way to a more abstract formulation, and
that is why we develop the constructions as abstract and general as possible. A
more general treatment, using a monoidal adjunction between a Cartesian closed
category and a monoidal category with some extra conditions, remains a topic for
future work. {A first result in such direction has been published
recently~\citep{DiazcaroMalherbeACS20}, however in a simplified version of
\LambdaS without measurements}.

\subsection{Categorical constructions}
The concrete categorical model\footnote{{Although ``concrete categorical'' seems
  paradoxical, since a model can either be concrete, or categorical, we chose
to use this terms to stress the fact that we use a categorical presentation of
this concrete model.}} for \LambdaS will be given using the following constructions:
  \begin{itemize}
  \item A monoidal adjunction
    \begin{center}
          \begin{tikzcd}[column sep=3mm]
            (\mathbf{Set},\times,\One)\ar[rr,bend left,"\paren{S,m}"]
            & \bot & (\mathbf{Vec},\otimes,\I)\ar[ll,bend left,"\paren{U,n}"]
          \end{tikzcd}
    \end{center}
            where
            \begin{itemize}
            \item $\mathbf{Set}$ is the category of sets with $\One$ as a terminal
              object.
            \item $\mathbf{Vec}$ is the category of vector spaces over $\C$, in
              which $\I=\C$.

            \item $S$ is the functor such that for each set $A$, $SA$ is the vector
              space whose vectors are the formal finite linear combinations of the
              elements of $A$ with coefficients in $\C$, and given a function $f:A\to B$
              we define $Sf:SA\to SB$ by evaluating $f$ in $A$.
            \item $U$ is the forgetful functor such that for each vector space $V$,
              $UV$ is the underlying set of vectors in $V$ and for each linear map $f$, $Uf$ is just $f$ as function not taking into account its linear property.

            \item $m$ is a natural isomorphism defined by
              \[
                m_{AB}: SA\otimes SB\to S(A\times B)
                \qquad\qquad
                (\sum_{a\in A}\alpha_a a)\otimes (\sum_{b\in B}\beta_b b)  \mapsto \sum_{(a,b)\in A\times B}\alpha_a\beta_b(a,b)
              \]
            \item $n$ is a natural transformation defined by
              \[
                n_{AB}: UV\times UW \to U(V\otimes W)
                \qquad\qquad
                (v,w) \mapsto v\otimes w
              \]
            \end{itemize}
  \item $\mathbf{Vec}^\dagger$ is a subcategory of $\mathbf{Vec}$, where every
    morphism $f:V\to W$ have associated a
    morphism $f^\dagger:W\to V$, called the \emph{dagger} of $f$, such that
    for all $f:V\to W$ and $g:W\to U$ we have
    \[
      \Id^\dagger_V =\Id_V\qquad\qquad
      (g\circ f)^\dagger =f^\dagger\circ g^\dagger\qquad\qquad
      f^{\dagger\dagger} = f
    \]
    Notice that $\mathbf{Vec}^\dagger$ is a subcategory of $\mathbf{FinVec}$, the
    category of finite vector spaces over $\C$.
  \item $\mathbf{Set}_D$ is a Kleisli category over $\mathbf{Set}$ defined with the following monoidal monad, called the
    distribution monad~\citep{MoggiFCS88,GiryLNM82}, $(D,\etaD,\muD,\hat m_{AB},\hat m_1)$:
    \[
      D:\mathbf{Set}\to\mathbf{Set}
      \qquad\qquad
      DA=\left\{\sum_{i=1}^n p_i\chi_{a_i}\mid \sum_{i=1}^n p_i=1,	a_i\in A, n\in\mathbb N\right\}
    \]
    where $\chi_a$ is the characteristic function of $a$, and $\etaD$, $\muD$, $\hat m_{AB}$, and $\hat m_1$ are defined as follows:
    \[
       \begin{array}{r@{\qquad\qquad}l}
        \etaD:A \to DA
        &
        a\mapsto 1\chi_a
        \\
        \muD:DDA \to DA
        &
        \sum_{i=1}^n p_i\chi_{(\sum_{j=1}^{m_i}q_{ij}\chi_{a_{ij}})} \mapsto\sum_{i=1}^n\sum_{j=1}^{m_i} p_iq_{ij}\chi_{a_{ij}}
        \\
        \hat m_{AB}:DA\times DB\to D(A\times B)
        &
        \left(\sum_{i=1}^np_i\chi_{a_i} , \sum_{j=1}^mq_j\chi_{b_j}\right) \mapsto \sum_{i=1}^n\sum_{j=1}^m p_iq_j(\chi_{a_i},\chi_{b_j})
        \\
        \hat m_1:1 \to D1
        &
        \ast \mapsto 1\chi_\ast
      \end{array}
    \]
  \end{itemize}

\begin{rmks}\label{rmk:categoria}~
  \begin{itemize}
  \item There exists an object $\B$ and maps $i_1$, $i_2$ in $\mathbf{Set}$ such
    that for every $t:\One\longrightarrow A$ and $r:\One\longrightarrow A$,
    there exists a unique map $\home tr$ making following diagram
    commute:
    \begin{center}
      \begin{tikzpicture}
        \node at (0,0) {\begin{tikzcd}
            \One\ar[r,"i_1"]\ar[rd,"t"'] & \B\ar[d,near start,"\home tr"] &\One\ar[l,"i_2"']\ar[ld,"r"]\\
            &A &
          \end{tikzcd}};
	  \node at (7,0) {\parbox{.5\textwidth}{This object $\B$ is the Boolean set, and such a map will allow us to interpret the \emph{if} construction (Definition~\ref{def:if}).}};
      \end{tikzpicture}
    \end{center}
    
  \item For every $A\in|\mathbf{Set}|$, $\mathbf{Vec}(\I,SA)$ is an abelian
    group with the sum defined point-wise. Therefore, there exists a map
    $+:USA\times USA\rightarrow USA$ in $\mathbf{Set}$, given by $(a,b)\mapsto
    a+b$ {using the underlying} sum {from} $SA$.

  \item To have an adjunction means that each function $g:A\to UV$ extends to a
    unique linear transformation $f:SA\to V$, given explicitly by $f(\sum_i
    \alpha_i x_i)=\sum_i\alpha_i g(x_i)$, that is, formal linear combinations in
    $SA$ to actual linear combinations in $V$ (see \citep{MacLane98} for
    details).

  \item $\mathbf{Set}$ is a Cartesian closed category where $\eta^A$ is the unit
    and $\varepsilon^A$ is the counit of $-\times A\dashv[A,-]$, from which we
    can define the curryfication ($\mathsf{curry}$) and un-curryfication
    ($\mathsf{uncurry}$) of any map.

  \item The defined adjunction between $\mathbf{Set}$ and $\mathbf{Vec}$ gives
    rise to a monad $(T,\eta,\mu)$ in the category $\mathbf{Set}$, where $T=US$,
    $\eta:\Id\to T$ is the unit of the adjunction, and using the counit
    $\varepsilon$, we obtain $\mu=U\varepsilon S:TT\to T$, satisfying unity and
    associativity laws (see~\citep{MacLane98}).
  \end{itemize}
\end{rmks}

\subsection{Interpretation}

\begin{defin}
  Types are interpreted in the category $\mathbf{Set}_D$, as follows:
  \[
    \sem{\B} = \B\hspace{1cm}
    \sem{\Psi\Rightarrow A} = \sem{\Psi}\Rightarrow\sem{A}\hspace{1cm}
    \sem{SA} = US\sem{A}\hspace{1cm}
    \sem{\Psi\times \Phi} =\sem \Psi\times\sem \Phi
  \]
\end{defin}

\begin{rmk}
  To avoid cumbersome notation we omit the brackets $\sem{\cdot}$ when there is
  no ambiguity. For example, we write directly $USA$ for $\sem{SA}=US\sem A$
  and $A$ for $\sem{A}$.
\end{rmk}

Before giving the interpretation of typing derivation trees in the model, we
need to define certain maps that will serve to implement some of the
constructions in the language.

To implement the \emph{if} construction we define the following map.
\begin{defin}\label{def:if}
  Given $t,r \in\home \Gamma A$ there exists a map $\B\xlra{f_{t,r}}\home \Gamma
  A$  in $\mathbf{Set}$ defined by $f_{t,r}= \home{\hat{t}}{\hat{r}}$ where
  $\hat{t}:1\rightarrow\home \Gamma A$ and  $\hat{r}:1\rightarrow\home \Gamma A$
  are given by the constant maps $\star\mapsto t$ and $\star\mapsto s$ respectively. Concretely this means that $i_1(\star)\mapsto t$ and
  $i_2(\star)\mapsto r$.
\end{defin}

\begin{example}\label{ex:if}
  Consider $t=i_1$ and $r=i_2$, with
  $t,r\in\home\One\B$, where $\B=\{i_1(\star),i_2(\star)\}$.
  To make the example more clear, let us consider $i_1(\star)=\ket 0$ and
  $i_2(\star)=\ket 1$, hence $\B=\{\ket 0,\ket 1\}$.
  The map $\B\xlra{f_{t,r}}\home\One\B$ in $\mathbf{Set}$
  is defined by $f_{t,r}=\home{\hat i_1}{\hat i_2}$, where
  $\hat i_k:\One\rightarrow\home\One\B$, for $k=1,2$.
  Therefore, we have the following commuting diagram:
  
  \begin{tikzpicture}
    \node at (0,0) {\begin{tikzcd}
        \One\ar[r,"i_1"]\ar[rd,"\hat i_1"'] & \B\ar[d,near start,"f_{t,r}"] &\One\ar[l,"i_2"']\ar[ld,"\hat i_2"]\\
        &\home\One\B&
      \end{tikzcd}};
    \node at (7.9,0){\parbox{.51\textwidth}{
        Hence, we have:\\
        $\begin{aligned}[t]
          f_{t,r}\ket 0&=f_{t,r}(i_1(\star))=(f_{t,r}\circ i_1)\star=\hat i_1(\star)=i_1=t\\
          f_{t,r}\ket 1&=f_{t,r}(i_2(\star))=(f_{t,r}\circ i_2)\star=\hat i_2(\star)=i_2=r
        \end{aligned}$\\                 
        Therefore, $f_{t,r}$ is the map $\ket 0\mapsto t$ and $\ket 1\mapsto r$.}};
  \end{tikzpicture}
\end{example}

In order to implement the projection, we define a map $\pi_j$
(Definition~\ref{def:pi}), which is formed from the several maps that we describe below. 

A projection $\pi_{jk}$ acts in the following way: first it projects the first
$j$ components of its
argument, an $n$-dimensional vector,
to the basis vector $\ket k$ in the vector space of dimension $j$, then it renormalizes it, and
finally it factorizes the first $j$ components.
Then, the projection $\pi_j$ takes the probabilistic distribution between the
$2^j$ projectors $\pi_{jk}$, each of these probabilities, calculated from the
normalized vector to be projected.
\begin{example}\label{ex:explanation-pi}
  Let us analyse the Example~\ref{ex:pi}:
  \begin{align*}
    &\pi_2(\ket{000}+2.\ket{110}+3.\ket{001}+\ket{111})\\
    &\xlra{\rproj}
    \proba{\tfrac 23}\left( \ket{00}\times
    (\tfrac 1{\sqrt{10}}.\ket 0+\tfrac 3{\sqrt{10}}.\ket 1)\right)
    \parallel \proba{\tfrac 13}\left(\ket{11}\times(\tfrac 2{\sqrt 5}.\ket 0+\tfrac 1{\sqrt 5}.\ket 1)\right)
  \end{align*}
  We can divide this in four projectors (since $j=2$, we have $2^2$ projectors),
  which are taken in parallel (with the symbol $\parallel$). The four projectors
  are:
  $\pi_{2,00}$, $\pi_{2,01}$, $\pi_{2,10}$ and $\pi_{2,11}$. In this case, the
  probability for the projectors $\pi_{2,01}$ and $\pi_{2,10}$ are $\mathtt 0$, and
  hence these do not appear in the final term.

  The projector $\pi_{2,00}$ acts as described before: first it projects the
  first $2$ components of $\ket{000}+2.\ket{110}+3.\ket{001}+\ket{111}$ to the basis vector $\ket{00}$, obtaining
  $\ket{000}+3.\ket{001}$. Then it renormalizes it, by dividing it by its norm,
  obtaining $\frac{1}{\sqrt{10}}.\ket{000}+\frac 3{\sqrt{10}}.\ket{001}$.
  Finally, it factorizes the vector, obtaining $\ket{00}\times(\frac
  1{\sqrt{10}}.\ket 0+\frac 3{\sqrt{10}}.\ket 1)$.
  Similarly, the projector $\pi_{2,11}$ gives $\ket{11}\times(\tfrac 2{\sqrt 5}.\ket 0+\tfrac 1{\sqrt 5}.\ket 1)$.
  Finally, the probabilities to assemble the final term are calculated as
  $\mathtt{p_0}=\frac{|1|^2+|3|^2}{|1|^2+|2|^2+|3|^2+|1|^2}=\mathtt{\frac 23}$ and $\mathtt{p_1}=\frac{|2|^2+|1|^2}{|1|^2+|2|^2+|3|^2+|1|^2}=\mathtt{\frac 13}$.
\end{example}

Categorically, we can describe the operator $\pi_{jk}$
(Definition~\ref{def:projk}) by the composition of three arrows:
a projector arrow to the $\ket k$ basis vector (Definition~\ref{def:proj}),
a normalizing arrow $\s{Norm}$ (Definition~\ref{def:norm}),
and a factorizing arrow $\varphi_j$ (Definition~\ref{def:fact}).
Then, the projection $\pi_j$ (Definition~\ref{def:pi}) maps a vector to the
probabilistic distribution between the $2^j$ basis vectors $\ket k$, using a
distribution map (Definition~\ref{def:d}).

In the following definitions, if $\ket\psi$ is a vector of dimension $n$, we
write $\overline{\ket\psi}:\I\to S\B^n $ to the constant map $1\mapsto\ket\psi$.

\begin{defin}
  \label{def:proj}
  The projector arrow to the $\ket k$ basis vector  $\s{Proj}_k$ is defined as follows:
  \[
    \s{P}_k :(S\B)^{\otimes n}\mapsto (S\B)^{\otimes n}
    \qquad\qquad
    \s{P}_k= (\overline{\ket k}\circ\overline{\ket k}^\dagger)\otimes I
  \]
\end{defin}

\begin{defin}\label{def:norm}
  The normalizing arrow $\s{Norm}$ is defined as follows: 
  \[
    \s{Norm} : US\B^{n} \to US\B^{n} \hspace{2cm}
    \ket\psi\mapsto\left\{
      \begin{array}{ll}
        \frac{\ket\psi}{\sqrt{(\overline{\ket\psi}^\dagger\circ\overline{\ket\psi})(\star)}}
        &\textrm{ if }\ket\psi\neq\vec 0\\
        \ket 0 &\textrm{ otherwise}
      \end{array}
    \right.
  \]
\end{defin}
\begin{defin}\label{def:fact} 
  The factorizing arrow $\varphi_j$ is defined as any arrow making the following
  diagram commute:
  \begin{center}
    \begin{tikzcd}
      \B^j\times US\B^{n-j}\ar[r,"\eta\times\Id"]\ar[d,"\Id"] & US\B^j\times
      US\B^{n-j}\ar[r,"n"] & U(S\B^j\otimes S\B^{n-j})\ar[d,"Um"]\\
      \B^j\times US\B^{n-j} &&  US\B^n =US(\B^j\times \B^{n-j})
      \ar[ll,"\varphi_j"]
    \end{tikzcd}
  \end{center}
\end{defin}
\begin{example}
  For example, take $\varphi_j$ as the following map:
  \begin{align*}
    \varphi_j&:US\B^n \to\B^j\times US\B^{j-n}\\
    a&\mapsto
       \left\{
       \begin{array}{l@{\qquad}l}
         \prod\limits_{h=1}^j\ket{b_h}\times\sum\limits_{i=1}^n\alpha_i.\left( \prod\limits_{h=j+1}^n\ket{b_{ih}}\right)
         &\textrm{if }
           a=\sum\limits_{i=1}^n\alpha_i.\left( \prod\limits_{h=1}^j\ket{b_h}\times\prod\limits_{h=j+1}^n\ket{b_{ih}}\right)\\
         \ket 0^n & \textrm{otherwise}
       \end{array}
                    \right.
  \end{align*}
\end{example}
\begin{defin}\label{def:projk}
  For each $k=0,\dots,2^j-1$, the projection to the $\ket k$ basis vector,
  $\pi_{jk}$, is defined as any arrow making the following diagram commute:
  \begin{center}
    \begin{tikzcd}[column sep=2cm]
      US\B^n \ar[d,"\pi_{jk}"] \ar[r,"U\s{P}_k"]
      &US\B^n \ar[d,"\s{Norm}"]\\
      \B^j\times US\B^{n-j}
      & US\B^n \ar[l,"\varphi_j"]
    \end{tikzcd}
  \end{center}
  where we are implicitly using the isomorphism $US\B^n \cong U(S\B)^{\otimes
    n}$, obtained by composing $n-1$ times the mediating arrow $m$ and
  then applying the functor $U$.
\end{defin} 

 The following distribution map is needed to assemble the final distribution of
 projections in Definition~\ref{def:pi}.
\begin{defin}\label{def:d}
  Let $\{p_i\}_{i=1}^n$ be a set with $p_i\in[0,1]$ such that $\sum_{i=1}^np_i=1$.
  Then, we define ${d_{\{p_i\}_i}}$ as the arrow:
  \[
    d_{\{p_i\}_i}:A^n\to DA
    \qquad\qquad
    (a_1,\dots,a_n)\mapsto\sum_{i=1}^np_i\chi_{a_i}
  \]
\end{defin}
\begin{example}
  Consider $d_{\{\frac 12,\frac 13,\frac 16\}}:\B^3\to D\B^3$ defined by
  $d_{\{\frac 12,\frac 13,\frac 16\}}(b_1\times b_2\times b_3)=\frac
  12\chi_{b_1}+\frac 13\chi_{b_2}+\frac 16\chi_{b_3}$.
  Then, for example, $d_{\{\frac 12,\frac 13,\frac 16\}}\ket{101}=\frac 12\chi_{\ket
    1}+\frac 13\chi_{\ket 0}+\frac 16\chi_{\ket 1}$.
\end{example}

\begin{defin}\label{def:pi}
  The projective arrow is as follows,
  where
  $p_k=\overline{\s{Norm}(\ket\psi)}^\dagger\circ
  \s{P}_k\circ\overline{\s{Norm}(\ket\psi)}$.
  \[
  \pi_j : US\B^n \to D(\B^j\times US\B^{n-j})
  \qquad\qquad
  \ket\psi \mapsto\sum_{k=0}^{2^j-1} p_k\chi_{\pi_{jk}\ket\psi}
  \]
\end{defin}

\begin{example}
  Consider the set $\B^2$ and the vector space $S\B^2$. We can describe the
  projection $\pi_1$ as the map
  $\pi_1:US\B^2\to D(\B\times US\B )$ such that $\ket\psi \mapsto p_0\chi_{\pi_{10}\ket\psi}+p_1\chi_{\pi_{11}\ket\psi}$,
  where, if
  $\ket\psi=\alpha_1.\ket{00}+\alpha_2.\ket{01}+\alpha_3.\ket{10}+\alpha_4.\ket{11}$,
  then
  $p_0=\tfrac{|\alpha_1|^2+|\alpha_2|^2}{\sqrt{\sum_{i=1}^4|\alpha_i|^2}}$ and
  $p_1=\tfrac{|\alpha_3|^2+|\alpha_4|^2}{\sqrt{\sum_{i=1}^4|\alpha_i|^2}}$.

  The normalizing arrow is the arrow  $\s{Norm}:US\B^2\to US\B^2$ such that:
  \begin{align*}
    \alpha_1.\ket{00}&+\alpha_2.\ket{01}+\alpha_3.\ket{10}+\alpha_4.\ket{11}
    \\
    &\mapsto
    \tfrac{\alpha_1}{\sqrt{\sum_{i=1}^4|\alpha_i|^2}}.\ket{00}
    +
    \tfrac{\alpha_2}{\sqrt{\sum_{i=1}^4|\alpha_i|^2}}.\ket{01}
    +
    \tfrac{\alpha_3}{\sqrt{\sum_{i=1}^4|\alpha_i|^2}}.\ket{10}
    +
    \tfrac{\alpha_4}{\sqrt{\sum_{i=1}^4|\alpha_i|^2}}.\ket{11}
  \end{align*}

  The factorisation arrow is the arrow $\varphi_1:US\B^2\to\B\times US\B $
  such that:
  \[
    \alpha_1.\ket{00}+\alpha_2.\ket{01}+\alpha_3.\ket{10}+\alpha_4.\ket{11}\mapsto
    \left\{
      \begin{array}{ll}
        \ket 0\times(\alpha_1.\ket 0+\alpha_2.\ket 1)&\textrm{ if }\alpha_3=\alpha_4=0\\
        \ket 1\times(\alpha_3.\ket 0+\alpha_4.\ket 1)&\textrm{ if }\alpha_1=\alpha_2=0\\
        \ket{00} &\textrm{ otherwise}
      \end{array}
    \right.
  \]

  Finally, $\pi_{10}$ and $\pi_{11}$ are defined as
  maps in $US\B^2\to\B\times US\B $ 
  such that
  \(
    \pi_{10}=
    \varphi_1\circ\s{Norm}\circ  U\s{P}_0
    \) and 
    \(
    \pi_{11}=
    \varphi_1\circ\s{Norm}\circ U\s{P}_1
  \).
\end{example}

We write $(US)^mA$ for $US\dots USA$, where $m>0$. The arrow sum
on $(US)^mA$ with $A\neq USB$ will use the underlying sum in the vector space
$SA$.
To define such a sum, we need the following map.
\begin{defin}
  The map $g_k:\left((US)^{k+1}A\right)\times \left((US)^{k+1}A\right)\to (US)^k(USA\times USA)$
  is defined by:
  \[
    g_k=
    (US)^{k-1}Um\circ (US)^{k-1}n\circ 
    (US)^{k-2}Um\circ (US)^{k-2}n\circ 
    \dots\circ
    Um\circ n 
  \]
  \begin{center}
  \begin{tikzcd}[column sep=1in]
    \left((US)^{k+1}A\right)\times \left((US)^{k+1}A\right)\ar[d,"n"]\ar[r,"g_k"] & (US)^k(USA\times USA)\\
    U(S(US)^kA\otimes S(US)^kA)\ar[d,"Um"] & (US)^{k-1}U(SUSA\otimes SUSA)\ar[u,"(US)^{k-1}Um"]\\
    US((US)^kA\times (US)^kA)\ar[d,"USn"] & (US)^{k-1}((US)^2A\times (US)^2A)\ar[u,"(US)^{k-1}n"] \\
    USU(S(US)^{k-1}A\otimes S(US)^{k-1}A)\ar[r,"USUm"] & (US)^2((US)^{k-1}A\times (US)^{k-1}A)\ar[u,"\vdots",dotted]
  \end{tikzcd}
  \end{center}
\end{defin}
\begin{example}
  We can define the sum on $(US)^3A\times (US)^3A$ by using the sum on
  $SA$ as:\\
    $(US)^3A\times (US)^3A\xlra{g_2}(US)^2(USA\times USA)\xlra{(US)^2+}(US)^3A$\quad
  where
  $g_2=USUm\circ USn\circ Um\circ n$.
\end{example}

Using all the previous definitions, we can finally give the interpretation of a
type derivation tree in our model.
If $\Gamma\vdash t:A$ with a derivation $T$, we
write generically $\sem T$ as
$\Gamma\xlra t A$.
In the following definition, we write $S^mA$ for $S\dots SA$, where $m>0$ and
$A\neq SB$.

\begin{defin}
  If $T$ is a type derivation tree, we define inductively $\sem{T}$ as an arrow in the
  category $\mathbf{Set}_D$, as follows. To avoid cumbersome notation, we omit
  to write the monad $D$ in most cases (we only give it in the case of the
  measurement, which is the only interesting case).
  \begin{align*}
    &\sem{\vcenter{\infer[^\tax]{\Theta^\B,x:\Psi\vdash x:\Psi}{}}} = \Theta^\B\times\Psi\xlra{{!}\times\Id}\One\times\Psi\approx\Psi\quad\textrm{where }\Id\textrm{ is the identity in }\mathbf{Set}\\
    &\sem{\vcenter{\infer[^{\tax_{\vec 0}}]{\Theta^\B\vdash\z:SA}{}}} = \Theta^\B\xlra{!}\One\xlra{\hat{\vec 0}}USA\quad\textrm{where }\hat{\vec 0}\textrm{ is the constant function $\star\mapsto\vec 0$}\\
    &\sem{\vcenter{\infer[^{\tax_{\ket 0}}]{\Theta^\B\vdash\ket 0:\B}{}}}=\Theta^\B\xlra{!}\One\xlra{\hat{\ket 0}}\B\quad\textrm{where }\hat{\ket 0}\textrm{ is the constant function $\star\mapsto\ket 0$}\\
    &\sem{\vcenter{\infer[^{\tax_{\ket 1}}]{\Theta^\B\vdash\ket 1:\B}{}}}=\Theta^\B\xlra{!}\One\xlra{\hat{\ket 1}}\B\quad\textrm{where }\hat{\ket 1}\textrm{ is the constant function $\star\mapsto\ket 1$}\\
    &\sem{\vcenter{\infer[^{\alpha_I}]{\Gamma\vdash \alpha.t:S^mA}{\Gamma\vdash t:S^mA}}} =
      \begin{aligned}[t]
        &\Gamma\xlra{t}(US)^mA\xlra{(US)^{m-1}U\lambda} (US)^{m-1}U(SA\otimes\I)\\
        &\xlra{(US)^{m-1}U(\Id\otimes\alpha)} (US)^{m-1}U(SA\otimes\I) \xlra{(US)^{m-1}U\lambda^{-1}} (US)^mA
      \end{aligned}\\
    &\sem{\vcenter{\infer[^{+_I}]{\Gamma,\Delta,\Theta^\B\vdash t+r:S^mA}{\Gamma,\Theta^\B\vdash t:S^mA &  \Delta,\Theta^\B\vdash r:S^mA}}}
     =
     \begin{aligned}[t]
       &\Gamma\times\Delta\times\Theta^\B\xlra{\Id\times\delta}\Gamma\times\Delta\times\Theta^\B\times\Theta^\B\\
       &\xlra{\Id\times\sigma\times\Id}\Gamma\times\Theta^\B\times\Delta\times\Theta^\B\xlra{t\times r} (US)^mA\times (US)^mA\\
     &\xlra {g_{m-1}} (US)^{m-1}(USA\times USA)\xlra{(US)^{m-1}+} (US)^mA
     \end{aligned}\\
    &\sem{\vcenter{\infer[^{S_I}]{\Gamma\vdash t:SA}{\Gamma\vdash t:A}}}=\Gamma\xlra{t}A\xlra{\eta}USA\\
    &\sem{\vcenter{\infer[^{S_E}]{\Gamma\vdash\pi_j t:\B^j\times S\B^{n-j}}{\Gamma\vdash t:S^k\B^n}}} = \Gamma\xlra t (US)^k\left(\B^n\right)\xlra{\mu^{k-1}}US\B^n \xlra{\pi_j}D(\B^j\times S\left(\B^{n-j}\right))\\
    &\sem{\vcenter{\infer[^{\tif}]{\Gamma\vdash\ite{}{t}{r}:\B\Rightarrow A} {\Gamma\vdash t:A & \Gamma\vdash r:A}}}= \Gamma\xlra{\mathsf{curry}(\mathsf{uncurry}(f_{t,r})\,\circ\,\mathsf{swap})}[\B,A]\\
    &\sem{\vcenter{\infer[^{\Rightarrow_I}] {\Gamma\vdash\lambda x{:}\Psi.t:\Psi\Rightarrow A} {\Gamma,x:\Psi\vdash t:A}}} =\Gamma\xlra{\eta^\Psi}[\Psi,\Gamma\times\Psi]\xlra{[\Id,t]}[\Psi,A]\\
    &\sem{\vcenter{\infer[^{\Rightarrow_E}]{\Delta,\Gamma,\Theta^\B\vdash tu:A} {\Delta,\Theta^\B\vdash u:\Psi & \Gamma,\Theta^\B\vdash t:\Psi\Rightarrow A}}} =
    \begin{aligned}[t]
      &\Delta\times\Gamma\times\Theta^\B\xlra{\Id\times\delta}\Delta\times\Gamma\times\Theta^\B\times\Theta^\B\\
      &\xlra{\Id\times\sigma\times\Id} \Delta\times\Theta^\B\times\Gamma\times\Theta^\B\xlra{u\times t}\Psi\times[\Psi,A]
      \xlra{\varepsilon^\Psi}A
    \end{aligned}\\
    &\sem{\vcenter{\infer[^{\Rightarrow_{ES}}]{\Delta,\Gamma,\Theta^\B\vdash
      tu:SA} {\Delta,\Theta^\B\vdash u:S\Psi & \Gamma,\Theta^\B\vdash t:S(\Psi\Rightarrow A)}}} =
      \begin{aligned}[t]
        &\Delta\times\Gamma\times\Theta^\B\xlra{\Id\times\delta}\Delta\times\Gamma\times\Theta^\B\times\Theta^\B\\
        &\xlra{\Id\times\sigma\times\Id}\Delta\!\times\!\Theta^\B\!\times\!\Gamma\!\times\!\Theta^\B\xlra{u\times t}US\Psi\times US[\Psi,A]\\
        &\xlra{n} U(S\Psi\otimes S[\Psi,A])\xlra{Um} US(\Psi\times[\Psi,A])\\
        &\xlra{US\varepsilon^\Psi}USA
      \end{aligned}\\
    &\sem{\vcenter{\infer[^{\times_I}]{\Gamma,\Delta,\Theta^\B\vdash t\times u:\Psi\times \Phi} {\Gamma,\Theta^\B\vdash t:\Psi & \Delta,\Theta^\B\vdash u:\Phi}}} =
                                                                                                                           \begin{aligned}[t]
        &\Gamma\times\Delta\times\Theta^\B\xlra{\Id\times\delta}\Gamma\times\Delta\times\Theta^\B\times\Theta^\B\\
        &\xlra{\Id\times\sigma\times\Id}\Gamma\times\Theta^\B\times\Delta\times\Theta^\B\xlra{t\times u} \Psi\times \Phi
      \end{aligned}\\
    &    \sem{\vcenter{\infer[^{\times_{Er}}]{\Gamma\vdash\head\ t:\B}{\Gamma\vdash t:\B^n}}} =\Gamma\xlra t\B^n\xlra{\head}\B\quad{\parbox{7.7cm}{where $\head$ is the projector of the first component in \textbf{Set}}}\\
    &    \sem{\vcenter{\infer[^{\times_{E\ell}}]{\Gamma\vdash\tail\ t:\B^{n-1}}{\Gamma\vdash t:\B^n}}} =\Gamma\xlra t\B^n\xlra{\tail}\B^{n-1}\quad{\parbox{7.07cm}{where $\tail$ is the projector of the $n-1$ last components}}\\
    &    \sem{\vcenter{\infer[^{\Uparrow_r}]{\Gamma\vdash\Uparrow_r  t:S(\Psi\times \Phi)} {\Gamma\vdash t:S(S\Psi\times \Phi)}}} =
      \begin{aligned}[t]
        &\Gamma\xlra t US(US\Psi\times \Phi)\xlra{U(\Id\times\eta)} US(US\Psi\times US\Phi)\\
        &\xlra{USn} USU(S\Psi\otimes S\Phi)\xlra{USUm}USUS(\Psi\times \Phi) \xlra{\mu} US(\Psi\times \Phi)
      \end{aligned}\\
    &    \sem{\vcenter{\infer[^{\Uparrow_\ell}]{\Gamma\vdash\Uparrow_\ell  t:S(\Psi\times \Phi)} {\Gamma\vdash t:S(\Psi\times S\Phi)}}}  =
      \begin{aligned}[t]
        &\Gamma\xlra t US(\Psi\times US\Phi)\xlra{U(\eta\times\Id)} US(US\Psi\times US\Phi)\\
        &\xlra{USn} USU(S\Psi\otimes S\Phi)\xlra{USUm}USUS(\Psi\times \Phi) \xlra{\mu} US(\Psi\times \Phi)
      \end{aligned}\\
    &\sem{\vcenter{\infer[^\parallel]{\Gamma\vdash\proba{p_1}t_1\parallel\cdots\parallel\proba{p_n}t_n:A}{\Gamma\vdash t_i:A & \sum_i\mathtt{p_i}=1}}} = \Gamma\xlra{\delta}\Gamma^n\xlra{t_1\times\dots\times t_n}A^n\xlra{d_{\{{p_i}\}_i}}DA
  \end{align*}
\end{defin}

\begin{prop}
  [Independence of derivation]
  \label{prop:eqDer}
  If $\Gamma\vdash t:A$ can be derived with two different derivations $T$ and
  $T'$, then $\sem{T}=\sem{T'}$.
\end{prop}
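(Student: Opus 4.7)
The plan is to proceed by structural induction on the term $t$. For each case I would enumerate the possible last rules in any derivation of $\Gamma \vdash t : A$. Three sources of non-uniqueness arise: first, the rule $S_I$ may be applied at different depths in the derivation tree, since any subderivation $\Gamma \vdash s : B$ can be promoted to $\Gamma \vdash s : SB$ and then plugged into a larger derivation; second, in multi-premise rules such as $+_I$, $\times_I$, $\Rightarrow_E$, and $\Rightarrow_{ES}$, the partition of the context into $\Gamma, \Delta, \Theta^\B$ is not uniquely determined, as a basis-typed variable appearing only in one premise may be placed either in the disjoint component or in the shared basis context $\Theta^\B$; third, the axioms $\tax$, $\tax_{\vec 0}$, $\tax_{\ket 0}$, $\tax_{\ket 1}$ admit an arbitrary basis context $\Theta^\B$ by weakening.

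The ambiguities from the third source are straightforward: the semantics of each axiom starts with the terminal map $\,!\,:\Theta^\B \to \One$, so adjoining further basis variables post-composes the same map. For the second source, the fact that $\mathbf{Set}$ is cartesian makes the diagonal $\delta$ natural; moving a basis variable between $\Theta^\B$ and the disjoint part amounts to inserting or omitting a duplication that commutes through the remaining composite, so the resulting maps agree.

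The main work lies in the first source. For each rule $R$ whose conclusion may have type $S^{m+1} A$, I would establish a commutation lemma of the form
\[
\sem{R \text{ at } S^{m+1} A} \;=\; \eta \circ \sem{R \text{ at } S^m A},
\]
where $\eta$ is the unit of the monad $T = US$. These equalities reduce to naturality of $\eta$, the triangle identities of the adjunction $S \dashv U$, and the fact that the sum, scalar multiplication, and pair formation at superposition level $m+1$ are canonically induced from the corresponding operations at level $m$ via iterated use of the monoidal natural isomorphism $m$ and the monoidal natural transformation $n$. With these commutations available, two derivations $T$ and $T'$ of the same judgment can be put into a common normal form (say, with every $S_I$ pushed as close to the leaves as possible), at which point the inductive hypothesis applied to the subderivations concludes $\sem{T} = \sem{T'}$.

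The main obstacle will be the bookkeeping in the commutation lemma: each rule requires unfolding its interpretation and reconciling the monoidal and monadic structural maps. The cases $+_I$ and $\alpha_I$ are relatively routine, since they only invoke the underlying abelian-group structure on $USA$ recalled in Remarks~\ref{rmk:categoria}, but $\Rightarrow_{ES}$ is more delicate, its interpretation involving both $n$ and $m$ together with the counit $\varepsilon$; here the commutation additionally uses the monad laws for $\mu$. Given the volume of structural isomorphisms involved, I expect the verification to be deferred to the appendix, while the main body records only the inductive schema and the statement of the commutation lemma.
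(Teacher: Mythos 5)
Your proposal follows essentially the same route as the paper: isolate the non-syntax-directed rules, normalize derivations by commuting $S_I$ past the other rules, and verify that each commutation preserves the interpretation via naturality and monoidality of $\eta$ (the paper happens to push $S_I$ toward the root rather than toward the leaves, but the commutation equations to be checked are identical). Your explicit treatment of the context-splitting ambiguity in the multi-premise rules, which the paper delegates silently to its weakening lemma, is a sensible extra precaution but does not change the substance of the argument.
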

\begin{proof}
  Without taking into account rules $\Rightarrow_E$, $\Rightarrow_{ES}$, and
  $S_I$, the typing system is syntax directed. Hence, we give a rewrite system
  on trees such that each time a rule $S_I$ can be applied before or after
  another rule, we choose a direction to rewrite the tree to one of these forms.
  Similarly, rules $\Rightarrow_E$ and $\Rightarrow_{ES}$, can be exchanged in
  few specific cases, so we also choose a direction for these.

  Then, we prove that every rule preserves the semantics of the tree. This
  rewrite system is clearly confluent and normalizing, hence for each tree $T$
  we can take the semantics of its normal form, and so every sequent will have
  one way to calculate its semantics: as the semantics of the normal tree.
  The full proof is given in the appendix.
\end{proof}

\begin{rmk}
  Proposition~\ref{prop:eqDer} allows us to write the semantics of a sequent,
  independently of its derivation tree. Hence, from now on, we will use
  $\sem{\Gamma\vdash t:A}$, without ambiguity.
\end{rmk}
\subsection{Soundness and Adequacy}
We first prove the soundness of the interpretation with respect to the reduction relation (Theorem~\ref{thm:soundness}), then we prove the computational adequacy (Theorem~\ref{thm:CA}). Finally, we prove
{adequacy} (Theorem~\ref{thm:FA}) as a consequence of both results.

\subsubsection{Soundness}
Soundness is proved only for closed terms, since the reduction is weak (cf.~Figure~\ref{fig:TRScontext}).
First, we need a substitution lemma.
\begin{lem}
  [Substitution]\label{lem:substitution}
  If $x:\Psi\vdash t:A$ and $\vdash r:\Psi$, the following diagram
  commutes:
  \begin{center}
    \begin{tikzcd}[row sep=9pt]
      \One\ar[rr,"(r/x)t"]\ar[rd,"r"] && A\\
      &\Psi\ar[ur,"t"]
    \end{tikzcd}
  \end{center}
  That is,
  $\sem{\vdash(r/x)t:A}=\sem{x:\Psi\vdash t:A}\circ\sem{\vdash r:\Psi}$.
\end{lem}
\begin{proof}
 We prove, more generally, that 
 if $\Gamma',x:\Psi,\Gamma\vdash t:A$ and $\vdash r:\Psi$, the following diagram
 commutes:
 \begin{center}
   \begin{tikzcd}[row sep=9pt]
     \Gamma'\times\Gamma\ar[r,"(r/x)t"]\ar[d,dashed,"\approx"] & A\\
     \Gamma'\times\One\times\Gamma\ar[r,"\Id\times r\times\Id"] &\Gamma'\times\Psi\times\Gamma\ar[u,"t"]
   \end{tikzcd}
 \end{center}
 That is,
 $\sem{\Gamma',\Gamma\vdash(r/x)t:A}=\sem{\Gamma',x:\Psi,\Gamma\vdash t:A}\circ(\Id\times\sem{\vdash r:\Psi}\times\Id)$.
 Then, by taking $\Gamma=\Gamma'=\emptyset$, we get the result stated by the lemma.
 
 We proceed by induction on the derivation of $\Gamma',x:\Psi,\Gamma\vdash t:A$.
  The full proof is given in the appendix.
\end{proof}

\begin{thm}
  [Soundness]\label{thm:soundness}
  If $\vdash t:A$, and $t\lra r$,
  then
  $\sem{\vdash t:A} = \sem{\vdash r:A}$.
\end{thm}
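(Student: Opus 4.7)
The plan is to proceed by induction on the derivation of the reduction $t \lra r$, handling first the base rewrite rules from Figures~\ref{fig:TRSbeta}--\ref{fig:TRSproj} and then the contextual rules of Figure~\ref{fig:TRScontext}. For the contextual rules, compositionality of $\sem{\cdot}$ makes the step immediate: the induction hypothesis gives equality of the subterm interpretations, and each interpretation scheme uses the subterm arrows as components of a larger composite, so replacing one by an equal arrow yields the same overall composite.

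For the base rewrite rules I would group them by the categorical tool they require. The two beta rules $\rbetab$ and $\rbetan$ both follow from Lemma~\ref{lem:substitution} together with the standard $\beta$-equation $\varepsilon^\Psi \circ (\Id \times \mathsf{curry}(h)) = h$ for the exponential adjunction $-\times\Psi \dashv [\Psi,-]$ (in the call-by-name case $\rbetan$, this unfolds through the monadic presentation of the arrow $\Rightarrow_{ES}$). The conditional rules $\riftrue$ and $\riffalse$ reduce to the defining commutations $f_{t,r}\circ i_1 = \hat t$ and $f_{t,r}\circ i_2 = \hat r$ made explicit in Example~\ref{ex:if}. The list rules $\rhead$ and $\rtail$ are immediate from the definitions of $\head$ and $\tail$ as projections in $\mathbf{Set}$. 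The vector-space rules from Figure~\ref{fig:TRSvs} all amount to equalities in the hom-group $\mathbf{Vec}(\I,SA)$ (associativity, neutrality, distributivity, factorization, scalar product), which hold because $SA$ is, by construction, the free vector space on $A$; here one just has to chase the $(US)^m$ prefix through the sum map using the definition of $g_k$.

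The linear distribution rules from Figure~\ref{fig:TRSld} and the casting rules from Figure~\ref{fig:TRScasts} are handled by combining naturality of $n$, $m$, $\eta$, $\mu$ with the fact that $(S,m)\dashv (U,n)$ is a \emph{monoidal} adjunction. For instance, $\rlinr$ and $\rlinscalr$ follow because, when $t:\B^n\Rightarrow A$, the map $t$ factors through $\eta$ so that applying it commutes with the sum and scalar maps coming from $+:USA\times USA\to USA$ and from scalar multiplication in $SA$; the casting rules then say precisely that $\Uparrow_r$ and $\Uparrow_\ell$, being built from $\eta$, $n$, $m$, $\mu$, commute with these same operations, which is a diagram chase using the monad and the monoidal structure. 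The rules ending in $\z$ (including $\rcaneutzr$, $\rcaneutzl$, $\rdistcazeror$, $\rdistcazerol$) reduce to the fact that $\hat{\vec 0}$ is absorbing for the tensor and for $n\circ(\Id\times\hat{\vec 0})$ is the zero map.

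The main obstacle is the projective measurement rule $\rproj$, because its right-hand side is a $2^j$-fold probability distribution and its semantics must be computed through the map $\pi_j$ of Definition~\ref{def:pi}. The plan here is to verify component-wise: using the decomposition $\pi_j=\sum_k p_k\chi_{\pi_{jk}(-)}$ together with the factorization $\pi_{jk}=\varphi_j\circ\s{Norm}\circ U\s{P}_k$ from Definition~\ref{def:projk}, one shows that on the input $\ket\psi=\sum_i \alpha_i\prod_h\ket{b_{hi}}$ the scalar $p_k$ reduces to $\sum_{i\in T_k}|\alpha_i|^2/\sum_r|\alpha_r|^2$ as in the operational definition, and the vector $\pi_{jk}\ket\psi$ reduces to $\ket k\times\ket{\phi_k}$; these identities are direct computations using the daggered constant arrow $\overline{\ket k}$ and the commutative square defining $\varphi_j$. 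The degenerate rule $\rprojz$ is a separate check using the ``otherwise'' branch of $\s{Norm}$ and the distribution $\etaD\circ\hat{\ket 0^n}$. Once every base rule is verified, the theorem follows by a straightforward induction, closing the proof.
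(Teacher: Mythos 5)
Your proposal is correct and follows essentially the same route as the paper: an induction on the rewrite relation, with the substitution lemma handling the beta rules, diagram chases through the monoidal adjunction $(S,m)\dashv(U,n)$ (in particular bilinearity of $\otimes$ in $\mathbf{Vec}$) handling the linear-distribution, vector-space and casting rules, a direct component-wise computation of $\pi_{jk}=\varphi_j\circ\s{Norm}\circ U\s{P}_k$ for $\rproj$, and compositionality for the contextual rules. The paper simply carries out each of these cases as an explicit commutative diagram with an element-wise mapping check, which is the verification your plan defers.
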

\begin{proof}
  By induction on the rewrite relation, using the first derivable type for each
  term.
  The full proof is given in the appendix.
\end{proof}

\subsubsection{Computational adequacy}
We adapt Tait's proof for strong normalization to prove the computational
adequacy of \LambdaS.
\begin{defin}
  Let $\mathfrak A,\mathfrak B$ be sets of closed terms. We define the following operators on them:

  \begin{itemize}
  \item\textit{Closure by antireduction:} $\overline{\mathfrak A}=\{t\mid
    t\lra^*r_, \textrm{ with }r\in \mathfrak A \textrm{ and }FV(t)=\emptyset\}$.
  \item\textit{Closure by distribution:} $\mathfrak
    A^\parallel=\{\bigparallel_i\proba{p_i}t_i\mid t_i\in\mathfrak A\textrm{ and }\sum_i\mathtt{p_i}=1\}$.
  \item\textit{Product:} $\mathfrak A\times \mathfrak B=\{t\times u\mid t\in \mathfrak
    A\textrm{ and }u\in \mathfrak B\}$.
  \item\textit{Arrow:} $\mathfrak A\Rightarrow \mathfrak B=\{t\mid\forall u\in\mathfrak
    A, tu\in \mathfrak B\}$.
  \item\textit{Span:} $S\mathfrak A=\{\sum_i\may[\alpha_i]r_i\mid r_i\in\mathfrak A\}$.
  \end{itemize}
  
  The set of computational closed terms of type $A$ (denoted $\com A$), is defined by: 
  \[
      \com{\B} =\overline{\{\ket 0,\ket 1\}}^\parallel\qquad
      \com{A\times B}=\overline{\com A\times\com B}^\parallel\qquad
      \com{\Psi\Rightarrow A} =\overline{\com\Psi\Rightarrow\com A}^\parallel\qquad
      \com{SA} =\overline{S\com A\cup\{\z\}}^\parallel
  \]
\end{defin}
  A substitution $\sigma$ is valid in a context $\Gamma$ (notation
  $\sigma\vDash\Gamma$) if for each $x:A\in\Gamma$, $\sigma x\in\com A$.

\begin{lem}\label{lem:Adequacy}
  If $\vdash t:A$ then $t\in\com A$.
\end{lem}
\begin{proof}
  We prove, more generally, that if $\Gamma\vdash t:A$ and $\sigma\vDash\Gamma$, then $\sigma t\in\com A$.
  We proceed by induction on the derivation of $\Gamma\vdash t:A$.
  The full proof is given in the appendix.
\end{proof}

\begin{defin}[Elimination context]
  An elimination context is a term of type $\B$ produced by the following grammar, where exactly one subterm has been replaced
  with a hole $[\cdot]$.
  \[
    C:=[\cdot]\mid Ct \mid tC\mid \pi_jC\mid\head\ C\mid\tail\ C\mid\Uparrow_r
    C\mid\Uparrow_\ell C
  \]
  We write $C[t]$ for the term of type $\B$ obtained from replacing the hole of
  $C$ by $t$.  
\end{defin}
\begin{defin}[Operational equivalence]
  We write $t\elimeq r$ if, for every elimination context $C$, there
  exists $s$ such that $C[t]\lra^* s$ and $C[r]\lra^* s$.

  We define the operational equivalence $\opeq$ inductively by
  \begin{itemize}
  \item If $t\elimeq r$, then $t\opeq r$.
  \item If $t\opeq r$ then $\alpha.t\opeq\alpha.r$.
  \item If $t_1\opeq r_1$ and $t_2\opeq r_2$, then $t_1+t_2\opeq r_2+r_2$.
  \item If $t_1\opeq r_1$ and $t_2\opeq r_2$, then $t_1\times t_2\opeq
    r_1\times r_2$.
  \end{itemize}
\end{defin}
{Remark that operational equivalence differ from the standard notion of
  observational equivalence since $t\opeq r$ does not imply $\lambda
  x:\Psi.t\opeq\lambda x:\Psi.r$, as a consequence of not having reductions
  under lambda.}

\begin{lem}\label{lem:contextApp}~
  If $C[t]\opeq C[r]$, then $t\opeq r$.
\end{lem}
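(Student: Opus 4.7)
The plan is to reduce to the base clause of $\opeq$ by proving $t\elimeq r$. A key preliminary observation is a type-theoretic collapse: any elimination context produces a term of atomic type $\B$ by definition, whereas each of the three congruence clauses of $\opeq$ places its conclusion at a strictly more structured type (namely $SA$ for the $\alpha.(-)$ and $(-)+(-)$ clauses, and $\Psi\times\Phi$ for the $(-)\times(-)$ clause). Since $\B$ does not unify with any of these shapes in the type grammar, a derivation of $u\opeq v$ with $u:\B$ can only come from the base clause; so $u\opeq v$ and $u\elimeq v$ coincide on terms of type $\B$.

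I would then unfold the hypothesis, which I read as universally quantified over elimination contexts $C$ with hole at the type of $t$ (the naive per-$C$ reading admits counterexamples such as $C=\head\,[\cdot]$ with $t=\ket{00}$ and $r=\ket{01}$: both $C[t]$ and $C[r]$ reduce to $\ket 0$ and are therefore $\opeq$, yet $\tail\,[\cdot]$ witnesses $t\not\opeq r$). Under this reading, given an arbitrary elimination context $C'$ for the type of $t$, the hypothesis yields $C'[t]\opeq C'[r]$, and the type-theoretic collapse upgrades this to $C'[t]\elimeq C'[r]$. Specialising the universal quantifier in the definition of $\elimeq$ at the trivial elimination context $[\cdot]$, which is well-typed at $\B$, extracts a common reduct of $C'[t]$ and $C'[r]$. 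Because $C'$ was arbitrary, this establishes $t\elimeq r$, hence $t\opeq r$.

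The only delicate point in this argument is pinning down the intended scope of the quantifier over $C$ in the hypothesis; once that is settled, the proof amounts to unwinding definitions and exploiting the atomicity of $\B$ in the type grammar to rule out the congruence cases of $\opeq$.
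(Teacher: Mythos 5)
You share the paper's opening move: because $C[t]$ has type $\B$, and the three congruence clauses of $\opeq$ only relate terms whose types have the shape $S^mA$ or $\Psi\times\Phi$, the hypothesis can only come from the base clause, i.e.\ $C[t]\elimeq C[r]$. After that the two arguments diverge, and the divergence matters. The paper keeps a single $C$ fixed and composes contexts: from $C[t]\elimeq C[r]$ it obtains, for every elimination context $D$, a common reduct of $E[t]$ and $E[r]$ where $E=D[C]$, and then asserts $t\elimeq r$. That last step only covers elimination contexts that factor through the given $C$, not arbitrary ones, so the paper's own proof has a gap --- and your counterexample ($C=\head\,[\cdot]$ with $t=\ket{00}$, $r=\ket{01}$: both $C[t]$ and $C[r]$ reduce to $\ket 0$, yet $\tail\,[\cdot]$ separates $t$ from $r$) shows that under the single-$C$ reading the statement is actually false, not merely unproven. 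Your repair --- reading the hypothesis as universally quantified over elimination contexts and then specialising the inner quantifier of $\elimeq$ at the trivial context $[\cdot]$ --- is sound, at the price of making the lemma nearly definitional. One caveat worth flagging: at the lemma's only use site (the $\Psi\Rightarrow C$ case of Theorem~\ref{thm:CA}) what is actually established is $C[t]\opeq C[v]$ only for contexts of the form $[\cdot]w$ with $w$ a value in $\com\Psi$, so your strengthened hypothesis is not available there either; that mismatch is the paper's problem rather than yours, but your reading does not by itself close it.
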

\begin{proof}
  By the shape of $C$, the only possibility for $C[t]\opeq C[r]$ is
$C[t]\elimeq C[r]$. Then, by definition, there exists a term $s$ and a context
$D$ such that $D[C[t]]\lra^*s$ and $D[C[r]]\lra^*s$. Consider the context $E =
D[C]$, we have $E[t]=D[C[t]]\lra^*s$ and $E[r]=D[C[r]]\lra^*t'$. Therefore,
$t\elimeq r$, and so $t\opeq r$.
\end{proof}

\begin{thm}[Computational adequacy]\label{thm:CA}
  If $\sem{\vdash t:A}=\sem{\vdash v:A}$,
  then $t\opeq v$.
\end{thm}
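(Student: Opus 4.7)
My plan is to reduce the problem to an analysis of normal forms using strong normalization and soundness, and then to show by induction on the type $A$ that two semantically equal normal forms are operationally equivalent.

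First I would apply Lemma~\ref{lem:Adequacy} on the empty substitution to conclude $t\in\com A$ and $v\in\com A$, so that both $t$ and $v$ reduce to the structured normal form dictated by $A$; alternatively, Theorem~\ref{thm:SN} together with Theorem~\ref{thm:progress} (lifted through the typing rule for $\parallel$) suffices to produce normal forms $t_n,v_n$ of the shape $\bigparallel_i\proba{p_i}v_i$ with each $v_i$ a value. Since the reductions $t\lra^* t_n$ and $v\lra^* v_n$ lift into every elimination context by the contextual rules of Figure~\ref{fig:TRScontext}, they witness $t\elimeq t_n$ and $v\elimeq v_n$, and hence $t\opeq t_n$ and $v\opeq v_n$. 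Iterated soundness (Theorem~\ref{thm:soundness}) then transfers the semantic equality to $\sem{\vdash t_n:A}=\sem{\vdash v_n:A}$, so it will suffice to prove the sub-claim that two closed normal-form distributions $p,q$ of type $A$ with equal semantics satisfy $p\opeq q$.

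This sub-claim I would prove by induction on $A$. For $A=\B$ or $A=\B^n$ the only normal closed values are ket-tuples, and the interpretation into $D(\B^n)$ records the distribution faithfully, so semantic equality forces $p$ and $q$ to coincide as probability distributions modulo the identifications on $\parallel$, from which $p\opeq q$ is immediate. For $A=\Psi\times\Phi$ the values are products $v\times w$ whose interpretation factors through $\sem\Psi\times\sem\Phi$, and the claim follows componentwise from the induction hypothesis and the $\times$-congruence rule of $\opeq$. For $A=SA'$ the normal values are linear combinations $\sum_i\alpha_i.v_i$ interpreted as vectors in $US\sem{A'}$; using the factoring rules $\rfact$, $\rfacto$, $\rfactt$, $\rprod$, and $\rzeros$ (each shown semantics-preserving by Theorem~\ref{thm:soundness}), one normalises the syntactic sums so as to partition the summands by their semantics, and then the induction hypothesis for $A'$ together with the $+$- and $\alpha.$-congruence rules of $\opeq$ yields $p\opeq q$. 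Finally, for $A=\Psi\Rightarrow B$ the normal values are abstractions; for any elimination context $C$ whose hole sits in a function position and whose closed argument of type $\Psi$ is provided by a computational element of $\com\Psi$, compositionality of $\sem\cdot$ combined with our hypothesis gives $\sem{\vdash C[p]:\B}=\sem{\vdash C[q]:\B}$, so the already-established base case at type $\B$ supplies a common reduct for $C[p]$ and $C[q]$, proving $p\elimeq q$ and hence $p\opeq q$.

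The hardest case will be $A=SA'$: two syntactically different normal forms can denote the same vector of $US\sem{A'}$ whenever distinct normal values of type $A'$ happen to share their semantics (for instance, two abstractions with the same denotation), so the induction hypothesis only yields $\opeq$-equivalence, not syntactic identity, of the individual summand values. I expect the bulk of the work to be checking that $\opeq$ descends to a well-defined equivalence on formal sums at type $SA'$ compatible both with the vector-space structure of $US\sem{A'}$ and with the syntactic rewrites that collect like terms. A secondary subtlety is the function case: one must verify that the induction on types is well-founded even though its proof calls the base case on terms of the form $C[p]$ at type $\B$, and that the distribution monad $D$ interacts cleanly with elimination contexts applied to a $\parallel$-combination so that the witness $s$ in the definition of $\elimeq$ can really be produced.
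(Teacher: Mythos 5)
Your overall strategy --- use Lemma~\ref{lem:Adequacy} (or Theorems~\ref{thm:SN} and~\ref{thm:progress}) plus Theorem~\ref{thm:soundness} to reduce to normal-form distributions, then induct on the type --- is the paper's strategy, and your $\B$, product, and $SA'$ cases track the paper's almost step for step. The one genuinely different piece is the arrow case: you quantify over \emph{all} elimination contexts, push the semantic equality through them by compositionality of $\sem\cdot$, and invoke the already-proved base case at $\B$ to produce the common reduct demanded by $\elimeq$. The paper instead applies both functions to a single value $w\in\com\Psi$, uses the induction hypothesis at the \emph{codomain} type $C$ to get $tw\opeq vw$, and pulls this back through Lemma~\ref{lem:contextApp}. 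Your route needs an explicit compositionality lemma (that $\sem{\vdash p:A}=\sem{\vdash q:A}$ implies $\sem{\vdash C[p]:\B}=\sem{\vdash C[q]:\B}$) which is true here because the interpretation is defined by composition on derivations, but which the paper never isolates; the paper's route avoids that lemma at the cost of invoking Lemma~\ref{lem:contextApp} and of an induction whose recursive call lands at the codomain rather than at the base type. On $A=SA'$: the difficulty you flag --- that $+$ and scalar action are not injective on denotations, so equal semantics of two sums does not by itself force a summand-by-summand matching --- is real, and the paper silently steps over it (it simply asserts $\mathtt{p_i}=\mathtt{q_i}$, $m=m'$, $n=n'$ and $\sem{\vdash w_{ij}:C}=\sem{\vdash v_{ij}:C}$). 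Your proposed repair, repartitioning summands by denotation and then using the congruence clauses of $\opeq$, would additionally require $\opeq$ to be transitive and stable under the factoring rewrites, which neither you nor the paper establishes; so here you have identified a shared gap rather than introduced a new one.
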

 \begin{proof}
   We proceed by induction on $A$.
   \begin{itemize}
     \item $A=\B$.
  By Lemma~\ref{lem:Adequacy}, we have $t\in\com{A}$, thus,
  $t\lra^*\proba{q_1}\ket 0\parallel\proba{q_2}\ket 1$, and, by the same lemma, $v=\proba{p_1}\ket 0\parallel\proba{p_2}\ket 1$.
  Hence, by Theorem~\ref{thm:soundness}, we have $\sem{\vdash v:A}=\sem{\vdash
    t:A}=\sem{\vdash\proba{q_1}\ket 0\parallel\proba{q_2}\ket 1:A}$. So,
  $1^2\xlra{\ket 0\times\ket 1}\B^2\xlra{d_{\{{p_1},{p_2}\}}}D\B=
  1^2\xlra{\ket 0\times\ket 1}\B^2\xlra{d_{\{{q_1},{q_2}\}}}D\B$.
  Therefore, $\mathtt{p_i}=\mathtt{q_i}$, thus $t\lra^* v$.
  
\item $A=C_1\times C_2$.
     By Lemma~\ref{lem:Adequacy}, we have $t\in\com{A}$, thus,
     $t\lra^*\bigparallel_i\proba{q_i}(w_{i1}\times w_{i2})$, with $w_{ij}\in \com{C_j}$, and, by the same lemma,
     $v=\bigparallel_i\proba{p_i}(v_{i1}\times v_{i2})$, with $v_{ij}\in \com{C_j}$.
  Hence, by Theorem~\ref{thm:soundness}, we have $\sem{\vdash v:A}=\sem{\vdash
    t:A}=\sem{\vdash \bigparallel_i\proba{q_i}(w_{i1}\times w_{i2}):A}$. So,
  $(1^2)^n\xlra{ (w_{11}\times w_{12})\times\dots\times (w_{n1}\times
    w_{n2})}(C_1\times C_2)^n\xlra{d_{\{{q_i}\}_i}}D(C_1\times C_2)
  =(1^2)^n\xlra{(v_{11}\times v_{12})\times\dots\times (v_{m1}\times
    v_{m2})}(C_1\times C_2)^n\xlra{d_{\{{p_i}\}_i}}D(C_1\times C_2)$.
  Therefore,
  $\mathtt{p_i}=\mathtt{q_i}$, $m=n$, and $\sem{\vdash v_{ij}:C_j}=\sem{\vdash w_{ij}:C_j}$.
  Therefore, by the induction hypothesis, $w_{ij}\opeq v_{ij}$, and so, $t\opeq v$.
  
   \item $A=\Psi\Rightarrow C$.
     The only possibility for $v$, a value of type $\Psi\Rightarrow C$, is
     $v=\bigparallel_i\proba{p_i}\lambda x^\Psi.r_i$.

     Hence, let $f=\sem{\vdash t:A}=\sem{\vdash v:A}=
     1^n\xlra{(\eta^\Psi)^n}\home{\Psi}{1\times\Psi}^n\xlra{\home{\Id}{r_i}^n}\home\Psi
     C^n\xlra{d_{\{{p_i}\}_i}}D\home\Psi C$.
     
     By Lemma~\ref{lem:Adequacy}, we have $t\in\com{A}$. Hence, $t\lra^* t'$,
     such that for all $s\in\com\Psi$, $t's\in\com C$.

     Let $w\in\com\Psi$ be a value, and $g=\sem{\vdash w:\Psi}=1\xlra
     w\Psi\xlra{d_{\{1\}}}D\Psi$.
     
     Thus, $\sem{\vdash tw:C}=\sem{\vdash vw:C}=
     1^{n+1}\xlra{f\times g} D\home{\Psi}C\times
     D\Psi\xlra{m_D}D(\home{\Psi}C\times\Psi)\xlra{D\varepsilon} DC$.
     
     By Theorem~\ref{thm:SN}, and Theorem~\ref{thm:progress}, there exists $u$
     value, such that $vw\lra^*u$, and by Theorem~\ref{thm:SR}, $\vdash u:C$.
     So, by Theorem~\ref{thm:soundness}, $\sem{\vdash u:C}=\sem{\vdash vw:C}$.

     Therefore, by the induction hypothesis, $tw\opeq u$. Since $vw\lra^* u$,
     we have $u\opeq vw$. Hence, $tw\opeq vw$, and so, by Lemma~\ref{lem:contextApp},  $t\opeq v$.
   \item $A=SC$.
     By Lemma~\ref{lem:Adequacy}, we have $t\in\com{A}$, thus, $t\lra^*\bigparallel_i\proba{q_i}\sum_j\alpha_jw_{ij}$, with $w_{ij}\in \com C$, and, by the same lemma,
  $v=\bigparallel_i\proba{p_i}\sum_k\beta_kv_{ik}$, with $v_{ik}\in \com C$.
  Hence, by Theorem~\ref{thm:soundness}, we have $\sem{\vdash v:A}=\sem{\vdash
    t:A}=\sem{\vdash \bigparallel_i\proba{q_i}\sum_j\alpha_j w_{ij}:A}$. So,
  $d_{\{{q_i}\}_i}\circ US+\circ w_{11}\times\dots\times w_{nm} =
  d_{\{{p_i}\}_i}\circ US+ \circ v_{11}\times\dots\times v_{n'm'}$.
  Therefore,
  $\mathtt{p_i}=\mathtt{q_i}$, $m=m'$, $n=n'$ and $\sem{\vdash w_{ij}:C}=\sem{\vdash v_{ij}:C}$.
  Therefore, by the induction hypothesis, $w_{ij}\opeq v_{ij}$, and so, $t\opeq v$.
  \qedhere
   \end{itemize}
\end{proof}

\subsubsection{{Adequacy}}
{Adequacy} is a consequence of 
Theorems~\ref{thm:SR} (subject reduction), \ref{thm:SN} (strong normalization),
\ref{thm:progress} (progress), \ref{thm:soundness} (soundness), and \ref{thm:CA}
(computational adequacy).

\begin{thm}[{Adequacy}]\label{thm:FA}
  If $\sem{\vdash t:A}=\sem{\vdash r:A}$, then $t\opeq r$.
\end{thm}
\begin{proof}
  By Theorem~\ref{thm:SN}, $t$ is strongly normalizing, and by
  Theorem~\ref{thm:progress}, it normalizes to a value. Hence, there exists $v$ such
  that $t\lra^*v$, and, by Theorem~\ref{thm:SR}, we have $\vdash v:A$.
  By Theorem~\ref{thm:soundness}, $\sem{\vdash v:A}=\sem{\vdash t:A}=\sem{\vdash
    r:A}$.
  Then, by Theorem~\ref{thm:CA}, $v\opeq t$ and $v\opeq r$. Hence,
  $t\opeq r$.
\end{proof}

\section{Conclusion}\label{sec:conclusion}
We have revisited the concrete categorical semantics for \LambdaS presented in
our LSFA'18 paper \citep{DiazcaroMalherbeLSFA18} by slightly modifying the
operational semantics of the calculus, obtaining {an adequate model} (Theorem~\ref{thm:FA}).

Our semantics highlights the dynamics of the calculus: the algebraic rewriting
(linear distribution, vector space axioms, and typing casts rules) emphasize the
standard behaviour of vector spaces. The natural transformation $n$ takes
these arrows from the Cartesian category $\mathbf{Set}$ to the tensorial
category $\mathbf{Vec}$, where such a behaviour occurs naturally, and then are
taken back to the Cartesian realm with the natural transformation $m$. This way,
rules such as $\rlinr$: $t(u+v)\lra tu+tv$, are simply considered as $Um\circ n$
producing $(u+v,t)\mapsto (u,t)+(v,t)$ in two steps: $(u+v,t) \mapsto
(u+v)\otimes t=u\otimes t+v\otimes t \mapsto (u,t)+(v,t)$, using the fact that,
in $\mathbf{Vec}$, 
$(u+v)\otimes t=u\otimes t+v\otimes t$.

We have constructed a concrete mathematical semantic model of \LambdaS based on
a monoidal adjunction with some extra conditions. The construction
depends crucially on inherent properties of the categories of set and vector
spaces. In a future work we will study the semantics from a more abstract point
of view. Our approach will be based on recasting the concrete model at a more
abstract categorical level of monoidal categories with some axiomatic properties
that are now veiled in the concrete model. Some of these properties, such as to
consider an abstract dagger instead of an inner product, were introduced in the
concrete model from the very beginning, but others are described in
Remark~\ref{rmk:categoria} and
Definitions~\ref{def:if},~\ref{def:norm},~\ref{def:fact},~\ref{def:projk},~\ref{def:d},
and~\ref{def:pi}. Another question we hope to address in future work is the
exact categorical relationship between the notion of amplitude and probability
in the context of the abstract semantics. While some research has been done in
 this topic (e.g.,~\citep{SelingerQPL05,AbramskyCoeckeLICS04}) it differs from
 our point of view in some important aspects: for example to consider a notion of
 abstract normalization as primitive.

\bibliographystyle{abbrvnat}
\bibliography{references}

\appendix
\section{Detailed proofs}\label{ap:appendix}
\xrecap{Proposition}{Independence of derivation}{prop:eqDer}{
  If $\Gamma\vdash t:A$ can be derived with two different derivations $T$ and
  $T'$, then $\sem{T}=\sem{T'}$.
}
\begin{proof}
  Without taking into account rules $\Rightarrow_E$, $\Rightarrow_{ES}$, and
  $S_I$, the typing system is syntax directed. Hence, we give a rewrite system
  on trees such that each time a rule $S_I$ can be applied before or after
  another rule, we choose a direction to rewrite the tree to one of these forms.
  Similarly, rules $\Rightarrow_E$ and $\Rightarrow_{ES}$, can be exchanged in
  few specific cases, so we also choose a direction for these.

  Then, we prove that every rule preserves the semantics of the tree. This
  rewrite system is clearly confluent and normalizing, hence for each tree $T$
  we can take the semantics of its normal form, and so every sequent will have
  one way to calculate its semantics: as the semantics of the normal tree.

  In order to define the rewrite system, we first analyse the typing rules
  containing only one premise, and check whether these rules allow for a
  previous and posterior rule $S_I$. 
  If both are allowed, we choose a direction for the rewrite rule. Then we continue with rules with more than one
  premise and check under which conditions a commutation of rules is possible,
  choosing also a direction.

  \noindent {Rules with one premise:}
  \begin{itemize}
  \item Rule $\alpha_I$:
    \begin{align}
      \label{rule:alpha-SI}
      \vcenter{\infer[^{\alpha_I}]{\Gamma\vdash\alpha.t:SSA}{
      \infer[^{S_I}]{\Gamma\vdash t:SSA}{\Gamma\vdash t:SA}
      }}
      &\lra
        \vcenter{\infer[^{S_I}]{\Gamma\vdash\alpha.t:SSA}{
        \infer[^{\alpha_I}]{\Gamma\vdash\alpha.t:SA}{\Gamma\vdash t:SA}
        }}
    \end{align}
  \item Rules $S_E$, $\Rightarrow_I$, $\times_{E_r}$, $\times_{E_\ell}$,
    $\Uparrow_r$, and $\Uparrow_\ell$: These rules end with a specific
    types not admitting two $S$ in the head position (i.e.~$\B^j\times S\B^{n-j}$,
    $\Psi\Rightarrow A$, $\B$, $\B^{n-1}$, and $S(\Psi\times\Phi)$) hence removing an $S$ or adding an $S$ would not allow the rule to be applied, and hence, these rules followed or
    preceded by $S_I$ 
 cannot commute.
  \end{itemize}
  {Rules with more than one premise:}
  \begin{itemize}
  \item Rule $+_I$:
    \begin{align}\label{rule:sum-SI}
      \vcenter{
      \infer[^{+_I}]{\Gamma,\Delta,\Xi^\B\vdash(t+u):SSA}{
      \infer[^{S_I}]{\Gamma,\Xi^\B\vdash t:SSA}{\Gamma,\Xi^\B\vdash t:SA}
      &
        \infer[^{S_I}]{\Delta,\Xi^\B\vdash u:SSA}{\Delta,\Xi^\B\vdash r:SA}
        }
        }
      &  \lra
        \vcenter{
        \infer[^{S_I}]{\Gamma,\Delta,\Xi^\B\vdash(t+u):SSA}{
        \infer[^{+_I}]{\Gamma,\Delta,\Xi^\B\vdash (t+u):SA}{
        \Gamma,\Xi^\B\vdash t:SA & \Delta,\Xi^\B\vdash u:SA
                                     }
                                     }
                                     }
    \end{align}
  \item Rules $\Rightarrow_E$ and $\Rightarrow_{ES}$:
    \begin{align}\label{rule:arrow-SI}
      \vcenter{\infer[^{\Rightarrow_{ES}}]{{\Delta,\Gamma,\Xi^\B\vdash tu:SA}}{
      \infer[^{S_I}]{\Delta,\Xi^\B\vdash u:S\Psi}{\Delta,\Xi^\B\vdash u:\Psi}
      &
        \infer[^{S_I}]{\Gamma,\Xi^\B\vdash t:S(\Psi\Rightarrow A)}{\Gamma,\Xi^\B\vdash t:\Psi\Rightarrow A}
        }
        }
      &\lra
        \vcenter{
        \infer[^{S_I}]{\Delta,\Gamma,\Xi^\B\vdash tu:SA}{
        \infer[^{\Rightarrow_E}]{\Delta,\Gamma,\Xi^\B\vdash tu:A}{
        \Delta,\Xi^\B\vdash u:\Psi
      &
        \Gamma,\Xi^\B\vdash t:\Psi\Rightarrow A
        }
        }
        }
    \end{align}
  \item Rule $\parallel$:
    \begin{align}\label{rule:parallel-SI}
      \vcenter{\infer[^\parallel]{\Gamma\vdash\proba{p_1}t_1\parallel\dots\parallel\proba{p_n}t_n:SA}{
      \infer[^{S_I}]{\Gamma\vdash t_i:SA}{\Gamma\vdash t_i:A} & \sum_i\mathtt{p_i}=1
                                                                  }
                                                                  }
      &\lra
        \vcenter{
        \infer[^{S_I}]{\Gamma\vdash\proba{p_1}t_1\parallel\dots\parallel\proba{p_n}t_n:SA}{\infer[^{\parallel}]{\Gamma\vdash\proba{p_1}t_1\parallel\dots\parallel\proba{p_n}t_n:A}{\Gamma\vdash t_i:A & \sum_i\mathtt{p_i}=1}}
        }
    \end{align}
  \item Rules $\mathit{If}$ and $\times_I$: these rules end with specific 
    types not admitting two $S$ in the head position (i.e.~$\B\Rightarrow A$ and
    $\Psi\times\Phi$),
    hence removing an $S$ or adding an $S$ would not allow the rule to be
    applied, and hence, these rules followed or preceded by $S_I$ 
    cannot commute.
  \end{itemize}
  The confluence of this rewrite system is easily inferred from the fact that
  there are no critical pairs. The normalization follows from the fact that the
  trees are finite and all the rewrite rules push the $S_I$ 
  to the
  root of the trees.

  It only remains to check that each rule preserves the semantics.
  \begin{itemize}
  \item Rule \eqref{rule:alpha-SI}: The following diagram gives the semantics of
    both trees (we only treat, without loss of generality, the case where $A\neq S(A')$). This diagram commutes by the naturality of $\eta$.
    \begin{center}
      \begin{tikzcd}[column sep=1.1cm]
        \Gamma\ar[r,"t"] &
        USA\ar[d,"U\lambda"]\ar[r,"\eta"]& USUSA\ar[r,"USU\lambda"] &USU(SA\otimes\I)\ar[r,"USU(\Id\otimes\alpha)"] & USU(SA\otimes\I)\ar[d,"USU\lambda^{-1}"]\\
        &U(SA\otimes\I)\ar[r,"U(\Id\otimes\alpha)"]& U(SA\otimes\I)\ar[r,"U\lambda^{-1}"]& USA\ar[r,"\eta"]&USUSA
      \end{tikzcd}
    \end{center}
  \item Rule \eqref{rule:sum-SI}: The following diagram gives the semantics of
    both trees (we only treat, without lost of generality, the case where $A\neq SA'$).
    \begin{center}
      \begin{tikzcd}[row sep=10pt]
        \Gamma\times\Xi^\B\times\Delta\times\Xi^\B\ar[r,"t\times r"] & USA\times USA\ar[r,"\eta\times\eta"]\ar[d,"g_0=\Id"] & USUSA\times USUSA\ar[d,"g_1"] \\
        \Gamma\times\Delta\times\Xi^\B\times\Xi^\B \ar[u,"\Id\times\sigma\times\Id"] &USA\times USA\ar[d,"+"]&US(USA\times USA)\ar[d,"US+"]\\
        \Gamma\times\Delta\times\Xi^\B\ar[u,"\Id\times\delta"]
        &USA\ar[r,"\eta"]&USUSA
      \end{tikzcd}
    \end{center}
    This diagram commutes since the maps are as follows:
    
    $(t,r)\stackrel{\eta\times\eta}{\mapsto} (t,r)\stackrel{g_1}\mapsto (t,r)\stackrel{US+}\mapsto t+r$\quad
    and\quad
    $(t,r)\stackrel{\Id}\mapsto (t,r)\stackrel{+}\mapsto
    t+r\stackrel\eta\mapsto t+r$
  \item Rule \eqref{rule:arrow-SI}: The following diagram gives the semantics
    of both trees. The lower diagram with the dotted arrow commutes by the
    naturality of $\eta$, and the upper diagram commutes because $\eta$ is a
    monoidal natural transformation.
    \begin{center}
      \begin{tikzcd}[row sep=10pt]
        \Delta\times\Xi^\B\times\Gamma\times\Xi^\B\ar[d,"u\times t"] && \Delta\times\Gamma\times\Xi^\B\ar[ll,"(\Id\times\sigma\times\Id)\circ(\Id\times\delta)",swap]\\
        \Psi\times\home\Psi A\ar[d,"\varepsilon^\Psi"]\ar[rrd,dotted,"\eta"]\ar[r,"\eta^2"]&US\Psi\times US(\home\Psi A)\ar[r,"n"] &U(S\Psi\otimes S(\home\Psi A))\ar[d,"U(m)"]\\
        A\ar[r,"\eta"]& USA&US(\Psi\times\home\Psi A)\ar[l,"US(\varepsilon^\Psi)"]\\
      \end{tikzcd}
    \end{center}
  \item Rule \eqref{rule:parallel-SI}: The following diagram gives the
    semantics of both trees.
    \begin{center}
      \begin{tikzcd}[column sep=1.3cm,row sep=10pt]
        \Gamma\ar[r,"\delta"] &\Gamma^n\ar[r,"t_1\times\dots\times t_n"] &
        A^n\ar[r,"\eta^n"]\ar[d,"d_{\{{p_i}\}_i}"] & USA^n\ar[d,"d_{\{{p_i}\}_i}"]\\
        &&DA\ar[r,"\eta"]& USDA=DUSA
      \end{tikzcd}
    \end{center}
    The mappings are as follows:

    $(a_1,\dots,a_n)\stackrel{\eta^n}\mapsto (a_1,\dots,a_n)\stackrel{d_{\{{p_i}\}_i}}\mapsto \sum_i\mathtt{p_i}\chi_{a_i}$
    \ and\ 
    $(a_1,\dots,a_n)\stackrel{d_{\{{p_i}\}_i}}\mapsto\sum_i\mathtt{p_i}\chi_{a_i}\stackrel{\eta}\mapsto\sum_i\mathtt{p_i}\chi_{a_i}$
    \qedhere
  \end{itemize}
\end{proof}
\begin{lem}\label{lem:W}
  If $\Gamma\vdash t:A$, then $\Gamma,\Delta^\B\vdash t:A$. Moreover,
  $\sem{\Gamma,\Delta^\B\vdash t:A}=\sem{\Gamma\vdash t:A}\circ(\Id\times{!})$.
\end{lem}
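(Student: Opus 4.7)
The plan is to establish both claims simultaneously by structural induction on the derivation of $\Gamma \vdash t:A$. The typability part is routine: in each axiom rule ($\tax$, $\tax_{\vec 0}$, $\tax_{\ket 0}$, $\tax_{\ket 1}$) a basis context $\Theta^\B$ is already present, so we simply extend it to $\Theta^\B, \Delta^\B$, which remains a basis context. In each inductive rule we apply the induction hypothesis to its premises and then add $\Delta^\B$ to whichever shared $\Theta^\B$ appears in the conclusion (or, if no shared basis context is present in the rule, we add it to one of the subcontexts and fold it in). Since the conditions on all rules only require basis-typed variables to live in $\Theta^\B$ contexts, no constraint is violated.

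For the semantic equation, the guiding observation is that any variable in $\Delta^\B$ enters the interpretation only through the terminal map ${!}:\Delta^\B\to\One$, since the basis context is first collapsed to $\One$ before being used in any construction. Concretely, for the axiom cases the semantics begins with a map of the form ${!}\times\Id$ or ${!}$ composed with some constant arrow, and the required identity reduces to ${!}_{\Theta^\B\times\Delta^\B}={!}_{\Theta^\B}\circ(\Id\times{!}_{\Delta^\B})$, which is immediate by the uniqueness of maps into $\One$.

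For the inductive cases, I apply the induction hypothesis to each premise and then check that the categorical glue used to assemble the conclusion (the unit $\eta$, counit $\varepsilon$, multiplication $\mu$, the natural transformations $m$, $n$, the swap $\sigma$, the diagonal $\delta$, and the various functor applications) commutes with post-composition by $\Id\times{!}$. For rules like $\alpha_I$, $S_I$, $S_E$, $\Rightarrow_I$, $\times_{Er}$, $\times_{E\ell}$, $\Uparrow_r$, $\Uparrow_\ell$, and $\parallel$ this is transparent, since the extra factor $\Delta^\B$ is threaded only through the premise's context and does not interact with the constructors. For $\tif$, the interaction with the $\mathsf{curry}$ is again a naturality check.

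The main obstacle, and the place where care is needed, is the class of rules with a shared basis context handled by a diagonal, namely $+_I$, $\times_I$, $\Rightarrow_E$, and $\Rightarrow_{ES}$. In these cases the semantics begins with $\Id\times\delta$ on $\Theta^\B$ followed by a swap, so after weakening we must compare $(\Id\times\delta_{\Theta^\B\times\Delta^\B})$ in the larger context to $(\Id\times\delta_{\Theta^\B})\circ(\Id\times{!}_{\Delta^\B})$ modulo shuffling. Both sides agree because ${!}\circ\delta_{\Delta^\B}=({!}\times{!})\circ\delta_{\Delta^\B}$, and since $\Delta^\B$ is only used via ${!}$ in each premise (by the induction hypothesis), duplicating it and then projecting both copies is the same as projecting it once. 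This reduces the entire verification to a series of routine diagram chases combining the naturality of $\delta$ and $\sigma$ with the universal property of $\One$; no conceptual difficulty arises, only bookkeeping, which is the reason the detailed calculation belongs in the appendix.
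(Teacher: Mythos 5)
Your proposal is correct and follows essentially the same route as the paper: the paper's own proof simply notes that the derivation is weakened by inserting $\Delta^\B$ into the axioms' contexts, and that since $FV(t)\cap\Delta^\B=\emptyset$ the new variables enter the interpretation only through the terminal map, which is exactly the content of your induction. You have merely spelled out the case analysis that the paper leaves implicit.
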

\begin{proof}
  A derivation of $\Gamma\vdash t:A$ can be turned
  into a derivation $\Gamma,\Delta^\B\vdash t:A$ just by adding $\Delta^\B$
  in its axioms' contexts.
  Since $FV(t)\cap\Delta^\B=\emptyset$, we have
  $\sem{\Gamma,\Delta^\B\vdash t:A}=\sem{\Gamma\vdash t:A}\circ(\Id\times{!})$.
\end{proof}

\xrecap{Lemma}{Substitution}{lem:substitution}
   {If $x:\Psi\vdash t:A$ and $\vdash r:\Psi$, the following diagram
   commutes:}
   \begin{center}
   \begin{tikzcd}[row sep=9pt]
       \One\ar[rr,"(r/x)t"]\ar[rd,"r"] & & A\\
       &\Psi\ar[ur,"t"]&
     \end{tikzcd}
   \end{center}
   That is,
   $\sem{\vdash(r/x)t:A}=\sem{x:\Psi\vdash t:A}\circ\sem{\vdash r:\Psi}$.
\begin{proof}
 We prove, more generally, that 
 if $\Gamma',x:\Psi,\Gamma\vdash t:A$ and $\vdash r:\Psi$, the following diagram
 commutes:
 \begin{center}
   \begin{tikzcd}[row sep=9pt]
     \Gamma'\times\Gamma\ar[r,"(r/x)t"]\ar[d,dashed,"\approx"] & A\\
     \Gamma'\times\One\times\Gamma\ar[r,"\Id\times r\times\Id"] &\Gamma'\times\Psi\times\Gamma\ar[u,"t"]
   \end{tikzcd}
 \end{center}
 That is,
 $\sem{\Gamma',\Gamma\vdash(r/x)t:A}=\sem{\Gamma',x:\Psi,\Gamma\vdash t:A}\circ(\Id\times\sem{\vdash r:\Psi}\times\Id)$.
 Then, by taking $\Gamma=\Gamma'=\emptyset$, we get the result stated by the lemma.
 
 We proceed by induction on the derivation of $\Gamma',x:\Psi,\Gamma\vdash t:A$.
  In this proof, we write $d=(\Id\times\sigma\times\Id)\circ(\Id\times\delta)$.
  Also, we take the rules $\alpha_I$ and $+_I$ with $m=1$, the generalization is straightforward.
  \begin{itemize}
  \item $\vcenter{\infer{\Delta^\B,x:\Psi\vdash x:\Psi}{}}$\qquad
    By Lemma~\ref{lem:W}, $\sem{\Delta^\B\vdash r:\Psi}=\sem{\vdash
      r:\Psi}\circ{!}$. Hence,
    \begin{center}
      \begin{tikzcd}[row sep=10pt]
        \Delta^\B\ar[dd,dashed,"\approx"] \arrow[r,"{!}"] &\One\arrow[r,"r"] & \Psi \\
        & &\One\times\Psi\approx\Psi\ar[u,"\Id"]\\
        \Delta^\B\times\One\ar[ruu,dotted,"{!}"]\ar[rr,"\Id\times r"] && \Delta^\B\times\Psi\ar[u,"{!}\times\Id"]
      \end{tikzcd}
    \end{center}
    This diagram commutes by the naturality of the projection.

  \item
    $\vcenter{\infer{\Delta^\B,x:\B^n\vdash\z:SA}{}}$\qquad
    \begin{center}
      \begin{tikzcd}
	\Delta^\B\ar[d,dashed,"\approx"]\ar[r,"{!}"]&\One\ar[r,"\hat{\vec 0}"] &USA\\
	\Delta^\B\times\One\ar[r,"\Id\times r"]&\Delta^\B\times\B^n\ar[u,"{!}"]
      \end{tikzcd}
    \end{center}
    This diagram commutes by the naturality of the projection.

  \item The cases $\Delta^\B,x:\B^n\vdash\ket 0:\B$ and $\Delta^B,x:\B^n\vdash\ket 1:\B$ are analogous to the previous case.

  \item
    $\vcenter{\infer{\Gamma,x:\Psi,\Gamma\vdash\alpha.t:SA}{\Gamma',x:\Psi,\Gamma\vdash
        t:SA}}$
    
	\begin{center}
	  \begin{tikzcd}[column sep=1cm]
	    \Gamma'\times\Gamma\ar[d,dashed,"\approx"]\ar[r,"(r/x)t"] & USA\ar[r,"U(\lambda)"] &  U(SA\otimes\I) \ar[r,"U(\Id\otimes\alpha)"] & U(SA\otimes\I)\ar[r,"U(\lambda^{-1})"] & USA \\
	    \Gamma'\times\One\times\Gamma\ar[r,"\Id\times r\times\Id"] &\Gamma'\times\Psi\times\Gamma\ar[u,"t"] &&&
	  \end{tikzcd}
	\end{center}
    This diagram commutes by the induction hypothesis.

  \item $\vcenter{\infer{\Gamma',x:\Psi,\Gamma,\Delta,\Xi^\B\vdash t+u:SA}{\Gamma',x:\Psi,\Gamma,\Xi^\B\vdash t:SA & \Delta,\Xi^\B\vdash u:SA}}$
    \begin{center}
      \begin{tikzcd}[column sep=5mm]
        & USA &USA\times USA\ar[l,swap,"+"]  \\
        \Gamma'\times\Gamma\times\Delta\times\Xi^\B\ar[d,dashed,"\approx"]\ar[r,"d"]&\Gamma'\times\Gamma\times\Xi^\B\times\Delta\times\Xi^\B\ar[r,"(r/x)t\times u"] & USA\times USA\ar[u,"\Id"]\\
        \Gamma'\!\times\!\One\!\times\!\Gamma\!\times\!\Delta\!\times\Xi^\B \ar[r,"\Id\times r\times\Id"] &\Gamma'\!\times\!\Psi\!\times\!\Gamma\!\times\!\Delta\!\times\!\Xi^\B\ar[r,"d"]&\Gamma'\times\Psi\times\Gamma\times\Xi^\B\times\Delta\times\Xi^\B\ar[u,"t\times u"] 
      \end{tikzcd}
    \end{center}
    This diagram commutes by the induction hypothesis.
    
    If $x\in FV(u)$ or $x\in FV(u)\cap FV(t)$ the cases are analogous.

  \item $\vcenter{\infer{\Gamma',x:\Psi,\Gamma\vdash t:SA}{\Gamma',x:\Psi,\Gamma\vdash t:A}}$
    \begin{center}
      \begin{tikzcd}
        \Gamma'\times\Gamma\ar[r,"(r/x)t"]\ar[d,dashed,"\approx"] & A \ar[r,"\eta"] & USA\\
        \Gamma'\times\One\times\Gamma\ar[r,"\Id\times r\times\Id"] &\Gamma'\times\Psi\times\Gamma\ar[u,"t"]
      \end{tikzcd}
    \end{center}
    This diagram commutes by the induction hypothesis.

  \item $\vcenter{\infer{\Gamma',x:\Psi,\Gamma\vdash\pi_j t:\B^j\times S\B^{n-j}}{\Gamma',x:\Psi,\Gamma\vdash t:S\B^n }}$
    \begin{center}
      \begin{tikzcd}
        \Gamma'\times\Gamma\ar[r,"(r/x)t"]\ar[d,dashed,"\approx"] & US\B^n \ar[r,"\pi_j"] &
        D(\B^j\times S\B^{n-j})\\
        \Gamma'\times\One\times\Gamma\ar[r,"\Id\times r\times\Id"] &\Gamma'\times\Psi\times\Gamma\ar[u,"t"]
      \end{tikzcd}
    \end{center}
    This diagram commutes by the induction hypothesis.

  \item $\vcenter{\infer{\Gamma',x:\Psi,\Gamma\vdash\ite{}ts:\B\Rightarrow A}{\Gamma',x:\Psi,\Gamma\vdash t:A &
        \Gamma',x:\Psi,\Gamma\vdash s:A}}$

    \begin{center}
      \begin{tikzcd}
        \Gamma'\times\Gamma\ar[d,dashed,"\approx"]\ar[r,"(r/x)G"]   & \lbrack\B,A\rbrack\\
        \Gamma'\times\One\times\Gamma\ar[r,"\Id\times r\times\Id"] & \Gamma'\times\Psi\times\Gamma\ar[u,"G"]
      \end{tikzcd}
    \end{center}
    where
    $(r/x)G =
    \mathsf{curry}(\mathsf{uncurry}(f_{(r/x)t,(r/x)s})\circ\mathsf{swap})$ and
    $G =\mathsf{curry}(\mathsf{uncurry}(f_{t,s})\circ\mathsf{swap})$

    By the induction hypothesis, $(r\times\Id)\circ t=(r/x)t$ and
    $(r\times\Id)\circ s=(r/x)s$, hence, $(r\times\Id)\circ
    f_{t,s}=f_{(r/x)t,(r/x)s}$ and so $(r/x)G=(r\times\Id)\circ G$, which makes
    the diagram commute.

  \item $\vcenter{\infer{\Gamma',x:\Psi,\Gamma\vdash \lambda y{:}\Phi.t:\Phi\Rightarrow A}{\Gamma',x:\Psi,\Gamma,y:\Phi\vdash t:A}}$

    \begin{center}
      \begin{tikzcd}
        \Gamma'\times\Gamma\ar[d,dashed,"\approx"]\ar[r,"\eta^\Phi"] &
        \home\Phi{\Gamma'\times\Gamma\times\Phi}\ar[r,"\home{\Id}{(r/x)t}"]\ar[dr,dotted,near
        start,"\home\Id{\Id\times r\times\Id}",swap] &
        \home\Phi A\\ 
        \Gamma'\times\One\times\Gamma\ar[r,"\Id\times r\times\Id"] &\Gamma'\times\Psi\times\Gamma\ar[r,"\eta^\Phi"]
        &\home \Phi{\Gamma'\times\Psi\times\Gamma\times\Phi}\ar[u,"\home\Id t"]
      \end{tikzcd}
    \end{center}
    The dotted arrow divides the diagram in two. The upper part commutes by the
    IH and the functoriality of $\home\Phi{\--}$, while the lower part commutes by the naturality of $\eta^\Phi$.
  \item $\vcenter{\infer{\Delta,\Gamma',x:\Psi,\Gamma,\Xi^\B\vdash tu:A}
      {
        \Delta,\Xi^\B\vdash u:\Phi
        &
        \Gamma',x:\Psi,\Gamma,\Xi^\B\vdash t:\Phi\Rightarrow A
      }}$
    \begin{center}
      \begin{tikzcd}
        && A\\
        \Delta\times\Gamma'\times\Gamma\times\Xi^\B\ar[dd,dashed,near start,"\approx"]\ar[r,"d"]&\Delta\times\Xi^\B\times\Gamma'\times\Gamma\times\Xi^\B\ar[d,dotted,"\Id\times(r/x)t"]\ar[r,"u\times (r/x)t"] &\Phi\times \home\Phi
        A\ar[u,"\varepsilon^\Phi"] \\
        &\Delta\times\home\Phi A\ar[ru,dotted,"u\times\Id"]\\
        \Delta\times\Gamma'\times\One\times\Gamma\times\Xi^\B\ar[r,"\Id\times r\times\Id"] &\Delta\times\Gamma'\times\Psi\times\Gamma\times\Xi^\B\ar[r,"d"]&\Delta\times\Xi^\B\times\Gamma'\times\Psi\times\Gamma\times\Xi^\B\ar[uu,"u\times
        t"]\ar[ul,dotted,"\Id\times t"]
      \end{tikzcd}
    \end{center}
    This diagram commutes by the induction hypothesis and the functoriality of
    the product.

  \item $\vcenter{\infer{\Delta',x:\Psi,\Delta,\Gamma,\Xi^\B\vdash tu:A}
      {
        \Delta',x:\Psi,\Delta,\Xi^\B\vdash u:\Phi
        &
        \Gamma,\Xi^\B\vdash t:\Phi\Rightarrow A
      }}$

    Analogous to previous case. \\

  \item $\vcenter{\infer{\Delta,\Gamma',x:\Psi,\Gamma,\Xi^\B\vdash tu:SA}
      {
        \Delta,\Xi^\B\vdash u:S\Phi
        &
        \Gamma',x:\Psi,\Gamma,\Xi^\B\vdash t:S(\Phi\Rightarrow A)
      }}$
    \begin{center}
      \begin{tikzcd}
        \Delta\times\Gamma'\times\Gamma\times\Xi^\B\arrow[r, dashed, near start,"\approx"] \arrow[d, "d"] & \Delta\times\Gamma'\times\One\times\Gamma\times\Xi^\B\arrow[r, "\Id\times r\times\Id"] & \Delta\times\Gamma'\times\Psi\times\Gamma\times\Xi^\B \arrow[d, "d"] \\
        \Delta\times\Xi^\B\times\Gamma'\times\Gamma\times\Xi^\B \arrow[r,"u\times (r/x)t"] & US\Phi\times US\home\Phi A\arrow[ld, "n"] & \Delta\times\Xi^\B\times\Gamma'\times\Psi\times\Gamma\times\Xi^\B \arrow[l,"u\times t"] \\
        U(S\Phi\otimes S\home\Phi A)\arrow[r, "Um"] & US(\Phi\times\home\Phi A)\arrow[r, "US\varepsilon^\Phi"] & USA            
      \end{tikzcd}
    \end{center}

    This diagram commutes by the induction hypothesis and the functoriality of
    the product.
  \item $\vcenter{\infer{\Delta'x:\Psi,\Delta,\Gamma,\Xi^\B\vdash tu:SA}
      {
        \Delta',x:\Psi,\Delta,\Xi^\B\vdash u:S\Phi
        &
        \Gamma,\Xi^\B\vdash t:S(\Phi\Rightarrow A)
      }}$

    Analogous to previous case.\\

  \item $\vcenter{\infer{\Gamma',x:\Psi,\Gamma,\Delta,\Xi^\B\vdash t\times u:\Phi\times
        \Upsilon}{\Gamma',x:\Psi,\Gamma,\Xi^\B\vdash t:\Phi & \Delta,\Xi^\B\vdash u:\Upsilon}}$
    \begin{center}
      \begin{tikzcd}
        \Gamma'\times\Gamma\times\Delta\times\Xi^\B\ar[d,dashed,near start,"\approx"]\ar[r,"d"]&\Gamma'\times\Gamma\times\Xi^\B\times\Delta\times\Xi^\B\ar[r,"(r/x)t\times
        u"] & \Phi\times \Upsilon\\
        \Gamma'\times\One\times\Gamma\times\Delta\times\Xi^\B
        \ar[r,"\Id\times r\times\Id"]&
        \Gamma'\times\Psi\times\Gamma\times\Delta\times\Xi^\B
        \ar[r,"d"]
        &\Gamma'\times\Psi\times\Gamma\times\Xi^\B\times\Delta\times\Xi^\B\ar[u,"t\times u"]
      \end{tikzcd}
    \end{center}
    This diagram commutes by the induction hypothesis and coherence results.

  \item $\vcenter{\infer{\Gamma,\Delta',x:\Psi,\Delta,\Xi^\B\vdash t\times u:\Phi\times
        \Upsilon}{\Gamma,\Xi^\B\vdash t:\Phi & \Delta',x:\Psi,\Delta,\Xi^\B\vdash u:\Upsilon}}$

    Analogous to previous case.\\

  \item $\vcenter{\infer{\Gamma',x:\Psi,\Gamma\vdash\head\ t:\B}{\Gamma',x:\Psi,\Gamma\vdash t:\B^n}}$
    \begin{center}
      \begin{tikzcd}
        \Gamma'\times\Gamma\ar[d,dashed,near start,"\approx"]\ar[r,"(r/x)t"]
        &\B^n\ar[r,"\head"] &\B\\
        \Gamma'\times\One\times\Gamma\ar[r,"\Id\times r\times\Id"] &\Gamma'\times\Psi\times\Gamma\ar[u,"t"]
      \end{tikzcd}
    \end{center}
    This diagram commutes by the induction hypothesis.

  \item $\vcenter{\infer{\Gamma',x:\Psi,\Gamma\vdash\tail\ t:\B^{n-1}}{\Gamma',x:\Psi,\Gamma\vdash t:\B^n}}$
    \begin{center}
      \begin{tikzcd}
        \Gamma'\times\Gamma\ar[d,dashed,near start,"\approx"]\ar[r,"(r/x)t"]
        &\B^n\ar[r,"\tail"] &\B^{n-1}\\
        \Gamma'\times\One\times\Gamma\ar[r,"\Id\times r\times\Id"] &\Gamma'\times\Psi\times\Gamma\ar[u,"t"]
      \end{tikzcd}
    \end{center}
    This diagram commutes by the induction hypothesis.
  \item $\vcenter{\infer{\Gamma',x:\Psi,\Gamma\vdash\Uparrow_r t:S(\Phi\times
        \Upsilon)}{\Gamma',x:\Psi,\Gamma\vdash t:S(S\Phi\times \Upsilon)}}$
    \begin{center}
      \begin{tikzcd}
        \Gamma'\times\Gamma\ar[d,dashed,near start,"\approx"]\ar[r,"(r/x)t"] & US(US\Phi\times
        \Upsilon)\ar[r,"U(\Id\times n)"] & US(US\Phi\times US(\Upsilon))\ar[r,"U(n)"]
        &US(U(S\Phi\otimes S(\Upsilon))\ar[d,"USU(m)"]\\
        \Gamma'\times\One\times\Gamma\ar[r,"\Id\times r\times\Id"]
        &\Gamma'\times\Psi\times\Gamma\ar[u,"t"]
        & US(\Phi\times \Upsilon)
        & USUS(\Phi\times \Upsilon)\ar[l,"\mu"]
      \end{tikzcd}
    \end{center}
    This diagram commutes by the induction hypothesis.

  \item $\vcenter{\infer{\Gamma',x:\Psi,\Gamma\vdash\Uparrow_\ell t:S(\Phi\times \Upsilon)}{\Gamma',x:\Psi,\Gamma\vdash t:S(\Phi\times S(\Upsilon))}}$
    
    Analogous to previous case.\\

  \item 
    $\vcenter{\infer{\Gamma',x:\Psi,\Gamma\vdash\bigparallel_{i=1}^n\proba{p_i}t_i:A}{\Gamma',x:\Psi,\Gamma\vdash
        t_i:A & i=1,\dots,n & \sum_{i=1}^n\mathtt{p_i}=1}}$

    \begin{center}
      \begin{tikzcd}[column sep=large]
        \Gamma'\times\Gamma\ar[r,"\delta"]\ar[d,dashed,"\approx"] &
        (\Gamma'\times\Gamma)^n\ar[r,"\prod_{i=1}^n(r/x)t_i"] &
        A^n\ar[r,"d_{\{{p_i}\}_i}"] & D(A)\\
        \Gamma'\times\One\times\Gamma\ar[r,"\Id\times r\times\Id"]
        &\Gamma'\times\Psi\times\Gamma\ar[r,"\delta"]
        &(\Gamma'\times\Psi\times\Gamma)^n\ar[u,"\prod_{i=1}^nt_i"]
      \end{tikzcd}
    \end{center}
    This diagram commutes by the induction hypothesis.
    \qedhere
   \end{itemize}
 \end{proof}
\xrecap{Theorem}{Soundness}{thm:soundness}{
  If $\vdash t:A$, and $t\lra r$,
  then
  $\sem{\vdash t:A} = \sem{\vdash r:A}$.
}
\begin{proof}
  By induction on the rewrite relation, using the first derivable type for each
  term.
  We take the rules $\alpha_I$ and $+_I$ with $m=1$, the generalization is straightforward.
  \begin{itemize}
  \item \textbf{\rcomm} $(t+r)=(r+t)$. We have 
    \[
      \vcenter
      {
        \infer{\vdash (t+r):SA}
        {
          \vdash t:SA & \vdash r:SA 
        }
      }
      \qquad\textrm{and}\qquad
      \vcenter
      {
        \infer{\vdash (r+t):SA}
        {
          \vdash r:SA & \vdash t:SA
        }
      }
    \]
    Then
    \begin{center}
      \begin{tikzcd}
        \One\ar[r,dashed,"\approx"] & \One^2\ar[r,bend right,"t\times
        r",swap]\ar[r,bend left,"r\times t"] & USA^2\ar[r,"\Id"] & USA\times
        USA\ar[r,"+"] & USA
      \end{tikzcd}
    \end{center}
    This diagram commutes by the commutativity of sum in $SA$ as vector space.

  \item \textbf{\rassoc} $((t+r)+s)=(t+(r+s))$. We have
    \[
      \vcenter{
        \infer{\vdash((t+r)+s):SA}
        {
          \infer{\vdash(t+r):SA}{\vdash t:SA &\vdash r:SA}
          &
          \vdash s:SA
        }
      }
      \qquad\textrm{and}\qquad
      \vcenter{
        \infer{\vdash(t+(r+s)):SA}
        {
          \vdash t:SA
          &
          \infer{\vdash (r+s):SA}{\vdash r:SA & \vdash s:SA}
        }
      }
    \]
    Then
    \begin{center}
      \begin{tikzcd}
        \One\ar[r,dashed,"\approx"] & \One^3\ar[r,"t\times
        r\times s"] & USA^3\ar[d,"g_0\times\Id"]\ar[r,"\Id\times g_0"]& USA^3\ar[d,"\Id\times +"] & USA\\
        &&USA^3\ar[r,"+\times\Id"]& USA^2\ar[r,"\Id"] & USA^2\ar[u,"+"]
      \end{tikzcd}
    \end{center}
    This diagram commutes by the associativity of sum in $SA$ as vector space.
  \item \textbf{($\beta_b$)} If $b$ has type $\B^n$ and $b\in\tbasis$, then $(\lambda x{:}{\B^n}.t)b\lra (b/x)t$.
    We have
    \[
      \vcenter{\infer{\vdash(\lambda x{:}{\B^n}.t)b:A}
        {
          \infer{\vdash\lambda x{:}{\B^n}.t:\B^n\Rightarrow A}
          {x:{\B^n}\vdash t:A}
          &
          \vdash b:\B^n
        }}
      \qquad\textrm{and}\qquad
      \vcenter{\infer{\vdash (b/x)t:A}{}}
    \]
    Then
    \begin{center}
      \begin{tikzcd}
        \One^2\ar[r,"b\times\eta^{\B^n}"]&\B^n\times\home{\B^n}{\One\times\B^n}\approx\B^n\times\home{\B^n}{\B^n}\ar[r,"\Id\times\home\Id t"]&\B^n\times\home{\B^n}A\ar[d,"\varepsilon"]\\
        \One\ar[u,dashed,"\approx"]\ar[rr,"(b/x)t"]& & A
      \end{tikzcd}
    \end{center}
    This diagram commutes using Lemma~\ref{lem:substitution}.

  \item \textbf{($\beta_n$)} If $u$ has type $S\Psi$, then $(\lambda x{:}{S\Psi}.t)u\lra  (u/x)t$.
    We have
    \[
      \vcenter{
        \infer{\vdash(\lambda x{:}{S\Psi}.t)u:A}
        {
          \infer{\vdash\lambda x{:}{S\Psi}.t:S\Psi\Rightarrow A}
          {x:{S\Psi}\vdash t:A}
          &
          \vdash u:S\Psi
        }
      }
      \qquad\textrm{and}\qquad
      \vcenter{
        \infer{\vdash (b/x)t:A}{}
      }
    \]
    Then
    \begin{center}
      \begin{tikzcd}
        \One^2\ar[rr,"u\times\eta^{US\Psi}"]&&US\Psi\times\home{US\Psi}{\One\times
          US\Psi}\approx US\Psi\times\home{US\Psi}{US\Psi}\ar[r,"\Id\times\home\Id t"]&S\Psi\times\home{S\Psi}A\ar[d,"\varepsilon"]\\
        \One\ar[u,dashed,"\approx"]\ar[rrr,"(b/x)t"]& && A
      \end{tikzcd}
    \end{center}
    This diagram commutes using Lemma~\ref{lem:substitution}.

  \item \textbf{\rlinr} If $t$ has type $\B^n\Rightarrow A$, then
    $t(u+v)\lra tu+tv$.
    We have
    \[
      \infer{\vdash t(u+v):SA}
      {
        \infer{\vdash t:S(\B^n\Rightarrow A)}{\vdash t:\B^n\Rightarrow A}
        &
        \infer{\vdash u+v:S\B^n }
        {
          \vdash u:S\B^n  & \vdash v:S\B^n 
        }
      }
    \]
    and
    \[
      \infer{\vdash tu+tv:SA}
      {
        \infer{\vdash tu:SA}
        {
          \infer{\vdash t:S(\B^n\Rightarrow A)}{\vdash t:\B^n\Rightarrow A} & \vdash u:S\B^n 
        }
        &
        \infer{\vdash tv:SA}
        {
          \infer{\vdash t:S(\B^n\Rightarrow A)}{\vdash t:\B^n\Rightarrow A}  & \vdash v:S\B^n 
        }
      }
    \]
    \begin{center}
      \begin{tikzcd}
        US\B^n \times US([\B^n,A])\ar[r,"n"] 
        & U(S\B^n \otimes S([\B^n,A]))\ar[d,"U(m)"] & USA\\
        (US\B^n)^2\times
        US(\home{\B^n}A)\ar[u,"+\times\Id"]&US(\B^n\times[\B^n,A])\ar[ur,"US(\varepsilon^{\B^n})"]&
        USA^2\ar[u,"+"]\\
        (US\B^n)^2\times[\B^n,A]\ar[u,"g_0\times \eta"] & & USA^2\ar[u,"g_0"]\\
        \One^3\ar[u,"u\times v\times t"] & & US(\B^n\times\home{\B^n}A)^2\ar[u,"US(\varepsilon^{\B^n})^2"]\\
        \One\ar[d,dashed,"\approx"]\ar[u,dashed,"\approx"]& &
        U(S\B^n \otimes S(\home{\B^n}A))^2\ar[u,"U(m)^2"]\\
        \One^4\ar[r,"u\times t\times v\times t"]
        &(US\B^n \times\home{\B^n}A)^2\ar[r,"(\Id\times\eta)^2"] &
        (US\B^n \times US(\home{\B^n}A))^2\ar[u,"n^2"]
      \end{tikzcd}
    \end{center}
    The mappings are as follows:

    \noindent
    $\ast\mapsto(\ast,\ast,\ast)\mapsto(u,v,t)\mapsto(u,v,t)\mapsto (u+v,t)\mapsto
    (u+v)\otimes t = u\otimes t+v\otimes t\mapsto (u,t)+(v,t)\mapsto t(u)+t(v)$

    \noindent
    $
    \ast\mapsto(\ast,\ast,\ast,\ast)\mapsto(u,t,v,t)\mapsto(u,t,v,t)\mapsto
    (u\otimes t,v\otimes t)\mapsto (u,t,v,t)\mapsto (t(u),t(v))\mapsto
    (t(u),t(v))\mapsto t(u)+t(v)
    $  

  \item \textbf{\rlinscalr} If $t$ has type $\B^n\Rightarrow A$, then
    $t(\alpha.u)\lra \alpha.(tu)$. We have
    \[
      \vcenter{
        \infer{\vdash t(\alpha.u):SA}
        {
          \infer{\vdash t:S(\B^n\Rightarrow A)}{\vdash t:\B^n\Rightarrow A}
          &
          \infer{\vdash\alpha.u:S\B^n }{\vdash u:S\B^n }
        }
      }
      \qquad\textrm{and}\qquad
      \vcenter{
        \infer{\vdash\alpha.(tu):SA}
        {
          \infer{\vdash tu:SA}
          {
            \infer{\vdash t:S(\B^n\Rightarrow A)}{\vdash t:\B^n\Rightarrow A}
            &
            \vdash u:S\B^n 
          }
        }
      }
    \]
    Then
    \begin{center}
      \begin{tikzcd}[column sep=1.2cm]
        U(S\B^n \otimes\I)\times
        US\home{\B^n}A\ar[r,"U(\Id\otimes\alpha)\times\Id"]&U(S\B^n \otimes\I)\times
        US\home{\B^n}A\ar[r,"U\lambda^{-1}\times\Id"]&US\B^n \times US\home{\B^n}A\ar[d,"n"]\\
        US\B^n \times\home{\B^n}A\ar[u,"U\lambda\times\eta"]&&U(S\B^n \otimes
        S\home{\B^n}A)\ar[d,"Um"]\\
        \One^2\ar[u,"u\times t"]\ar[d,"u\times t"] & &
        US(\B^n\times\home{\B^n}A)\ar[d,"US\varepsilon^{\B^n}"]\\
        US\B^n \times\home{\B^n}A \ar[d,"\Id\times\eta"]
        &U(SA\otimes\I)\ar[r,"U\lambda^{-1}"]& USA\\
        US\B^n \times US\home{\B^n}A\ar[d,"n"] &U(SA\otimes\I)\ar[u,"U(\Id\otimes\alpha)"]&\\        
        U(S\B^n \otimes S\home{\B^n}A)\ar[r,"Um"]&
        US(\B^n\times\home{\B^n}A)\ar[r,"US\varepsilon^{\B^n}"]& USA\ar[ul,"U\lambda",swap]
      \end{tikzcd}
    \end{center}
    The mappings are as follows:
    
    \noindent
    $
    (\ast,\ast)\mapsto(u,t)\mapsto(u\otimes
    1,t)\mapsto(u\otimes\alpha,t)\mapsto(\alpha.u,t)\mapsto\alpha.u\otimes
    t=\alpha.(u\otimes t)\mapsto\alpha.(u,t)\mapsto\alpha.t(u)
    $

    \noindent
    $
    (\ast,\ast)\mapsto (u,t)\mapsto (u,t)\mapsto u\otimes t\mapsto (u,t)\mapsto
    t(u)\mapsto t(u)\otimes 1\mapsto t(u)\otimes\alpha\mapsto \alpha.t(u)
    $

   \item \textbf{\rlinzr} If $t$ has type $\B^n\Rightarrow A$, then
    $t\z[\B^n]\lra \z$. We have
    \[
      \vcenter{
        \infer{\vdash t\z[\B^n]:SA}
        {
          \infer{\vdash t:S(\B^n\Rightarrow A)}{\vdash t:\B^n\Rightarrow A}
          &
          \infer{\vdash \z[\B^n]:S\B^n }{}
        }
      }
      \qquad\textrm{and}\qquad
      \vcenter{\infer{\vdash\z:SA}{}}
    \]
    Then
    \begin{center}
      \begin{tikzcd}
        US\B^n \times\home{\B^n}A\ar[r,"\Id\times \eta"]& US\B^n \times US(\home{\B^n}A)\ar[r,"n"] & U(S\B^n \otimes S(\home{\B^n}A))\ar[d,"U(m)"]\\
        \One^2\ar[u,"\hat{\vec 0}\times t"] & &
        US(\B^n\times\home{\B^n}A)\ar[d,"US(\varepsilon^{\B^n})"]\\
        \One\ar[rr,"\lambda x.\mathbf{0}"]\ar[u,dashed,"\approx"] & &USA
      \end{tikzcd}
    \end{center}
    The mappings are as follows:

    \noindent
    $
    \ast\mapsto(\ast,\ast)\mapsto(\vec 0,t)\mapsto(\mathbf
    0,t)\mapsto\vec 0\otimes
    t=\vec 0\mapsto\vec 0\mapsto\vec 0
    $

    \noindent
    $
    \ast\mapsto\vec 0
    $

  \item \textbf{\rlinl} $(t+u)v\lra (tv+uv)$. We have
    \[
      \vcenter{\infer{\vdash(t+u)v:SA}
        {
          \infer{\vdash(t+u):S(\Psi\Rightarrow A)}
          {
            \vdash t:S(\Psi\Rightarrow A)
            &
            \vdash u:S(\Psi\Rightarrow A)
          }
          &
          \vdash v:S\Psi
        }}
    \]
    and
    \[
      \vcenter{\infer{\vdash(tv+uv):SA}
        {
          \infer{\vdash tv:SA}{\vdash t:S(\Psi\Rightarrow A) & \vdash v:S\Psi}
          &
          \infer{\vdash uv:SA}{\vdash u:S(\Psi\Rightarrow A) & \vdash v:S\Psi}
        }}
      \]
    Then
    \begin{center}
      \begin{tikzcd}
        US\Psi\times US(\home\Psi A)^2\ar[r,"\Id\times g_0"]& US\Psi\times
        US(\home\Psi A)^2\ar[r,"\Id\times +"]& US\Psi\times US(\home\Psi A)\ar[d,"n"]\\
        \One^3\ar[u,"v\times t\times u"]&&U(S\Psi\otimes S(\home\Psi A))\ar[d,"U(m)"]\\
        \One\ar[d,dashed,"\approx"]\ar[u,dashed,"\approx"] && US(\Psi\times\home\Psi
        A)\ar[d,"US(\varepsilon^\Psi)"]\\
        \One^4\ar[d,"v\times t\times v\times u"]&&USA\\
        (US\Psi\times US(\home\Psi A))^2\ar[d,"n^2"] && USA^2\ar[u,"+"]\\
        U(S\Psi\otimes S(\home\Psi A))^2\ar[rr,"U(m)^2"] &&
        US(\Psi\times\home\Psi A)^2\ar[u,"US(\varepsilon^\Psi)^2"]
      \end{tikzcd}
    \end{center}
    The mappings are as follows:

    \noindent
    $\ast\mapsto(\ast,\ast,\ast)\mapsto (v,t,u)\mapsto (v,t,u)\mapsto
    (v,t+u)\mapsto v\otimes (t+u)=v\otimes t+v\otimes u\mapsto
    (v,t)+(v,u)\mapsto t(v)+u(v)$

    \noindent
    $\ast\mapsto(\ast,\ast,\ast,\ast)\mapsto (v,t,v,u)\mapsto (v\otimes
    t,v\otimes u)\mapsto (v,t,v,u)\mapsto (t(v),u(v))\mapsto t(v)+u(v)$

  \item \textbf{\rlinscall} $(\alpha.t)u\lra \alpha.(tu)$. We have
    \[
      \vcenter{
        \infer{\vdash(\alpha.t)u:SA}
        {
          \infer{\vdash\alpha.t:S(\Psi\Rightarrow A)}
          {
            \vdash t:S(\Psi\Rightarrow A)
          }
          &
          \vdash u:S\Psi
        }
      }
      \qquad\textrm{and}\qquad
      \vcenter{
        \infer{\vdash\alpha.(tu):SA}
        {
          \infer{\vdash tu:SA}{\vdash t:S(\Psi\Rightarrow A) & \vdash u:S\Psi}
        }
      }
    \]
    Then
    \begin{center}
      \begin{tikzcd}[column sep=15mm]
        US\Psi\times U(S(\home\Psi A)\otimes\I)\ar[r,"\Id\times U(\lambda^{-1})"]
        &US\Psi\times US(\home\Psi A)\ar[r,"n"] &U(S\Psi\otimes S(\home\Psi A))\ar[d,"U(m)"]\\
        US\Psi\times U(S(\home\Psi A)\otimes\I)\ar[u,"\Id\times U(\Id\otimes\alpha)"] & & US(\Psi\times\home\Psi A)\ar[d,"S(\varepsilon^\Psi)"]\\
        US\Psi\times US(\home\Psi A)\ar[u,"\Id\times U(\lambda)"] &U(SA\otimes\I)\ar[r,"U(\lambda^{-1})"] & USA\\
        \One^2\ar[d,"u\times t"]\ar[u,"u\times t"] & U(SA\otimes\I)\ar[u,"\Id\times\alpha"] &USA\ar[l,"U(\lambda)"]\\
        US\Psi\times US(\home\Psi A)\ar[r,"n"] & U(S\Psi\otimes S(\home\Psi A))\ar[r,"U(m)"]& US(\Psi\times\home\Psi A)\ar[u,"US(\varepsilon^\Psi)"]
      \end{tikzcd}
    \end{center}
    The mappings are as follows:

    \noindent
    $(\ast,\ast)\mapsto (u,t)\mapsto(u,t\otimes 1)\mapsto
    (u,t\otimes\alpha)\mapsto (u,\alpha.t)\mapsto
    u\otimes(\alpha.t)=\alpha.(u\otimes t)\mapsto\alpha.(u,t)\mapsto\alpha.t(u)$

    \noindent
    $(\ast,\ast)\mapsto (u,t)\mapsto u\otimes t\mapsto (u,t)\mapsto t(u)\mapsto
    t(u)\otimes 1\mapsto t(u)\otimes\alpha\mapsto\alpha.t(u)$

  \item \textbf{\rlinzl}  $\z[(\B^n\Rightarrow A)]t\lra \z$. We have
    \[\vcenter{
        \infer{\vdash\z[(\B^n\Rightarrow A)]t:SA}
        {
          \infer{\vdash\z[(B^n\Rightarrow A)]:S(\B^n\Rightarrow A)}{}
          &
          \vdash t:S\B^n 
        }
      }
      \qquad\textrm{and}\qquad
      \vcenter{\infer{\vdash\z:SA}{}}
    \]
    Then
    \begin{center}
      \begin{tikzcd}
        US\B^n \times US(\home{\B^n}A)\ar[r,"n"] &U(S\B^n \otimes S(\home{\B^n}A))\ar[d,"U(m)"]\\
        \One^2\ar[u,"t\times\lambda x.\vec
        0"]&US(\B^n\times\home{\B^n}A)\ar[d,"US(\varepsilon^\Psi)"]\\
        \One\ar[u,dashed,"\approx"]\ar[r,"\hat{\vec 0}"] & USA
      \end{tikzcd}
    \end{center}
    The mappings are as follows:

    \noindent
    $\ast\mapsto(\ast,\ast)\mapsto (t,\vec 0)\mapsto t\otimes\mathbf
    0=\vec 0\mapsto\vec 0\mapsto\vec 0$

    \noindent
    $\ast\mapsto\vec 0$

  \item \textbf{\riftrue} $\ite{\ket 1}tr\lra t$. We have
    \[
      \vcenter{
        \infer{\vdash\ite{\ket 1}tr:A}
        {
          \infer{\vdash\ite{}tr:\B\Rightarrow A}{\vdash t:A & \vdash r:A}
          &
          \infer{\vdash\ket 1:\B}{}
        }
      }
      \qquad\textrm{and}\qquad
      \vcenter{\infer{\vdash t:A}{}}
    \]
    Then
    \begin{center}
      \begin{tikzcd}[column sep=4cm]
        \One^2\ar[r,"{\lambda x.\ket
          1\times\mathsf{curry}(\mathsf{uncurry}(f_{t,r})\circ\mathsf{swap})}"]
        &\B\times\home\B A\ar[d,"\varepsilon"]\\
        \One\ar[u,dashed,"\approx"]\ar[r,"t"] & A
      \end{tikzcd}
    \end{center}
    Notice that
    $\mathsf{curry}(\mathsf{uncurry}(f_{t,r})\circ\mathsf{swap})$ transforms the arrow
    $\B\xlra{f_{t,r}}\home\One A$ (which is the arrow $\ket 0\mapsto r$, $\ket
    1\mapsto t$) into an arrow $\One\xlra{}\home\B A$, and
    hence, $\hat{\ket 1}\times\mathsf{curry}(\mathsf{uncurry}(f_{t,r})\circ\mathsf{swap})\circ\varepsilon=t$.

  \item \textbf{\riffalse} Analogous to \riftrue.

  \item \textbf{\rhead} If $h\neq u\times v$, and $h\in\tbasis$, $\head\ h\times t\lra h$. We have
    \[
      \vcenter{
        \infer{\vdash\head\ h\times t:\B}{
          \infer{\vdash h\times t:\B^n}{
            \vdash  h:\B & \vdash t:\B^{n-1}
          }
        }
      }
      \qquad\textrm{and}\qquad
      \vdash h:\B
    \]
    Then
    \begin{center}
      \begin{tikzcd}
        \One^2\ar[r,"h\times t"] &\B^n\ar[d,"\head"]\\
        \One\ar[u,dashed,"\approx"]\ar[r,"h"] & \B
      \end{tikzcd}
    \end{center}
    This diagram commutes since $\head$ is just the projection $\pi_\B$.

  \item \textbf{\rtail} If $h\neq u\times v$, and $h\in\tbasis$, $\tail\ h\times t\lra t$. We have
    \[
      \vcenter{
        \infer{\vdash\tail\ h\times t:\B^{n-1}}{
          \infer{\vdash h\times t:\B^n}{
            \vdash  h:\B & \vdash t:\B^{n-1}
          }
        }
      }
      \qquad\textrm{ and }\qquad
      \vdash t:\B^{n-1}
    \]
    Then
    \begin{center}
      \begin{tikzcd}
        \One^2\ar[r,"h\times t"] &\B^n\ar[d,"\tail"]\\
        \One\ar[u,dashed,"\approx"]\ar[r,"t"] & \B^{n-1}
      \end{tikzcd}
    \end{center}
    This diagram commutes since $\tail$ is just the projection $\pi_{\B^{n-1}}$.

  \item \textbf{\rneut} $(\z+t)\lra  t$. We have
    \[
      \vcenter{
        \infer{\vdash\z+t:SA}
        {
          \infer{\vdash\z:SA}{}
          &
          \vdash t:SA
        }
      }
      \qquad\textrm{and}\qquad
      \vcenter{
        $\vdash t:SA$
      }
    \]
    Then
    \begin{center}
      \begin{tikzcd}
        \One^2\ar[r,"\hat{\vec 0}\times t"] & USA^2\ar[r,"g_0"] & USA^2\ar[d,"+"]\\
        \One\ar[u,dashed,"\approx"]\ar[rr,"t"] && USA
      \end{tikzcd}
    \end{center}
    The mappings are as follows:

    \noindent
    $\ast\mapsto(\ast,\ast)\mapsto (\vec 0,t)\mapsto(\vec 0,t)\mapsto t$

    \noindent
    $\ast\mapsto t$

  \item \textbf{\runit} $1.t\lra t$. We have
    \[
      \infer{\vdash 1.t:SA}
      {
        \vdash t:SA
      }
      \qquad\textrm{and}\qquad
      \vdash t:SA
    \]
    Then
    \begin{center}
      \begin{tikzcd}[column sep=15mm]
        USA\ar[r,"U(\lambda)"] &U(SA\otimes\I)\ar[r,"U(\Id\otimes 1)"] & U(SA\otimes\I)\ar[d,"U(\lambda^{-1})"]\\
        \One\ar[u,"t"]\ar[rr,"t"] & & USA
      \end{tikzcd}
    \end{center}
    The mappings are as follows:

    \noindent
    $\ast\mapsto t\mapsto t\otimes 1\mapsto t\otimes 1\mapsto 1.t=t$

    \noindent
    $\ast\mapsto t$

  \item \textbf{\rzeros} Cases:
    \begin{itemize}
    \item 
      If $t:A$ with $A\in\btypes$, $0.t\lra \z$. We have
    \[
      \vcenter{\infer{\vdash 0.t:SA}{\infer{\vdash t:SA}{\vdash t:A}}}
      \qquad\textrm{and}\qquad
      \vcenter{\infer{\vdash\z:SA}{}}
    \]
    Then
    \begin{center}
      \begin{tikzcd}[column sep=15mm]
        A\ar[r,"\eta"] & USA\ar[r,"U\lambda"] & U(SA\otimes\I)\ar[r,"U(\Id\otimes 0)"] & U(SA\otimes\I)\ar[d,"U\lambda^{-1}"]\\
        \One\ar[u,"t"]\ar[rrr,"\hat{\vec 0}"] & & & USA
      \end{tikzcd}
    \end{center}
    The mappings are as follows:

    \noindent
    $
    \ast\mapsto t\mapsto t\mapsto t\otimes 1\mapsto t\otimes 0=\vec 0\mapsto \vec 0
    $

    \noindent
    $\ast\mapsto\vec 0$
    \item
    If $t:SA$ and $t\ntype A$, $0.t\lra \z$. We have
    \[
      \vcenter{\infer{\vdash 0.t:SA}{\vdash t:SA}}
      \qquad\textrm{and}\qquad
      \vcenter{\infer{\vdash\z:SA}{}}
    \]
    Then
    \begin{center}
      \begin{tikzcd}[column sep=15mm]
        USA\ar[r,"U\lambda"] & U(SA\otimes\I)\ar[r,"U(\Id\otimes 0)"] & U(SA\otimes\I)\ar[d,"U\lambda^{-1}"]\\
        \One\ar[u,"t"]\ar[rr,"\hat{\vec 0}"] & & USA
      \end{tikzcd}
    \end{center}
    The mappings are as follows:

    \noindent
    $
    \ast\mapsto t\mapsto t\otimes 1\mapsto t\otimes 0=\vec 0\mapsto \vec 0
    $

    \noindent
    $\ast\mapsto\vec 0$

    \end{itemize}
  \item \textbf{\rzero} $\alpha.\z\lra \z$. We have
    \[
      \vcenter{
        \infer{\vdash\alpha.\z:SA}
        {
          \infer{\vdash\z:SA}{}
        }
      }
      \qquad\textrm{and}\qquad
      \vcenter{
        \infer{\vdash\z:SA}{}
      }
    \]
    Then
    \begin{center}
      \begin{tikzcd}
        USA\ar[r,"U(\lambda)"] &
        U(SA\otimes\I)\ar[r,"U(\Id\otimes\alpha)"] & U(SUSA\otimes\I)\ar[d,"U(\lambda^{-1})"]\\
        \One\ar[u,"\hat{\vec 0}"]\ar[rr,"\hat{\vec 0}"] && USA
      \end{tikzcd}
    \end{center}
    The mappings are as follows:
    
    \noindent
    $
    \ast\mapsto\vec 0\mapsto\vec 0\otimes 1\mapsto\vec 0\otimes\alpha=\vec 0\mapsto\vec 0
    $

    \noindent
    $\ast\mapsto\vec 0$

  \item \textbf{\rprod} $\alpha.(\beta.t)\lra (\alpha\beta).t$. We have
    \[
      \vcenter{
        \infer{\vdash\alpha.(\beta.t):SA}
        {
          \infer{\vdash\beta.t:SA}
          {\vdash t:SA} 
        }
      }
      \qquad\textrm{and}\qquad
      \vcenter{
        \infer{\vdash(\alpha\beta).t:SA}
        {\vdash t:SA}
      }
    \]
    Then
    \begin{center}
      \begin{tikzcd}[column sep=15mm]
        U(SA\otimes\I)\ar[rr,"U(\Id\otimes\beta)"] &&
        U(SA\otimes\I)\ar[r,"U(\lambda^{-1})"] & USA\ar[d,"U(\lambda)"]\\
        USA\ar[u,"U(\lambda)"] &&& U(SA\otimes\I)\ar[d,"U(\Id\otimes\alpha)"]\\
        \One\ar[d,"t"]\ar[u,"t"] &&& U(SA\otimes\I)\ar[d,"U(\lambda^{-1})"]\\
        USA\ar[r,"U(\lambda)"] &
        U(SA\otimes\I)\ar[r,"U(\Id\otimes(\alpha.\beta))"]& U(SA\otimes\Id)\ar[r,"U(\lambda^{-1})"]& USA
      \end{tikzcd}
    \end{center}
    The mappings are as follows:

    \noindent
    $\ast\mapsto t\mapsto t\otimes 1\mapsto t\otimes\beta\mapsto\beta.t\mapsto\beta.t\otimes 1\mapsto\beta.t\otimes\alpha\mapsto\alpha.(\beta.t)=(\alpha.\beta).t$

    \noindent
    $\ast\mapsto t\mapsto\mapsto t\otimes 1\mapsto t\otimes
    (\alpha.\beta)\mapsto (\alpha.\beta).t$
    
  \item \textbf{\rdists} $\alpha.(t+u)\lra \alpha.t+\alpha.u$. We
    have
    \[
      \vcenter{
        \infer{\vdash\alpha.(t+u):SA}
        {
          \infer{\vdash t+u:SA}
          {\vdash t:SA & \vdash u:SA}
        }
      }
      \qquad\textrm{and}\qquad
      \vcenter{
        \infer{\vdash\alpha.t+\alpha.u:SA}
        {
          \infer{\vdash\alpha.t:SA}{\vdash t:SA}
          &
          \infer{\vdash\alpha.u:SA}{\vdash u:SA}
        }
      }
    \]
    Then
    \begin{center}
      \begin{tikzcd}[column sep=13mm]
        USA^2\ar[r,"+"] & USA\ar[r,"U(\lambda)"]& U(SA\otimes\I)\ar[d,"U(\Id\otimes\alpha)"]\\
        USA^2\ar[r,"U(\lambda)^2"]\ar[u,"g_0"] &
        U(SA\otimes\I)^2\ar[d,"U(\Id\otimes\alpha)"]& U(SA\otimes\Id)\ar[d,"U(\lambda^{-1})"]\\
        \One^2\ar[u,"t\times u"] &
        U(SA\otimes\I)^2\ar[dr,"U(\lambda^{-1})^2"]& USA\\
        \One\ar[u,dashed,"\approx"] && USA^2\ar[u,"+"]\\
      \end{tikzcd}
    \end{center}
    The mappings are as follows:

    \noindent
    $(t,u)\mapsto(t,u)\mapsto t+u\mapsto (t+u)\otimes 1\mapsto
    (t+u)\otimes\alpha\mapsto\alpha.t+\alpha.u$

    \noindent
    $(t,u)\mapsto (t\otimes 1,u\otimes
    1)\mapsto(t\otimes\alpha,u\otimes\alpha)\mapsto (\alpha.t,\alpha.u)\mapsto\alpha.t+\alpha.u$

  \item \textbf{\rfact} $(\alpha.t+\beta.t)\lra (\alpha+\beta).t$. We
    have
    \[
      \vcenter{
        \infer{\vdash(\alpha.t+\beta.t):SA}
        {
          \infer{\vdash\alpha.t:SA}{\vdash t:SA}
          &
          \infer{\vdash\beta.t:SA}{\vdash t:SA}
        }
      }
      \qquad\textrm{and}\qquad
      \vcenter{
        \infer{\vdash(\alpha+\beta).t:SA}{\vdash t:SA}
      }
    \]
    Then
    \begin{center}
      \begin{tikzcd}[column sep=20mm]
        U(SA\otimes\I)^2\ar[rr,"U(\Id\otimes\alpha)\times U(\Id\otimes\beta)"] && U(SA\otimes\I)^2\ar[d,"U(\lambda^{-1})^2"]\\
        USA^2\ar[u,"U(\lambda)^2"] && USA^2\ar[d,"g_0"]\\
        \One^2\ar[u,"t^2"] && USA^2\ar[d,"+"]\\
        \One\ar[d,"t"]\ar[u,dashed,"\approx"] && USA\\
        USA\ar[r,"U(\lambda)"] &
        U(SA\otimes\I)\ar[r,"U(\Id\otimes(\alpha+\beta))"] & U(SA\otimes\I)\ar[u,"U(\lambda^{-1})"]
      \end{tikzcd}
    \end{center}
    The mappings are as follows:

    \noindent
    $\ast\mapsto(\ast,\ast)\mapsto(t,t)\mapsto(t\otimes 1,t\otimes
    1)\mapsto(t\otimes\alpha,t\otimes\beta)\mapsto(\alpha.t,\beta.t)\mapsto(\alpha.t,\beta.t)\mapsto(\alpha+\beta).t$

    \noindent
    $\ast\mapsto t\mapsto t\otimes 1\mapsto t\otimes(\alpha+\beta)\mapsto(\alpha+\beta).t$

  \item \textbf{\rfacto} $(\alpha.t+t)\lra (\alpha+1).t$. We
    have
    \[
      \vcenter{
        \infer{\vdash(\alpha.t+t):SA}
        {
          \infer{\vdash\alpha.t:SA}{\vdash t:SA}
          &
          \infer{\vdash t:SA}{\vdash t:SA}
        }
      }
      \qquad\textrm{and}\qquad
      \vcenter{
        \infer{\vdash(\alpha+1).t:SA}{\vdash t:SA}
      }
    \]
    Then
    \begin{center}
      \begin{tikzcd}[column sep=18mm]
        U(SA\otimes\I)\times USA\ar[rr,"U(\Id\otimes\alpha)\times\Id"] &&
        U(SA\otimes\I)\times USA\ar[d,"U(\lambda^{-1})\times\Id"]\\
        USA^2\ar[u,"U(\lambda)\times\Id"] && USA^2\ar[d,"g_0"]\\
        \One^2\ar[u,"t^2"] && USA^2\ar[d,"+"]\\
        \One\ar[d,"t"]\ar[u,dashed,"\approx"] && USA\\
        USA\ar[r,"U(\lambda)"] & U(SA\otimes\I)\ar[r,"U(\Id\otimes(\alpha+1))"] & U(SA\otimes\I)\ar[u,"U(\lambda^{-1})"]
      \end{tikzcd}
    \end{center}
    The mappings are as follows:

    \noindent$\ast\mapsto(\ast,\ast)\mapsto(t,t)\mapsto(t\otimes
    1,t)\mapsto(t\otimes\alpha,t)\mapsto(\alpha.t,t)\mapsto(\alpha.t,t)\mapsto(\alpha+1).t$

    \noindent$\ast\mapsto t\mapsto t\otimes 1\mapsto t\otimes(\alpha+1)\mapsto(\alpha+1).t$

  \item \textbf{\rfactt} $(t+t)\lra 2.t$. We
    have
    \[
      \vcenter{
        \infer{\vdash(t+t):SA}
        {
          \vdash t:SA
          &
          \vdash t:SA
        }
      }
      \qquad\textrm{and}\qquad
      \vcenter{
        \infer{\vdash 2.t:SA}{\vdash t:SA}
      }
    \]
    Then
    \begin{center}
      \begin{tikzcd}[column sep=15mm]
        \One^2\ar[r,"t\times t"] & USA^2\ar[r,"g_0"] & USA^2\ar[d,"+"]\\
        \One\ar[u,dashed,"\approx"]\ar[d,"t"] && USA\\
        USA\ar[r,"U(\lambda)"] & U(SA\otimes\I)\ar[r,"U(\Id\otimes 2)"] & U(SA\otimes\I)\ar[u,"U(\lambda^{-1})"] 
      \end{tikzcd}
    \end{center}
    The mappings are as follows:

    \noindent$\ast\mapsto(\ast,\ast)\mapsto(t,t)\mapsto(t,t)\mapsto 2.t$

    \noindent$\ast\mapsto t\mapsto t\otimes 1\mapsto t\otimes 2\mapsto 2.t$

  \item \textbf{\rdistsumr} $\Uparrow_r ((r+s)\times
    u)\lra \Uparrow_r (r\times u)+\Uparrow_r (s\times u)$.

    We have
    \[
      \vcenter{
        \infer{\vdash\Uparrow_r ((r+s)\times u):S(\Psi\times \Phi)}
        {
          \infer{\vdash(r+s)\times u:S(S\Psi\times \Phi)}
          {
            \infer{\vdash (r+s)\times u:S\Psi\times \Phi}
            {
              \infer{\vdash r+s:S\Psi}{\vdash r:S\Psi & \vdash s:S\Psi}
              &
              \vdash u:\Phi
            }
          }
        }
      }
      \textrm{\ \ \ \ and\ \ \ \ }
      \vcenter{
        \infer{\vdash\Uparrow_r (r\times u)+\Uparrow_r (s\times u):S(\Psi\times \Phi)}
        {
          \infer{\vdash\Uparrow_r (r\times u):S(\Psi\times \Phi)}
          {
            \infer{\vdash r\times u:S(S\Psi\times \Phi)}
            {
              \infer{\vdash r\times u:S\Psi\times \Phi}
              {
                \vdash r:S\Psi
                &
                \vdash u:\Phi
              }
            }
          }
          &
          \infer{\vdash\Uparrow_r (s\times u):S(\Psi\times \Phi)}
          {
            \infer{\vdash s\times u:S(S\Psi\times \Phi)}
            {
              \infer{\vdash s\times u:S\Psi\times \Phi}
              {
                \vdash s:S\Psi
                &
                \vdash u:\Phi
              }
            }
          }
        }
      }
    \]
    Then
    \begin{center}
      \begin{tikzcd}[row sep=10pt]
        US\Psi\times \Phi\ar[r,"\eta"] & US(US\Psi\times \Phi)\ar[r,"U(\Id\times\eta)"] &  US(US\Psi)\times US\Phi\ar[d,"US(n)"]\\
        US\Psi^2\times \Phi\ar[u,"+\times\Id"] && US(U(S\Psi\otimes S\Phi))\ar[d,"USU(m)"]\\
        US\Psi^2\times \Phi\ar[u,"g_0\times\Id"]&& USUS(\Psi\times \Phi)\ar[d,"\mu"]\\
        \One^3\approx\One\approx\One^4\ar[u,"r\times s\times u"]\ar[d,"r\times u\times s\times u"]&& US(\Psi\times \Phi)\\
        (US\Psi\times \Phi)^2\ar[d,"\eta^2"]&& US(\Psi\times \Phi)^2\ar[u,"+"]\\
        US(US\Psi\times \Phi)^2\ar[d,"U(\Id\times\eta)^2"]&& US(\Psi\times \Phi)^2\ar[u,"g_0"]\\
        US(US\Psi\times US\Phi)^2\ar[r,"US(n)^2"] & US(U(S\Psi\otimes S\Phi))^2\ar[r,"USU(m)^2"] & USUS(\Psi\times \Phi)^2 \ar[u,"\mu^2"]
      \end{tikzcd}
    \end{center}

    The mappings are as follows:
    \begin{align*}
      &\ast\mapsto(\ast,\ast,\ast)\mapsto(r,s,u)\mapsto(r,s,u)\mapsto (r+s,u)\mapsto (r+s,u)\mapsto (r+s,u)\\
      &\qquad\mapsto (r+s)\otimes u=(r\otimes u)+(s\otimes u)\mapsto (r,u)+(s,u)\mapsto (r,u)+(s,u)\\
      &\ast\mapsto(\ast,\ast,\ast,\ast)\mapsto (r,u,s,u)\mapsto(r,u,s,u)\mapsto(r,u,s,u)\\
      &\qquad\mapsto (r\otimes u,s\otimes u)\mapsto (r,u,s,u)\mapsto (r,u,s,u)\mapsto(r,u,s,u)\mapsto (r,u)+(s,u)
    \end{align*}
  \item \textbf{\rdistsuml} $\Uparrow_\ell u\times
    (r+s)\lra \Uparrow_\ell (u\times r)+\Uparrow_\ell (u\times s)$.
    Analogous to case $(\mathsf{dist}_r^+)$

  \item \textbf{\rdistscalr} $\Uparrow_r (\alpha.r)\times u\lra \alpha.\Uparrow_r r\times u$.
    We have
    \[
      \vcenter{
        \infer{\vdash\Uparrow_r (\alpha.r)\times u:S(\Psi\times \Phi)}
        {
          \infer{\vdash(\alpha.r)\times u:S(S\Psi\times \Phi)}
          {
            \infer{\vdash (\alpha.r)\times u:S\Psi\times \Phi}
            {
              \infer{\vdash \alpha.r:S\Psi}{\vdash r:S\Psi}
              &
              \vdash u:\Phi
            }
          }
        }
      }
      \qquad\textrm{and}\qquad
      \vcenter{
        \infer{\vdash\alpha.\Uparrow_r r\times u:S(\Psi\times \Phi)}
        {
          \infer{\vdash\Uparrow_r (r\times u):S(\Psi\times \Phi)}
          {
            \infer{\vdash r\times u:S(S\Psi\times \Phi)}
            {
              \infer{\vdash r\times u:S\Psi\times \Phi}
              {
                \vdash r:S\Psi
                &
                \vdash u:\Phi
              }
            }
          }
        }
      }
    \]
    Then
    \begin{center}
      \begin{tikzcd}[column sep=20mm]
        US\Psi\times \Phi\ar[r,"\eta"] &US(US\Psi\times \Phi)\ar[d,"U(\Id\times\eta)"] \\
        U(S\Psi\otimes\I)\times\Phi\ar[u,"U(\lambda^{-1})\times\Id"] & US(US\Psi\times US\Phi)\ar[d,"US(n)"] \\
        U(S\Psi\otimes\I)\times\Phi\ar[u,"U(\Id\otimes\alpha)\times\Id"] & US(U(S\Psi\otimes S\Phi))\ar[d,"USU(m)"]\\
        US\Psi\times\Phi\ar[u,"U(\lambda)\times\Id"] & USUS(\Psi\times \Phi)\ar[d,"\mu"]\\
        \One^2\ar[d,"r\times u"]\ar[u,"r\times u"] & US(\Psi\times \Phi)\\
        US\Psi\times\Phi\ar[d,"\eta"]&U(S(\Psi\times\Phi)\otimes\I)\ar[u,"U(\lambda^{-1})"]\\
        US(US\Psi\times\Phi)\ar[d,"U(\Id\times\eta)"]&U(S(\Psi\times\Phi)\otimes\I)\ar[u,"U(\Id\otimes\alpha)"]\\
        US(US\Psi\times US\Phi)\ar[d,"US(n)"]&US(\Psi\times\Phi)\ar[u,"U(\lambda)"] \\
        US(U(S\Psi\otimes S\Phi))\ar[r,"USU(m)"]& USUS(\Psi\times\Phi)\ar[u,"\mu"]
      \end{tikzcd}
    \end{center}
    The mappings are as follows:

    \noindent
    $(\ast,\ast)\mapsto(r,u)\mapsto(r\otimes
    1,u)\mapsto(r\otimes\alpha,u)\mapsto(\alpha.r,u)\mapsto(\alpha.r,u)\mapsto(\alpha.r,u)\mapsto\alpha.r\otimes
    u\mapsto\alpha.(r,u)\mapsto\alpha.(r,u)$

    \noindent
    $(\ast,\ast)\mapsto(r,u)\mapsto(r,u)\mapsto(r,u)\mapsto r\otimes u\mapsto
    (r,u)\mapsto(r,u)\mapsto(r,u)\otimes 1\mapsto(r,u)\otimes\alpha\mapsto\alpha.(r,u)$

  \item \textbf{\rdistscall}  $\Uparrow_\ell
    u\times(\alpha.r)\lra \alpha.\Uparrow_\ell u\times r$. Analogous to
    case $\rdistscalr$.

  \item \textbf{\rdistzr} If $u$ has type $\Phi$, $\Uparrow_r\z[\Psi]\times
    u\lra\z[(\Psi\times \Phi)]$. We have
    \[
      \vcenter{
        \infer{\vdash\Uparrow_r\z[\Psi]\times u:S(\Psi\times \Phi)}
        {
          \infer{\vdash\z[\Psi]\times u:S(S\Psi\times \Phi)}
          {
            \infer{\vdash\z[\Psi]\times u:S\Psi\times \Phi}
            {
              \infer{\vdash\z[\Psi]:S\Psi}{}
              &
              \vdash u:\Phi
            }
          }
        }
      }
      \qquad\textrm{and}\qquad
      \vcenter{\infer{\vdash\z[(\Psi\times \Phi)]:S(\Psi\times \Phi)}{}}
    \]
    
    Then
    \begin{center}
      \begin{tikzcd}[column sep=10mm]
        \One\approx\One^2\ar[r,"\hat{\vec 0}\times u"]\ar[d,"\hat{\vec 0}"] & US\Psi\times\Phi\ar[r,"\eta"] & US(US\Psi\times\Phi)\ar[r,"US(\Id\times\eta)"] & US(US\Psi\times US\Phi)\ar[d,"USn"]\\
        US(\Psi\times\Phi) &&USUS(\Psi\times\Phi)\ar[ll,"\mu"] & USU(S\Psi\otimes S\Phi)\ar[l,"USUm"]
      \end{tikzcd}
    \end{center}
    The mappings are as follows:

    \noindent$\ast\mapsto(\ast,\ast)\mapsto(\vec 0,u)\mapsto(\vec
    0,u)\mapsto(\vec 0,u)\mapsto\vec 0\otimes u=\vec 0\mapsto\vec 0\mapsto\vec
    0$

    \noindent$\ast\mapsto\vec 0$

  \item \textbf{\rdistzl} If $u$ has type $\Psi$, $\Uparrow_\ell
    u\times\z[\Phi]\lra\z[(\Psi\times\Phi)]$. Analogous to case $\rdistzr$.

  \item \textbf{\rdistcasum} 
    $\Uparrow (t+u)\lra (\Uparrow t+\Uparrow u)$.
    We only give the details for $\Uparrow_r$, the case $\Uparrow_\ell$ is
    analogous.
    \[
      \vcenter{
        \infer{\vdash\Uparrow_r (t+u):S(\Psi\times \Phi)}
        {
          \infer{\vdash t+u:S(S\Psi\times \Phi)}
          {
            \vdash t:S(S\Psi\times \Phi)
            &
            \vdash u:S(S\Psi\times \Phi)
          }
        }
      }
      \qquad\textrm{and}\qquad
      \vcenter{
        \infer{\vdash\Uparrow_r  t+\Uparrow_r u:S(\Psi\times \Phi)}
        {
          \infer{\vdash\Uparrow_r  t:S(\Psi\times \Phi)}
          {\vdash t:S(S\Psi\times \Phi)}
          &
          \infer{\vdash\Uparrow_r  u:S(\Psi\times \Phi)}
          {\vdash u:S(S\Psi\times \Phi)}
        }
      }
    \]
    Then
    \begin{center}
      \begin{tikzcd}[column sep=1.8cm]
        (US(US\Psi\times \Phi))^2\ar[r,"+"] &US(US\Psi\times \Phi)\ar[r,"U(\Id\times\eta)"]& US(US\Psi\times US\Phi)\ar[d,"USn"]\\
        (US(US\Psi\times \Phi))^2\ar[u,"g_0"]\ar[r,"(U(\Id\times\eta))^2"]&(US(US\Psi\times US\Phi))^2\ar[d,"(USn)^2"] & USU(S\Psi\otimes S\Phi)\ar[d,"USUm"]\\
        \One\times\One\ar[u,"t\times u"] &(USU(S\Psi\otimes S\Phi))^2\ar[d,"(USUm)^2"]& USUS(\Psi\times \Phi)\ar[d,"\mu"]\\
        & (USUS(\Psi\times \Phi))^2\ar[d,"\mu^2"]& US(\Psi\times \Phi)\\
        &(US(\Psi\times \Phi))^2\ar[r,"g_0"]& US(\Psi\times \Phi)^2\ar[u,"+"]
      \end{tikzcd}
    \end{center}
    The mappings are as follows. For $i=1,\dots,m$, let $a_i=\sum_{k_i}\gamma_{ik_i}.a_{ik_i}$,
    $t=\sum_{i=1}^n\beta_i(a_i,b_i)$ and $u=\sum_{i=n+1}^m\beta_i(a_i,b_i)$. To
    avoid a more
    cumbersome notation, we only consider the case where
    $\Psi$ and $\Phi$ do not have an $S$ in head position, and we omit the steps
    not modifying the argument.
    \begin{align*}
      (t,u)&\mapsto t+u\mapsto\sum_{i=1}^m\beta_i. a_i\otimes b_i= \sum_{i=1}^m\beta_i.\sum_{k_i}\gamma_{ik_i}.(a_{ik_i}\otimes b_i)\\
           &\mapsto\sum_{i=1}^m\beta_i.\sum_{k_i}\gamma_{ik_i}.(a_{ik_i}, b_i)\mapsto\sum_{i=1}^m\sum_{k_i}\beta_i.\gamma_{ik_i}.(a_{ik_i}, b_i)
             \\
      \\
    (t,u)
    &\mapsto\left(\sum_{i=1}^n\beta_i.a_i\otimes b_i,\sum_{i=n+1}^m\beta_i.a_i\otimes b_i\right)\\
    &\mapsto\left(\sum_{i=1}^n\beta_i.\sum_{k_i}\gamma_{ik_i}.(a_{ik_i},b_i),\sum_{i=n+1}^m\beta_i.\sum_{k_i}\gamma_{ik_i}.(a_{ik_i},b_i)\right)\\
    &\mapsto\left(\sum_{i=1}^n\sum_{k_i}\beta_i.\gamma_{ik_i}.(a_{ik_i},b_i),\sum_{i=n+1}^m\sum_{k_i}\beta_i.\gamma_{ik_i}.(a_{ik_i},b_i)\right)\\
    &\mapsto\sum_{i=1}^m\sum_{k_i}\beta_i.\gamma_{ik_i}.(a_{ik_i},b_i)
    \end{align*}

  \item \textbf{\rdistcascal} $\Uparrow (\alpha.t)\lra \alpha.\Uparrow t$.
    We only give the details for $\Uparrow_r$, the case $\Uparrow_\ell$ is
    similar.
    \[
      \vcenter{
        \infer{\vdash\Uparrow_r (\alpha.t):S(\Psi\times \Phi)}
        {
          \infer{\vdash\alpha.t:S(S\Psi\times \Phi)}
          {
            \vdash t:S(S\Psi\times \Phi)
          }
        }
      }
      \qquad\textrm{and}\qquad
      \vcenter{
        \infer{\vdash\alpha.\Uparrow_r t:S(\Psi\times \Phi)}
        {
          \infer{\vdash\Uparrow_r  t:S(\Psi\times \Phi)}
          {\vdash t:S(S\Psi\times \Phi)}
        }
      }
    \]
    
    Then
    \begin{center}
      \begin{tikzcd}[column sep=2cm]
        U(S(US\Psi\times \Phi)\otimes\I)\ar[r,"U\lambda^{-1}"] & US(US\Psi\times \Phi)\ar[r,"U(\Id\times\eta)"] &US(US\Psi\times US\Phi)\ar[d,"USn"]\\
        U(S(US\Psi\times\Phi)\otimes\I)\ar[u,"U(\Id\otimes\alpha)"] && USU(S\Psi\otimes S\Phi)\ar[d,"USUm"]\\
        US(US\Psi\times\Phi)\ar[u,"U\lambda"]\ar[r,"U(\Id\times\eta)"] & US(US\Psi\times US\Phi)\ar[d,"USn"] &USUS(\Psi\times \Phi)\ar[d,"\mu"]\\
        \One\ar[u,"t"] &USU(S\Psi\otimes S\Phi)\ar[d,"USUm"]& US(\Psi\times \Phi) \\
        US(\Psi\times\Phi)\ar[rd,"U\lambda"]&USUS(\Psi\times\Phi)\ar[l,"\mu"]& U(S(\Psi\times\Phi)\otimes\I)\ar[u,"U\lambda^{-1}"]\\
        &U(S(\Psi\times\Phi)\otimes\I)\ar[ru,"US(\Id\otimes\alpha)"]&
      \end{tikzcd}
    \end{center}
    The mappings are as follows. For $i=1,\dots,m$, let
    $a_i=\sum_{k_i}\gamma_{ik_i}.a_{ik_i}$ and $t=\sum_i\beta_i.(a_i,b_i)$. To
    avoid a more
    cumbersome notation, we only consider the case where
    $\Psi$ and $\Phi$ do not have an $S$ in head position, and we omit the steps
    not modifying the argument.
    \begin{align*}
    t
    &\mapsto t\otimes 1\mapsto t\otimes\alpha=(\sum_i\beta_i.(a_i,b_i))\otimes\alpha=\sum_i\alpha\beta_i.(a_i,b_i)\mapsto\sum_i\alpha\beta_i.(a_i\otimes b_i)\\
      &=\sum_i\alpha\beta_i.\sum_{k_i}\gamma_{ik_i}.(a_{ik_i}\otimes b_i)\mapsto\sum_i\alpha\beta_i.\sum_{k_i}\gamma_{ik_i}.(a_{ik_i},b_i)\mapsto\sum_i\sum_{k_i}\alpha\beta_i\gamma_{ik_i}.(a_{ik_i},b_i)\\
      \\
    t
    &\mapsto\sum_i\beta_i.(a_i\otimes b_i)=\sum_i\beta_i.\sum_{k_i}\gamma_{ik_i}.(a_{ik_i}\otimes b_i)\mapsto\sum_i\beta_i.\sum_{k_i}\gamma_{ik_i}.(a_{ik_i},b_i)\\
    &\mapsto\sum_i\sum_{k_i}\beta_i\gamma_{ik_i}.(a_{ik_i},b_i)\mapsto(\sum_i\sum_{k_i}\beta_i\gamma_{ik_i}.(a_{ik_i},b_i))\otimes 1\mapsto(\sum_i\sum_{k_i}\beta_i\gamma_{ik_i}.(a_{ik_i},b_i))\otimes \alpha \\
    &=\alpha.\sum_i\sum_{k_i}\beta_i\gamma_{ik_i}.(a_{ik_i},b_i)\mapsto\sum_i\sum_{k_i}\alpha\beta_i\gamma_{ik_i}.(a_{ik_i},b_i)
    \end{align*}

  \item \textbf{\rdistcazeror}
    $\Uparrow_r\z[(S(S\Psi)\times\Phi)]\lra\Uparrow_r\z[(S\Psi\times\Phi)]$. We have
    \[
      \vcenter{
        \infer{\vdash\Uparrow_r\z[(SS\Psi\times\Phi)]:S(S\Psi\times\Phi)}
        {
          \infer{\vdash\z[(SS\Psi\times\Phi)]:S(SS\Psi\times\Phi)}{}
        }
      }
      \qquad\textrm{and}\qquad
      \vcenter{
        \infer{\vdash\Uparrow_r\z[(S\Psi\times\Phi)]:S(S\Psi\times\Phi)}
        {
          \infer{\vdash\z[(S\Psi\times\Phi)]:S(SS\Psi\times\Phi)}
          {
            \infer{\vdash\z[(S\Psi\times\Phi)]:S(S\Psi\times\Phi)}{}
          }
        }
      }
    \]

    Then
    \begin{center}
      \begin{tikzcd}[column sep=15mm]
        \One\ar[rr,"\hat{\vec 0}"]\ar[rd,"\hat{\vec 0}"] & & US(US(US\Psi)\times\Phi)\ar[r,"US(\Id\times\eta)"] & US(US(US\Psi)\times US\Phi)\ar[dd,"US(n)"]\\
        &US(US\Psi\times\Phi)\ar[ru,"\eta"] &\\
        &&USUS(US\Psi\times\Phi)\ar[lu,"\mu"] & US(U(S(US\Psi)\otimes S\Phi))\ar[l,"USU(m)"]
      \end{tikzcd}
    \end{center}
    Both mappings start with $\ast\mapsto\vec 0$, and then continue mapping, by
    linearity, to $\vec 0$.

  \item \textbf{\rdistcazerol}
    $\Uparrow_\ell\z[(\Phi\times SS\Psi)]\lra\Uparrow_\ell\z[(\Phi\times S\Psi)]$.
    Analogous to case $\rdistcazeror$.

  \item \textbf{\rcaneutzr} $\Uparrow_r\z[(S\B^n \times\Phi)]\lra\z[(\B^n\times\Phi)]$.
    We have
    \[
      \vcenter{
        \infer{\vdash\Uparrow_r\z[(S\B^n \times\Phi)]:S(\B^n\times\Phi)}
        {
          \infer{\vdash\z[(S\B^n \times\Phi)]:S(S\B^n \times\Phi)}{}
        }
      }
      \qquad\textrm{and}\qquad
      \vcenter{
        \infer{\vdash\z[(\B^n\times\Phi)]:S(\B^n\times\Phi)}{}
      }
    \]

    Then
    \begin{center}
      \begin{tikzcd}[column sep=15mm]
        \One\ar[r,"\lambda x.\mathbf
        0"]\ar[d,"\hat{\vec 0}"]&US(US\B^n \times\Phi)\ar[r,"U(\Id\times\eta)"]&US(US\B^n \times
        US\Phi)\ar[d,"US(n)"]\\
        US(\B^n\times\Phi)&USUS(\B^n\times\Phi)\ar[l,"\mu"] & US(U(S\B^n \otimes S\Phi))\ar[l,"USU(m)"]
      \end{tikzcd}
    \end{center}
    Both mappings start with $\ast\mapsto\vec 0$, and then continue mapping, by
    linearity, to $\vec 0$.

  \item \textbf{\rcaneutzl} $\Uparrow_\ell\z[(\Phi\times S\B^n)]\lra\z[(\Phi\times\B^n)]$.
    Analogous to case \rcaneutzr.
  \item \textbf{\rcaneutr} If $u\in\tbasis$, $\Uparrow_r u\times
    v\lra  u\times v$. We have
    \[
      \vcenter{
        \infer{\vdash\Uparrow_r u\times v:S(\Psi\times \Phi)}
        {
          \infer{\vdash u\times v:S(S\Psi\times \Phi)}
          {
            \infer{\vdash u\times v:S\Psi\times \Phi}
            {
              \infer{\vdash u:S\Psi}{\vdash u:\Psi}
              &
              \vdash v:\Phi
            }
          }
        }
      }
      \qquad\textrm{and}\qquad
      \vcenter{
        \infer{\vdash u\times v:S(\Psi\times \Phi)}
        {
          \infer{\vdash u\times v:\Psi\times \Phi}
          {
            \vdash u:\Psi & \vdash v:\Phi
          }
        }
      }
    \]
    Then
    \begin{center}
      \begin{tikzcd}
        \One\approx\One^2\ar[r,"u\times v"] & \Psi\times
        \Phi\ar[dl,"\eta"]\ar[r,"\eta\times\Id"] & US\Psi\times
        \Phi\ar[r,"\eta"] & US(US\Psi\times \Phi)\ar[d,"U(\Id\times\eta)"]\\
        US(\Psi\times \Phi) & USUS(\Psi\times \Phi)\ar[l,"\mu"] & US(U(S\Psi\otimes S\Phi))\ar[l,"USU(m)"]  & US(US\Psi\times US\Phi)\ar[l,"US(n)"]
      \end{tikzcd}
    \end{center}
    Both mappings are the identity, so we do not give the mappings. Notice that even if $v$ is a linear
    combination, the $\eta$ on $\Phi$ will freeze its linearity by considering
    it as a basis vector in a new vector space $US\Phi$ having $\Phi$ as base.

  \item \textbf{\rcaneutl} If $v\in\tbasis$, $\Uparrow_\ell u\times
    v\lra  u\times v$. Analogous to case $\rcaneutr$.

  \item \textbf{\rproj}
    $\pi_j\ket\psi
    \lra
    \bigparallel\limits_{k=0}^{2^j-1} \proba{p_k} (\ket k\times\ket{\phi_k})$,

    where
    \begin{align*}
      &\ket\psi = \sum\limits_{i=1}^n\may[\alpha_i]\prod\limits_{h=1}^m\ket{b_{hi}}\\
      &\ket k=\ket{b_1}\times\dots\times\ket{b_j}\textrm{ where }b_1\dots b_j\textrm{ is the binary representation of }k\\
      &\ket{\phi_k} = \sum\limits_{i\in T_k}\beta_{ik}.\prod\limits_{h=j+1}^m \ket{b_{hi}}\\
      &\beta_{ik}=\left(\frac{\alpha_i}{\sqrt{\sum\limits_{r\in T_k}|\alpha_r|^2}}\right)\quad\textrm{and}\quad\mathtt{p_k}=\sum\limits_{i\in T_k}\left(\frac{|\alpha_i|^2}{{\sum\limits_{r=1}^n|\alpha_r|^2}}\right)\\
      &\textrm{with }T_k=\{i\leq n\mid\ket{b_{1i}}\times\dots\times\ket{b_{ji}}=\ket k\}.
    \end{align*}

    We have
    \[
      \infer{
        \vdash\pi_j\ket\psi:\B^j\times S\B^{m-j}}
      {
        \infer{
          \vdash\ket\psi:S\B^m
        }
        {
          \infer{\vdash\lbrack\alpha_i.\rbrack\prod\limits_{h=1}^m \ket{b_{h1}}:S\B^m}
          {
            \infer{\vdash\prod\limits_{h=1}^m \ket{b_{h1}}:S\B^m}
            {
              \infer{\vdash\prod\limits_{h=1}^m \ket{b_{h1}}:\B^m}
              {
                \infer{\vdash \ket{b_{11}}:\B}{}
                &
                \dots 
                &
                \infer{\vdash \ket{b_{m1}}:\B}{}
              }
            }
          }
          &
          \dots
          &
          \infer{\vdash\lbrack\alpha_n.\rbrack\prod\limits_{h=1}^m \ket{b_{hn}}:S\B^m}
          {
            \infer{\vdash\prod\limits_{h=1}^m \ket{b_{hn}}:S\B^m}
            {
              \infer{\vdash\prod\limits_{h=1}^m \ket{b_{hn}}:\B^m}
              {
                \infer{\vdash \ket{b_{1n}}:\B}{}
                &
                \dots 
                &
                \infer{\vdash \ket{b_{mn}}:\B}{}
              }
            }
          }
        }
      }
    \]
    and
    \[
      \infer{\vdash\bigparallel\limits_{k=1}^{2^j-1} \proba{p_k}(\ket k\times\ket{\phi_k}):\B^j\times S\B^{m-j}}
      {
        \infer{\vdash \ket{k}\times\ket{\phi_k}:\B^j\times S\B^{m-j}}
        {
          \vdash \ket{k}:\B^j
          &
          \infer{\vdash\ket{\phi_k}:S\B^{m-j}}
          {
            \infer{\vdash\beta_{i_11}.\prod\limits_{h=j+1}^m \ket{b_{hi_1}}:S\B^{m-j}}
            {
              \infer{\vdash\prod\limits_{h=j+1}^m \ket{b_{hi_1}}:S\B^{m-j}}
              {
                \infer{\vdash\prod\limits_{h=j+1}^m \ket{b_{hi_1}}:\B^{m-j}}
                {
                  \infer{\vdash \ket{b_{j+1,i_1}}:\B}{}
                  &\dots&
                  \infer{\vdash \ket{b_{m,i_1}}:\B}{}
                }
              }
            }
            &\dots&
            \infer{\vdash\beta_{i_{{2^j}-1} 2^j-1}.\prod\limits_{h=j+1}^m \ket{b_{hi_{{2^{j}}-1}}}:S\B^{m-j}}
            {
              \infer{\vdash\prod\limits_{h=j+1}^m \ket{b_{hi_{{2^j}-1}}}:S\B^{m-j}}
              {
                \infer{\vdash\prod\limits_{h=j+1}^m \ket{b_{hi_{{2^j}-1}}}:\B^{m-j}}
                {
                  \infer{\vdash \ket{b_{j+1,i_{2^j-1}}}:\B}{}
                  &\dots&
                  \infer{\vdash \ket{b_{m,i_{2^j-1}}}:\B}{}
                }
              }
            }
          }
        }
      }
    \]

    The following diagram, where
    \[
      \Psi = \sem{\vdash\ket\psi:S\B^m}
      \quad\textrm{and}\quad
      P_k = \sem{\vdash\ket k\times\ket{\phi_k}:B^j\times S\B^{m-j}}
    \]
    commutes.
    \begin{center}
      \begin{tikzcd}[column sep=2.5cm]
        \One\approx\One^{2^j}\ar[d,"\Psi"]\ar[r,"P_0\times\dots\times P_{2^j-1}"] & (\B^j\times US\B^{m-j})^{2^j}\ar[d,"d_{{\{{p_k}\}}_k}"]\\
        US\B^m\ar[r,"\pi_j"] & D(\B^j\times US\B^{m-j})
      \end{tikzcd}
    \end{center}
    Indeed
    \[
      \pi_j\circ \Psi
      =\sum_{k=0}^{2^j-1}p_k\chi_{\pi_{jk}\ket\psi}=
      d_{\{p_k\}_k}\circ\left( \prod_{k=0}^{2^j-1} P_k\right)
    \]
  \item \textbf{\rprojz} $\pi_j\z[\B^n]\lra\ket 0^{\times n}$. We have
    \[
      \vcenter{\infer{\vdash\pi_j\z[\B^n]:\B^j\times S\B^{n-j}}{\vdash\z[\B^n]:S\B^n}}
      \qquad\textrm{and}\qquad
      \vcenter{\infer{\vdash\ket 0^n:\B^j\times S\B^{n-j}}
      {
	\infer{\vdash\ket 0^j:\B^j}{
	    \infer{\vdash\ket 0:\B}{}
	    &\cdots &
	    \infer{\vdash\ket 0:\B}{}
	}
	&
	\infer{\vdash\ket 0^{n-j}:S\B^{n-j}}{
	  \infer{\vdash\ket 0^{n-j}:\B^{n-j}}{
	    \infer{\vdash\ket 0:\B}{}
	    &\cdots &
	    \infer{\vdash\ket 0:\B}{}
	  }
	}
    }}
    \]
    Then
    \begin{center}
      \begin{tikzcd}
\One^n\approx\One \arrow[r, "\hat{\vec 0}"] \arrow[d, "\hat{\ket 0}^n"] & US\B^n \arrow[d, "\pi_j"] \\
\B^n \arrow[r, "\Id\times\eta^{n-j}"]                                   & \B^j\times US\B^{n-j}
\end{tikzcd}
    \end{center}
    The mappings are as follows:

    \noindent
    $\ast\mapsto\vec 0\mapsto\ket 0^n$\\
    $\ast\approx\ast^n\mapsto\ket 0^n\mapsto\ket 0^n$

  \item \textbf{Contextual rules} Trivial by composition law.
    \qedhere
  \end{itemize}
\end{proof}

\recap{Lemma}{lem:Adequacy}{
  If $\vdash t:A$ then $t\in\com A$.}
\begin{proof}
  We prove, more generally, that if $\Gamma\vdash t:A$ and $\sigma\vDash\Gamma$,
  then $\sigma t\in\com A$. We proceed by structural induction on the derivation
  of $\Gamma\vdash t:A$. In order to avoid cumbersome notation, we do not take the
  closure by parallelism into account, except when needed. The extension of this
  proof to such a closure is straightforward.
  \begin{itemize}
   \item Let $\Gamma^\B,x:\Psi\vdash x:\Psi$ as a consequence of rule $\tax$. Since $\sigma\vDash\Gamma^\B,x:\Psi$, we have $\sigma x\in\com\Psi$.
   \item $\tax_{\vec 0}$, $\tax_{\ket 0}$ and $\tax_{\ket 1}$ are trivial since,
     by definition $\z[A]\in\com{SA}$, $\ket 0\in\com{\B}$ and $\ket 1\in\com{B}$.
  \item Let $\Gamma\vdash \alpha.t:SA$ as a consequence of $\Gamma\vdash t:SA$
    and rule $\alpha_I$. By the IH, $\sigma t\in\com{SA}$,
    hence, by definition $\alpha.\sigma t=\sigma\alpha.t\in\com{SA}$.
    
  \item Let ${\Gamma,\Delta,\Xi^\B\vdash\pair tu:SA}$ as a consequence of
    $\Gamma,\Xi^\B\vdash t:SA$, $\Delta,\Xi^\B\vdash u:SA$, and rule $+_I$. By
    the IH, $\sigma_1\sigma t,\sigma_2\sigma u\in\com{SA}$,
    hence, by definition $\sigma_1\sigma t+\sigma_2\sigma
    t=\sigma_1\sigma_2\sigma(t+u)\in\com{SA}$.

  \item Let $\Gamma\vdash\pi_j t:\B^j\times S\B^{n-j}$ as a consequence of
    $\Gamma\vdash t:S\B^n$ and rule $S_E$. By the IH, $\sigma
    t\in\com{S\B^n }=\overline{S\com{\B^n}\cup\{\z[\B^n]\}}$. Then, $\sigma
    t\in\overline{S\{\ket 0,\ket 1\}^n}$, so $\sigma
    t\lra^*\sum_i\alpha_i\ket{b_{i1}}\times\dots\times\ket{b_{in}}$, with
    $b_{ij}=0$ or $b_{ij}=1$. Therefore, $\pi_j\sigma
    t\lra^*\pi_j\sum_i\alpha_i\ket{b_{i1}}\times\dots\times\ket{b_{in}}\rightarrow
    \ket{b'_1}\times\dots\ket{b'_j}\times\sum_i\beta_i\ket{b'_{i,{j+1}}}\times\dots\times\ket{b'_{in}}\in\{\ket
    0,\ket 1\}^j\times\overline{\{\ket 0,\ket 1\}^{n-j}}\subseteq\com{\B^j\times
      S\B^{n-j}}$.
    
  \item Let $\Gamma\vdash\ite{}tr:\B\Rightarrow A$ as a consequence of
    $\Gamma\vdash t:A$, $\Gamma\vdash r:A$ and rule $\tif$. By the induction
hypothesis, $\sigma t\in\com A$ and $\sigma r\in\com A$. Hence, for any
$s\in\com\B$, $\ite s{\sigma t}{\sigma r}$ reduces either to $\sigma t$ or to
$\sigma r$, hence it is in $\com A$, therefore, $\ite{}{\sigma t}{\sigma
  r}\in\com{\B\Rightarrow A}$.

  \item Let $\Gamma\vdash\lambda x{:}\Psi.t:\Psi\Rightarrow A$ as a consequence
    of $\Gamma,x:\Psi\vdash t:A$ and rule $\Rightarrow_I$. Let $r\in\com\Psi$.
    Then, $\sigma(\lambda x{:}\Psi.t)r=(\lambda x{:}\Psi.\sigma t)r\rightarrow
    (r/x)\sigma t$. Since $(r/x)\sigma t\vDash\Gamma,x{:}\Psi$, we have, by the
    IH, that $(r/x)\sigma t\in\com A$. Therefore, $\lambda
    x{:}\Psi.t\in\com{\Psi\Rightarrow A}$.

  \item Let $\Delta,\Gamma,\Xi^\B\vdash tu:A$ as a consequence of
    $\Delta,\Xi^\B\vdash u:\Psi$, $\Gamma,\Xi^\B\vdash t:\Psi\Rightarrow A$ and
    rule $\Rightarrow_E$. By the IH, $\sigma_1\sigma
    u\in\com\Psi$ and $\sigma_2\sigma t\in\com{\Psi\Rightarrow A}$. Then, by
    definition, $\sigma_1\sigma t\sigma_2\sigma
    r=\sigma_1\sigma_2\sigma(tr)\in\com A$.
    
  \item Let $\Delta,\Gamma,\Xi^\B\vdash tu:SA$ as a consequence of
    $\Delta,\Xi^\B\vdash u:S\Psi$, $\Gamma,\Xi^\B\vdash t:S(\Psi\Rightarrow A)$
    and rule $\Rightarrow_{ES}$. By the IH $\sigma_1\sigma
    t\in\com{S(\Psi\Rightarrow A)}=\overline{S\com{\Psi\Rightarrow
        A}\cup\{\z[(\Psi\Rightarrow A)]\}}$ and $\sigma_2\sigma
    u\in\com{S\Psi}=\overline{S\com\Psi\cup\{\z[\Psi]\}}$. Cases: 
    \begin{itemize}
      \item[$\ast$] $\sigma_1\sigma t\lra^*\z[(\Psi\Rightarrow A)]$ and
        $\sigma_2\sigma u\rightarrow\z[\Psi]$. Then
        $\sigma_1\sigma_2\sigma (tu)=\sigma_1\sigma t\sigma_2\sigma u\lra^*\z[(\Psi\Rightarrow A)]\z[\Psi]\rightarrow\z[A]\in\com{SA}$.
      \item[$\ast$] $\sigma_1\sigma t\lra^*\z[(\Psi\Rightarrow A)]$ and
        $\sigma_2\sigma u\rightarrow\sum_j\beta_ju_j$, with $u_j\in\com\Psi$. Then
        $\sigma_1\sigma_2\sigma (tu)=\sigma_1\sigma t\sigma_2\sigma u\lra^*\z[(\Psi\Rightarrow A)]\sum_j\beta_ju_j\rightarrow\z[A]\in\com{SA}$.
      \item[$\ast$] $\sigma_1\sigma t\lra^*\sum_i\alpha_it_i$ with $t_i\in\com{\Psi\Rightarrow A}$ and
        $\sigma_2\sigma u\rightarrow\z[\Psi]$. Then
        $\sigma_1\sigma_2\sigma (tu)=\sigma_1\sigma t\sigma_2\sigma u\lra^*\sum_i\alpha_i(t_i\z[\Psi])\lra^*\z[A]\in\com{SA}$.
        
      \item[$\ast$] $\sigma_1\sigma t\lra^*\sum_i\alpha_it_i$ with $t_i\in\com{\Psi\Rightarrow A}$ and
        $\sigma_2\sigma u\rightarrow\sum_j\beta_ju_j$, with $u_j\in\com\Psi$. Then
        $\sigma_1\sigma_2\sigma (tu)=\sigma_1\sigma t\sigma_2\sigma u\lra^*\sum_{ij}\alpha_i\beta_jt_iu_j$
        with $t_iu_j\in\com A$, therefore, $\sigma_1\sigma_2\sigma (tu)\in\com{SA}$.
    \end{itemize}
    
  \item Let $\Gamma\vdash t:SA$ as a consequence of $\Gamma\vdash t:A$ and rule
    $S_I$. By the IH, $\sigma t\in\com A\subseteq S\com
    A\subseteq\com{SA}$.

  \item Let $\Gamma,\Delta,\Xi^\B\vdash t\times u:\Psi\times\Phi$ as a
    consequence of $\Gamma,\Xi^\B\vdash t:\Psi$, $\Delta,\Xi^\B\vdash u:\Phi$
    and rule $\times_I$. By the IH, $\sigma_1\sigma
    t\in\com\Psi$ and $\sigma_2\sigma u\in\com\Phi$, hence, $\sigma_1\sigma
    t\times\sigma_2\sigma u=\sigma_1\sigma_2\sigma (t\times
    u)\in\com\Psi\times\com\Phi\subseteq\com{\Psi\times\Phi}$.

  \item Let $\Gamma\vdash \head~t:\B$ as a consequence of $\Gamma\vdash t:\B^n$
    and rule $\times_{Er}$. By the IH, $\sigma
    t\in\com{\B^n}=\overline{\com\B\times\com{\B^{n-1}}}=\{u\mid
    u\lra^*u_1\times u_2\textrm{ with }u_1\in\com\B\textrm{
      and} u_2\in\com{\B^{n-1}}\}$. Hence, $\sigma(\head\ t)=\head\ \sigma
    t\lra^*\head(u_1\times u_2)\rightarrow u_1\in\com\B$.

  \item Let $\Gamma\vdash \tail~t:\B^{n-1}$ as a consequence of $\Gamma\vdash
    t:\B^n$ and rule $\times_{E\ell}$. By the IH, $\sigma
    t\in\com{\B^n}=\overline{\com\B\times\com{\B^{n-1}}}=\{u\mid
    u\lra^*u_1\times u_2\textrm{ with }u_1\in\com\B\textrm{
      and} u_2\in\com{\B^{n-1}}\}$. Hence, $\sigma(\tail\ t)=\tail\ \sigma
    t\lra^*\tail(u_1\times u_2)\rightarrow u_2\in\com{\B^{n-1}}$.

 \item Let $\Gamma\vdash \Uparrow_rt:S(\Psi\times \Phi)$, as a consequence of $\Gamma\vdash t:S(S\Psi\times \Phi)$ and rule $\Uparrow_r$.
      By the IH, we have that
      $\sigma t\in\com{S(S\Psi\times\Phi)}$. Therefore, $\sigma
      t\in
      \overline{S(\overline{\overline{(S\com\Psi\cup\{\z[\Psi]\})}\times\com\Phi})\!\cup\!\{\z[(S\Psi\times\Phi)]\}}$.
      Cases:
      \begin{itemize}
      \item[$\ast$] $\sigma t\lra^*\z[(S\Psi\times\Phi)]$, then
        $\sigma \Uparrow_r t=\Uparrow_r\sigma
        t\lra^*\Uparrow_r\z[(S\Psi\times\Phi)]\lra\z[(\Psi\times\Phi)]\in\com{S(\Psi\times\Phi)}$.
      \item[$\ast$] Otherwise, $\sigma
        t\in\overline{S(\overline{\overline{(S\com\Psi\cup\{\z[\Psi]\})}\times\com\Phi})}$,
        so $\sigma t\lra^*\sum_i\alpha_i (r_i\times u_i)$ with
        $u_i\in\com\Phi$ and
        $r_i\in\overline{S\com\Psi\cup\{\z[\Psi]\}}$.
        Cases:
        If
        $r_i\lra^*\z[\Psi]$, then $\Uparrow_r r_i\times
          u_i\lra^* \Uparrow_r\z[\Psi]\times u_i\lra
          \z[(\Psi\times\Phi)]\in\com{S(\Psi\times\Phi)}$.
          If $r_i\lra^*\sum_{j=1}^{n_i}\beta_{ij}r_{ij}$, with
          $r_{ij}\in\com\Psi$. Hence, $\Uparrow_r r_i\times
          u_i\lra^*\sum_{j=1}^{n_i}\beta_{ij}\Uparrow_r(r_{ij}\times
          u_i)\in\com{S(\Psi\times\Phi)}$.
        Hence, if all the $r_i$ reduce to $\z[\Psi]$,
        $\sigma\Uparrow_rt=\Uparrow_r\sigma t\lra^*\z[(\Psi\times\Phi)]$.
        Otherwise, let $I$ be the set of index of $r_i$ not reducing to
        $\z[\Psi]$, therefore,
        $\sigma\Uparrow_rt=\Uparrow_r\sigma
        t\lra^*\sum_i\alpha_i(r_i\times
        u_i)\rightarrow\sum_i\alpha_i\Uparrow_r(r_i\times
        u_i)\lra^*\sum_{i\in
          I}\alpha_i\sum_{j=1}^{n_i}\beta_{ij}\Uparrow_r(r_{ij}\times
        u_i)\lra^*\sum_{i\in
          I}\sum_{j=1}^{n_i}\alpha_i\beta_{ij}\Uparrow_r(r_{ij}\times u_i)\in\com{S(\Psi\times\Phi)}$.
      \end{itemize}
    
  \item Let $\Gamma\vdash \Uparrow_\ell t:S(\Psi\times \Phi)$ as a consequence
    of $\Gamma\vdash t:S(\Psi\times S\Phi)$ and rule $\Uparrow_\ell$. Analogous
    to previous case.

  \item Let $\Gamma\vdash\proba{p_1}t_1 \parallel\dots\parallel \proba{p_n}t_n
    :A$ as a consequence of $\Gamma\vdash t_i:A$ and rule $\parallel$. By the
    IH each $\sigma t_i\in\com A$, hence, by definition
    $\sigma(\proba{p_1}t_1\parallel\dots\parallel\proba{p_n}t_n)=\proba{p_1}\sigma
    t_1\parallel\dots\parallel\proba{p_n}\sigma t_n\in\com A$.
    \qedhere
  \end{itemize}
 \end{proof}

\end{document}